\documentclass[acmsmall, authorversion]{acmart}

\usepackage[default]{ezlib}

\usepackage{tikz}
\usepackage{xcolor}

\tikzset{
    semithick,
    x=1.05em,
    y=1.05em,
    font=\small
}

\newcommand{\drawJob}[4]{%
    \draw[draw=#1!70!black, fill=#1!30!white]
        #2 ++(-0.5, 0) rectangle ++(1, #3)
        ++(-0.5, 0) node[above=-0.4em] {\strut job #4};
}


\setcounter{secnumdepth}{2}
\apptocmd{\quote}{\itshape}{}{}

\makeatletter
\addtolength{\@fpsep}{0pt plus 1fil}
\makeatother


\newcommand{\fcfs}{{\textnormal{FCFS}}}
\newcommand{\nudge}{{\textnormal{Nudge}}}
\newcommand{\srpt}{{\textnormal{SRPT}}}
\newcommand{\nudgek}{{\textnormal{Nudge-K}}}
\newcommand{\nudgem}{{\textnormal{Nudge-M}}}
\newrobustcmd{\boost}[1]{%
    \ifbool{mmode}{%
        {\ifblank{#1}{\textnormal{Boost}}{#1\textnormal{-Boost}}}%
    }{%
        \ifblank{#1}{Boost}{$#1$-Boost}%
    }%
}
\newrobustcmd{\cheat}[1]{%
    \ifbool{mmode}{%
        {\ifblank{#1}{\textnormal{Cheat}}{#1\textnormal{-Cheat}}}%
    }{%
        \ifblank{#1}{Cheat}{$#1$-Cheat}%
    }%
}

\newcommand{\subfigwidth}{0.475\textwidth}

\makeatletter
\if@ACM@authorversion
\newcommand{\authorversion@skip}{0.4in}
\ifdefstring{\ACM@format}{sigconf}{
    \renewcommand{\authorversion@skip}{0.1in}
}{}
\let\old@ACM@footnote\footnotetextcopyrightpermission
\renewcommand{\footnotetextcopyrightpermission}[1]{\old@ACM@footnote{#1\vspace{\authorversion@skip}}}
\fi
\makeatother

\begin{document}

\setcopyright{acmlicensed}
\acmJournal{POMACS}
\acmYear{2024} \acmVolume{8} \acmNumber{2} \acmArticle{27} \acmMonth{6}\acmDOI{10.1145/3656011}

\begin{CCSXML}
<ccs2012>
<concept>
<concept_id>10002944.10011123.10011674</concept_id>
<concept_desc>General and reference~Performance</concept_desc>
<concept_significance>500</concept_significance>
</concept>
<concept>
<concept_id>10002950.10003648.10003688.10003689</concept_id>
<concept_desc>Mathematics of computing~Queueing theory</concept_desc>
<concept_significance>500</concept_significance>
</concept>
<concept>
<concept_id>10003033.10003079.10003080</concept_id>
<concept_desc>Networks~Network performance modeling</concept_desc>
<concept_significance>500</concept_significance>
</concept>
<concept>
<concept_id>10010147.10010341.10010342</concept_id>
<concept_desc>Computing methodologies~Model development and analysis</concept_desc>
<concept_significance>300</concept_significance>
</concept>
<concept>
<concept_id>10011007.10010940.10010941.10010949.10010957.10010688</concept_id>
<concept_desc>Software and its engineering~Scheduling</concept_desc>
<concept_significance>300</concept_significance>
</concept>
</ccs2012>
\end{CCSXML}

\ccsdesc[500]{General and reference~Performance}
\ccsdesc[500]{Mathematics of computing~Queueing theory}
\ccsdesc[500]{Networks~Network performance modeling}
\ccsdesc[300]{Computing methodologies~Model development and analysis}
\ccsdesc[300]{Software and its engineering~Scheduling}

\keywords{%
    scheduling;
    response time;
    sojourn time;
    tail latency;
    service level objective (SLO);
    M/G/1 queue;
    light-tailed distribution;
    FCFS;
    Boost scheduling}

\title{Strongly Tail-Optimal Scheduling in the Light-Tailed M/G/1}

\author{George Yu}
\affiliation{%
    \institution{Cornell University}
    \department{School of Operations Research and Information Engineering}
    \city{Ithaca}
    \state{NY}
    \country{USA}}
\author{Ziv Scully}
\affiliation{%
    \institution{Cornell University}
    \department{School of Operations Research and Information Engineering}
    \city{Ithaca}
    \state{NY}
    \country{USA}}
\begin{abstract}

We study the problem of scheduling jobs in a queueing system, specifically an M/G/1 with light-tailed job sizes, to asymptotically optimize the response time tail. This means scheduling to make $\mathbf{P}[T > t]$, the chance a job's response time exceeds~$t$, decay as quickly as possible in the $t \to \infty$ limit. For some time, the best known policy was First-Come First-Served (FCFS), which has an asymptotically exponential tail: $\mathbf{P}[T > t] \sim C e^{-\gamma t}$. FCFS achieves the optimal \emph{decay rate}~$\gamma$, but its \emph{tail constant}~$C$ is suboptimal. Only recently have policies that improve upon FCFS's tail constant been discovered. But it is unknown what the optimal tail constant is, let alone what policy might achieve it.

In this paper, we derive a closed-form expression for the optimal tail constant~$C$, and we introduce \emph{$\gamma$-Boost}, a new policy that achieves this optimal tail constant. Roughly speaking, $\gamma$-Boost operates similarly to FCFS, but it pretends that small jobs arrive earlier than their true arrival times. This significantly reduces the response time of small jobs without unduly delaying large jobs, improving upon FCFS's tail constant by up to 50\% with only moderate job size variability, with even larger improvements for higher variability. While these results are for systems with full job size information, we also introduce and analyze a version of $\gamma$-Boost that works in settings with partial job size information, showing it too achieves significant gains over FCFS. Finally, we show via simulation that $\gamma$-Boost has excellent practical performance.

\end{abstract}

\maketitle

\section{Introduction, background, and key ideas}
\label{sec:intro}
Service Level Objectives (SLOs) for practical queueing systems often relate to the \emph{tail} of the system's response time distribution~$T$. The tail is the function that maps an amount of time~$t$ to $\P{T > t}$, the probability that a job's response time~$T$ exceeds~$t$, where a job's \emph{response time} is the amount of time between the job's arrival and departure.


Motivated by the problem of meeting SLOs, we consider the problem of scheduling jobs to minimize the tail $\P{T > t}$ in the M/G/1 queue. We actually focus on \emph{asymptotically} minimizing the tail, optimizing the decay of $\P{T > t}$ in the $t \to \infty$ limit. This is an extensively studied problem in queueing theory \citep{boxma_tails_2007, wierman_tailoptimal_2012, nuyens_preventing_2008, nair_tailrobust_2010, zwart_sojourn_2000, grosof_nudge_2021, vanhoudt_stochastic_2022, scully_characterizing_2020, scully_when_2024, scully_new_2022, gavish_markovian_1977, nuyens_largedeviations_2006, borst_impact_2003, friedman_fairness_2003, friedman_protective_2003, stolyar_largest_2001, nunez-queija_queues_2002} for a number of reasons:
\* Optimizing the tail $\P{T > t}$ for any particular value of~$t$ is seldom the sole design objective. Instead, one generally hopes to achieve low $\P{T > t}$ for a range of values of~$t$.
\* Because practical SLOs relate to high-quantile response times, meeting those SLOs corresponds to optimizing $\P{T > t}$ for large values of~$t$.
\* Optimizing $\P{T > t}$ for fixed finite~$t$ appears to be theoretically intractable, but there has been promising recent progress on asymptotic improvements in the $t \to \infty$ limit \citep{grosof_nudge_2021, vanhoudt_stochastic_2022}.
\*/

In this paper, we study the M/G/1 with light-tailed job size distributions. We propose a new policy, called \emph{\boost{\gamma}}, and prove it has \emph{asymptotically optimal response time tail} in a sense made precise in \cref{sec:intro:background} below. This resolves a \emph{significant open problem in queueing theory} \citep{boxma_tails_2007, wierman_tailoptimal_2012}. Moreover, \boost{\gamma} has excellent practical performance, as illustrated in \cref{fig:intro:performance}.

The rest of this section gives background on the problem of asymptotically optimal tail scheduling, with discussion of prior work integrated throughout, and describes the main ideas behind our solution. See \cref{sec:intro:contributions} for a summary of our contributions and an outline of the rest of the paper.

\subsection{Background on weak and strong tail optimality}
\label{sec:intro:background}


Consider an M/G/1 with job size distribution~$S$. For now, we primarily focus on the \emph{full-information} setting where job sizes are known to the scheduler, but some of our results apply more broadly to \emph{partial-information} settings (\cref{sec:model:scheduler}).

Let $T_\pi$ denote the response time distribution under policy~$\pi$. Following \citet{boxma_tails_2007}, we say a policy~$\pi$ is \emph{weakly tail-optimal} if there exists a constant~$c \geq 1$ such that
\[
    \sup_{\pi'} \limsup_{t \to \infty} \frac{\P{T_\pi > t}}{\P{T_{\pi'} > t}} = c.
\]
If additionally $c = 1$, we say $\pi$ is \emph{strongly tail-optimal}.

Whether a scheduling policy is weakly tail-optimal depends critically on whether the job size distribution~$S$ is heavy-tailed or light-tailed. If $S$ is heavy-tailed, then several preemptive policies like Shortest Remaining Processing Time (SRPT) and Least Attained Service (LAS) are known to be weakly tail-optimal and conjectured to to be strongly tail-optimal \citep{boxma_tails_2007, wierman_tailoptimal_2012}. In fact, we observe in \cref{sec:heavy-tailed_strong_optimality} that a result of \citet{wierman_tailoptimal_2012} implies strong tail optimality of SRPT, LAS, and other policies for an important class of heavy-tailed distributions. 
The problem of achieving strong tail optimality is thus largely solved in the heavy-tailed case.

In this work, we focus on the case of light-tailed job size distributions~$S$, specifically so-called \emph{class~I} distributions \citep{abate_asymptotics_1997, abate_waitingtime_1994} (\cref{def:light_tailed_dist}), for which strong tail optimality is a significant open problem \citep{boxma_tails_2007, wierman_tailoptimal_2012}. For some time, the only common policy known to be weakly tail-optimal was First-Come First-Served (FCFS), which has asymptotically exponential response time tail. That is,
%
\[
    \P{T_\fcfs > t} \sim C_\fcfs \exp(-\gamma t),
\]
where $\gamma > 0$ is a constant called the \emph{decay rate}, and $C_\fcfs > 0$ is a constant we call FCFS's \emph{tail constant}. Both $\gamma$ and $C_\fcfs$ depend on $S$ and the system's arrival rate.

It is known that no policy can achieve asymptotic decay rate greater than~$\gamma$ \citep{stolyar_largest_2001, boxma_tails_2007}, so we can measure the performance of a weakly tail-optimal policy~$\pi$ by its tail constant
\[
    \label{eq:intro:pi_asymptotics}
    C_\pi = \lim_{t \to \infty} \exp(\gamma t) \, \P{T > t}.
\]
The question of finding a strongly tail-optimal policy thus amounts to minimizing~$C_\pi$ over all policies~$\pi$. Until recently, it was conjectured that FCFS may be strongly tail-optimal, but recent progress has improved upon FCFS's tail constant \citep{grosof_nudge_2021, vanhoudt_stochastic_2022} (\cref{sec:intro:nudge}). This prompts a question:
\begin{quote}
    What is the smallest possible tail constant~$C_\pi$, and what policy~$\pi$ achieves it?
\end{quote}

\subsection{Obstacle: prioritizing short jobs without delaying long jobs}
\label{sec:intro:challenge}

As explained in \cref{sec:intro:background}, optimizing tail asymptotics with light-tailed job sizes is an open problem, in contrast to the heavy-tailed case. Why is the light-tailed case so much more difficult? The main obstacle is that there is a tension between prioritizing short jobs and delaying long jobs. This tension is best illustrated by contrasting two policies, FCFS and SRPT.

Suppose a ``tagged'' job of random size~$S$ arrives to a steady-state system and observes \emph{work}~$W$, meaning the total remaining service time of jobs in the system is~$W$.
\* FCFS serves jobs in the order they arrive. This means the tagged job's response time is $T_\fcfs = W + S$. In particular, the job's response time is unaffected by future arrivals.
\* SRPT always preemptively serves the job of least remaining service time. This means the tagged job may not need to wait for all of the work~$W$ to be completed before entering service. But future arrivals of size less than~$S$ may be prioritized over the tagged job.
\*/
The reason SRPT is good in the heavy-tailed setting is that the amount of work from future arrivals that delays the tagged job, which we denote by~$R_\srpt(S)$, has a lighter tail than~$W$. But in the light-tailed setting, $R_\srpt(S)$ is heavier-tailed than~$W$, with a decay rate less than~$\gamma$. See \citet{nuyens_preventing_2008}, who prove these results for SRPT and a class of related policies, for details.

The takeaway of the above comparison is that for light-tailed job size distributions, strictly prioritizing short jobs delays long jobs too much for good tail performance. But prioritizing short jobs is essentially the only tool we have for improving response times. The question is thus: how should one \emph{partially} prioritize short jobs to improve tail performance?

A number of works have studied scheduling with some sort of partial priority, whether by having just a few priority buckets \citep{chen_scheduling_2021, harchol-balter_sizebased_2003, marin_sizebased_2020} or by dynamically changing priority over time \citep{fajardo_controlling_2015, fajardo_waiting_2017, stanford_waiting_2014}. While the tail asymptotics of most of these policies have not been formally studied, they seem unlikely to be weakly tail-optimal. This is because they still have the property that a sufficiently long tagged job might be delayed by a constant fraction of future arrivals. A result of \citet[Proposition~9.9]{scully_when_2024} suggests this should lead to worse decay rate, though their result does not directly apply to all of the other policies cited. \Citet[Theorem~5.5]{scully_when_2024} also show that no policy in the recently proposed class of ``SOAP'' policies \citep{scully_new_2022, scully_soap_2018, scully_soap_2018a} can improve upon FCFS's tail constant, because all SOAP policies other than FCFS have decay rate worse than~$\gamma$.\footnote{%
    \Citet{scully_when_2024} actually consider only a subset of SOAP policies, but one can generalize the relevant part of their argument to cover all SOAP policies.%
}

\subsection{Nudge: a promising but limited first step}
\label{sec:intro:nudge}

The first improvement upon FCFS's tail constant was through the \emph{Nudge} family of policies, introduced by \citet{grosof_nudge_2021} and expanded upon by \citet{vanhoudt_stochastic_2022} and \citet{charlet_tail_2024}. In its simplest variant, Nudge creates two classes of jobs, \emph{short} and \emph{long}, then runs FCFS with a small modification, illustrated in \cref{fig:intro:example:nudge}:
\* When a short job arrives, it is allowed to pass in front of up to~$K$ large jobs, where $K$ is a fixed constant.
\* Each large job can be passed by a limited number of short jobs. Different variants of Nudge differ in exactly how the limiting works. The two most important variants are the following:
\** \emph{Nudge-K} \citep{vanhoudt_stochastic_2022}: Each large job can be passed only once.
\** \emph{Nudge-M} \citep{charlet_tail_2024}: Small jobs only pass large jobs that are within the $K$ most recent arrivals.
\**/
When $K = 1$, Nudge-K and Nudge-M coincide, and are called simply ``Nudge'' \citep{grosof_nudge_2021}.
\*/

\begin{figure}
    \centering
    \begin{subfigure}[t]{0.3\linewidth}
        \centering
        \begin{tikzpicture}
    \providecommand{\aX}{0}
    \providecommand{\aY}{3.5}
    \providecommand{\sX}{3.25}
    \providecommand{\sY}{1.75}

    \draw (\aX, 0) node[below, align=center] {earlier\\arrival};
    \draw (\aY, 0) node[below, align=center] {later\\arrival};

    \drawJob{orange}{(\aX, 0)}{\sX}{X}
    \drawJob{teal}{(\aY, 0)}{\sY}{Y}

    \draw[thick, teal!60!black, densely dashed] (\aY - 0.5, \sY/2) edge[bend right=6em, ->] (\aX - \aY + 0.5, \sY/2);

    \node at (\aX, \sX/2) {$s_\mathrm{X}$};
    \node at (\aY, \sY/2) {$s_\mathrm{Y}$};

    \draw (2, -3.5) ++(0, -0.35) node[below=-0.4em] {\strut};
\end{tikzpicture}
        \Description{A cartoon of a large job X and a small job Y. Job X arrives earlier than job Y, but because Y is small, it gets to overtake X, as indicated by a dotted arrow.}
        \caption{Nudge always serves X after~Y, regardless of their arrival times.}
        \label{fig:intro:example:nudge}
    \end{subfigure}%
    \hfill
    \begin{subfigure}[t]{0.3\linewidth}
        \centering
        \begin{tikzpicture}
    \providecommand{\aX}{1}
    \providecommand{\aY}{3}
    \providecommand{\bX}{2.75}
    \providecommand{\bY}{6}
    \providecommand{\sX}{3.25}
    \providecommand{\sY}{1.75}
    \providecommand{\timeline}{4}

    \drawJob{orange}{(\aX, 0)}{\sX}{X}
    \drawJob{teal}{(\aY, 0)}{\sY}{Y}

    \draw[->, thick, draw=orange!70!black] (\aX, 0) -- ++(0, -1.5) -- node[midway, above=-0.3em] {\strut$b(s_\mathrm{X})$} ++(-\bX, 0);
    \draw[->, thick, draw=teal!70!black] (\aY, 0) -- ++(0, -3) -- node[midway, above=-0.3em] {\strut$b(s_\mathrm{Y})$} ++(-\bY, 0);

    \draw[<->] (-\timeline, -3.5) node[left] {\strut} -- ++(2*\timeline, 0) node[right] {\strut time};

    \draw (\aX, -3.5) -- ++(0, -0.35) node[below=-0.4em] {\strut$a_\mathrm{X}$};
    \draw (\aY, -3.5) -- ++(0, -0.35) node[below=-0.4em] {\strut$a_\mathrm{Y}$};


    \node at (\aX, \sX/2) {$s_\mathrm{X}$};
    \node at (\aY, \sY/2) {$s_\mathrm{Y}$};
\end{tikzpicture}
        \Description{A cartoon of a large job X and a small job Y on a timeline. Each job is drawn at its arrival time, and an arrow coming out of each job indicates the job's boost. Job Y gets a bigger boost than job X, because smaller jobs get bigger boosts. Job X arrives slightly before job Y, so job Y's boosted arrival time is earlier than job X's boosted arrival time.}
        \caption{Boost serves X after Y if their arrival times are close together.}
        \label{fig:intro:example:boost_pass}
    \end{subfigure}%
    \hfill
    \begin{subfigure}[t]{0.3\linewidth}
        \centering
        \input{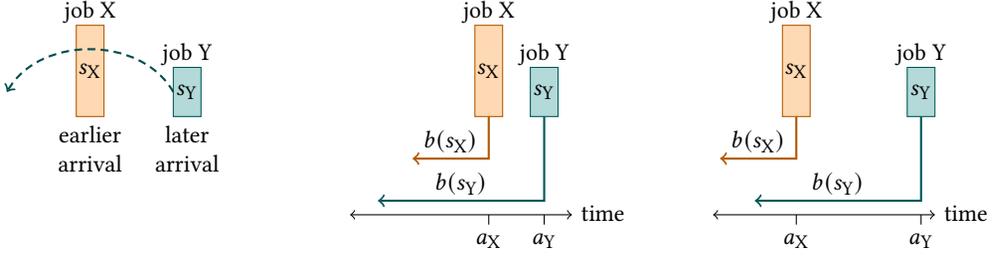}
        \Description{A cartoon of a large job X and a small job Y on a timeline. Each job is drawn at its arrival time, and an arrow coming out of each job indicates the job's boost. Job Y gets a bigger boost than job X, because smaller jobs get bigger boosts. Job X arrives long before job Y, so job Y's boosted arrival time is later than job X's boosted arrival time, despite Y's bigger boost.}
        \caption{Boost serves X before Y if their arrival times are far apart.}
        \label{fig:intro:example:boost_no_pass}
    \end{subfigure}%
    \caption{Comparison between how Nudge and Boost each handle a long job~X arriving before a short job~Y. Suppose that Y arrives before X enters service. Nudge decides the order to serve the jobs based only on the arrival order, as shown in~(a). In contrast, Boost uses not just the arrival order but also the respective \emph{arrival times}, as shown in (b) and~(c). Notation: job~$i$'s arrival time is~$a_i$, its size is~$s_i$, and its boost is~$b(s_i)$.}
    \label{fig:intro:example}
\end{figure}

Nudge is a family of policies rather than a single policy, because there are many ways to decide which jobs are short and which are long; one can vary the parameter~$K$; and one can choose between Nudge-K, Nudge-M, and other variants \citep{charlet_tail_2024}. One can think of these parameters as controlling the degree to which short jobs are prioritized over long jobs. For instance, larger values of~$K$ further prioritize short jobs, and for a fixed value of~$K$, Nudge-K is more conservative about letting short jobs pass long jobs than Nudge-M.

Recent progress on Nudge has yielded several improvements to the best known tail constant. \Citet{grosof_nudge_2021} introduce Nudge with $K = 1$ and show that with appropriate tuning, Nudge achieves $C_\nudge < C_\fcfs$, thus demonstrating that FCFS is not strongly tail-optimal. In fact, they show that with sufficiently conservative tuning, Nudge \emph{stochastically improves} upon FCFS, meaning $\P{T_\nudge > t} \leq \P{T_\fcfs > t}$ for all $t > 0$.

Building on this progress, \citet{vanhoudt_stochastic_2022} introduces Nudge-K for $K \geq 2$ and, for any given split between small and large jobs, characterizes the value of $K$ that minimizes~$C_\nudgek$. The optimal value is generally neither $K = 1$ nor $K = \infty$. This reflects the fact that while short jobs should get some priority, giving them too much priority hurts long jobs, and thereby the tail constant~$C_\nudgek$.

Concurrently with this work, \citet{charlet_tail_2024} introduce the Nudge-M variant and show several results about it. The most important of these is that for any given split between small and large jobs, Nudge-M with the optimal value of~$K$ achieves the \emph{minimum possible tail constant} out of any variant of Nudge. That is, Nudge-M is strongly tail-optimal among Nudge policies. \Citet{charlet_tail_2024} also characterize the value of~$K$ that leads to this minimal~$C_\nudgem$, showing that it coincides with the value that minimizes~$C_\nudgek$.

While Nudge is significant due to its improving upon FCFS, there are two reasons to believe that Nudge can also be improved upon. First, it seems likely that it would help to have finer-grained distinctions between job sizes, as opposed to grouping them into just two classes, and it may be beneficial to allow jobs to move many spots in the queue. As an extreme example, if a job were size~$0$, it would make sense to let it jump straight to the front of the queue. Second, while Nudge makes use of the order in which jobs arrive, it does not make use of the \emph{amounts of time between arrivals}. For instance, suppose a job~X arrives before a shorter job~Y. If the time between the arrivals is very small, as in \cref{fig:intro:example:boost_pass}, it may make sense to serve the shorter~Y before~X. But if there is a long interarrival time between X and~Y, as in \cref{fig:intro:example:boost_no_pass}, it may make sense to keep X in front of~Y, because Y will not have been waiting as long by the time the system starts serving~X.

\subsection{Our answer: Boost}

Motivated by the limitations of Nudge discussed above, we define \emph{\boost{}}, a new family of scheduling policies. In the full-information setting where job sizes are known to the scheduler, an instance of \boost{} is specified by a \emph{boost function} $b : \bbR_+ \to \bbR$, where $b(s)$ is called the \emph{boost} of a job of size~$s$. The rough idea is that \boost{} acts like FCFS, except it pretends that a job of size~$s$ arrives $b(s)$ time earlier than it actually does. Specifically, if a job of size~$s$ arrives at time~$a$, we define its \emph{boosted arrival time} to be
\[
    \textnormal{boosted arrival time} = \textnormal{arrival time} - \textnormal{boost} = a - b(s).
\]
\boost{} then follows one scheduling rule: \emph{prioritize jobs from least to greatest boosted arrival time}. See \cref{fig:intro:example:boost_pass, fig:intro:example:boost_no_pass} for an illustration. Notice that \boost{}, unlike Nudge, takes into account not just the arrival order but also the arrival times.

One can define preemptive and nonpreemptive versions of \boost{}, depending on whether the priority rule is applied at every moment in time or only when a job completes. The distinction turns out not to affect \boost{}'s tail asymptotics, so our results apply to both versions.

The boost function~$b$ determines how \boost{} balances the tension between prioritizing short jobs and prioritizing jobs that have been waiting a long time. For example, setting $b(s) = 0$ reduces the policy to FCFS, whereas setting $b(s) = r/s$ for a large constant~$r$ results in prioritizing jobs nearly entirely based on their size, similar to SRPT. We therefore ask: what boost function is best?

\begin{figure}
    \begin{subfigure}{\subfigwidth}
        \includegraphics[width=\linewidth]{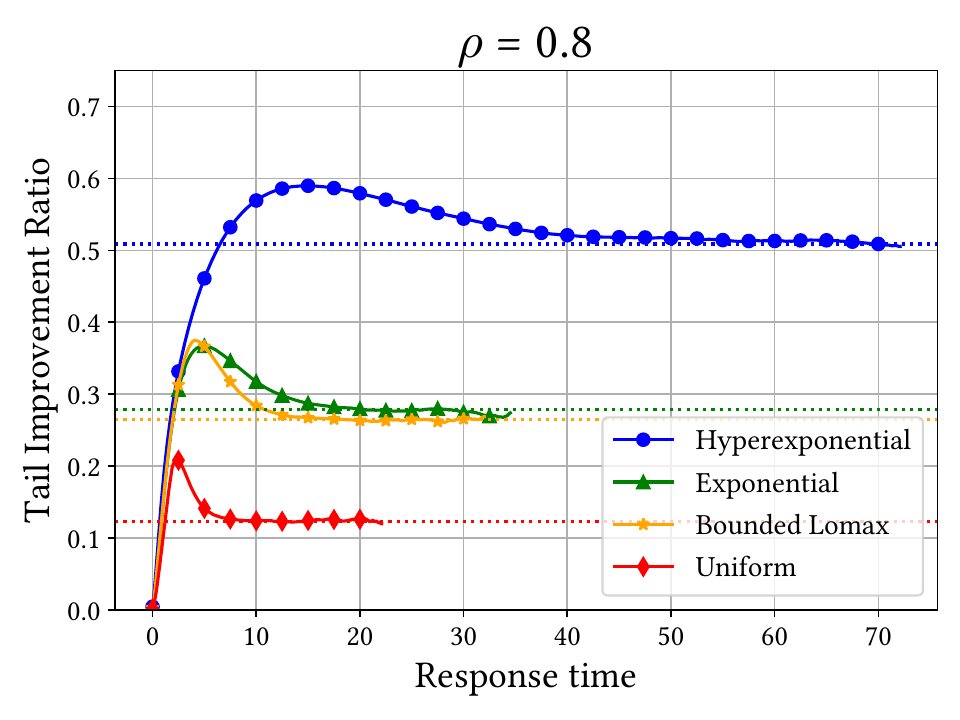}
        \caption{\boost{}'s tail improvement ratio for several job size distributions.}
        \label{fig:intro:performance:distributions}
    \end{subfigure}%
    \hfill
    \begin{subfigure}{\subfigwidth}
        \includegraphics[width=\linewidth]{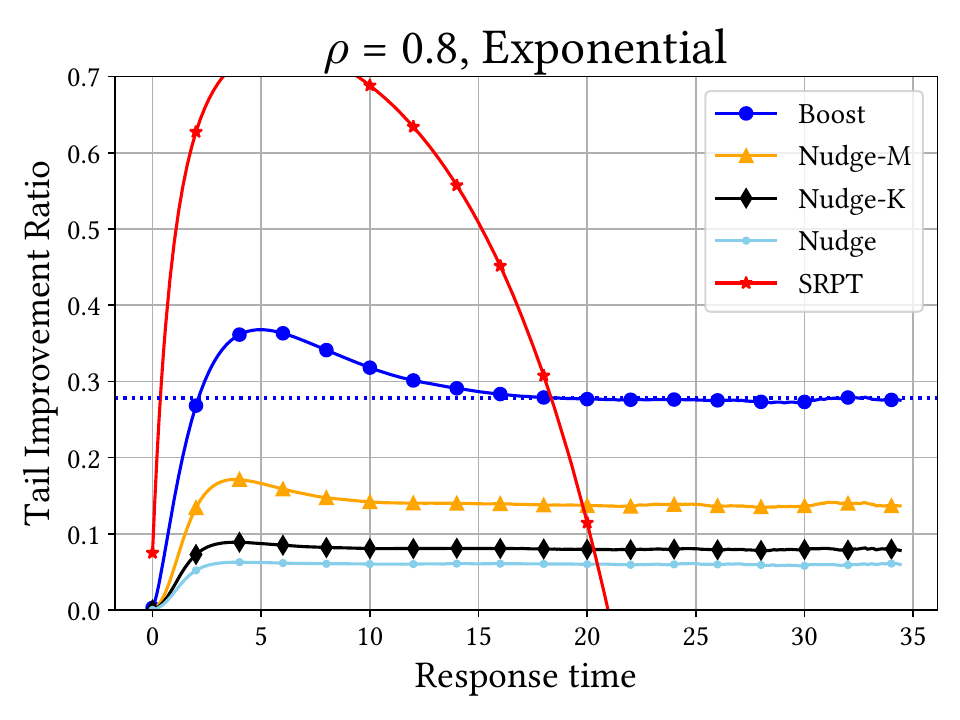}
        \caption{Several policies' tail improvement ratios for an exponential job size distribution.}
        \label{fig:intro:performance:policies}
    \end{subfigure}%
    \caption{Empirical performance (higher is better) of \boost{}, specifically the strongly tail-optimal \boost{\gamma}, (a)~on several job size distributions, and (b)~compared to two other policies, Nudge (and the~K and~M variants with optimal parameter $K$) and SRPT. The plots show \emph{tail improvement ratio} $1 - \P{T_\pi > t}/\P{T_\fcfs > t}$ as a function of~$t$. Dotted lines indicate the asymptotic tail improvement ratio $1 - C_\pi/C_\fcfs$. The load is $\rho = 0.8$, and the mean job size is $\E{S} = 1$. See \cref{sec:simulation} for additional details on the job size distributions and other simulation parameters.}
    \Description[Comparison of Boost on several size distributions, and compared to other policies.]{Plot (a) shows the performance improvement of \boost{\gamma} compared to FCFS for Hyperexponential, Exponential, Bounded Lomax, and Uniform distributions. Improvement amounts range from over 50\% (Hyperexponential) to about 12\% (Uniform). Plot (b) compares the performance improvements of \boost{\gamma} to Nudge (including Nudge-K and Nudge-M) variants. \boost{\gamma} performs better than all Nudge variants. The x-axis for both plots is the response time. The y-axis for both plots is the tail improvement ratio $1 - \P{T_\pi > t}/\P{T_\fcfs > t}$.}
    \label{fig:intro:performance}
\end{figure}

We prove two main theoretical results about \boost{}. First, we find an \emph{explicit formula} for its tail constant~$C_\boost{}$ in terms of the boost function~$b$ (\cref{thm:general_transform}). Second, we study a particular version of \boost{}, which we call \emph{\boost{\gamma}}, where the boost function is
\[
    \label{eq:intro:optimal_boost}
    b_\gamma(s) = \frac{1}{\gamma} \log \frac{1}{1 - \exp(-\gamma s)}.
\]
We show that \boost{\gamma} is \emph{strongly tail-optimal}, meaning $C_\boost{\gamma} \leq C_\pi$ for every other scheduling policy~$\pi$ (\cref{thm:strong_tail_optimality}). This solves the open problem of finding a strongly tail-optimal policy, as well as the problem of characterizing the best possible tail constant $\inf_\pi C_\pi$.

Of course, strong tail optimality is a theoretical property, and one would hope that pursuing it as an objective yields a policy with good practical performance. We confirm via simulation that this is indeed the case for \boost{\gamma}. Observe in \cref{fig:intro:performance} that \boost{\gamma}'s improvement over FCFS is often even better than one would predict from the asymptotic tail constants.

Above, we have focused on the case of full job size information, but \boost{} and \boost{\gamma} can also be defined for systems with partial job size information. The specific partial-information model we consider has multiple types of jobs, each with a distinct \emph{label}, and the scheduler knows each job's label, but not its size. In this partial-information setting, a job's boost is a function of its label rather than its size. Our analysis of \boost{}'s tail constant also applies to the partial-information setting (\cref{thm:general_transform}). We also show that partial-information \boost{\gamma} achieves better tail constant than any other \boost{} policy (\cref{sec:partial-information_boost_optimality}) and the previous state-of-the-art, Nudge-M (\cref{sec:nudge_vs_boost}).

\subsection{Key idea: relate strong tail optimality to an easier scheduling problem}

Where does the boost function in \cref{eq:intro:optimal_boost} come from, and how does one show that the resulting \boost{\gamma} policy is strongly-tail optimal? Our key idea is to relate the problem of minimizing the tail constant~$C_\pi$ to a more traditional scheduling problem involving a type of weighted cost.

We begin by considering the following alternative characterization of $C_\pi$, which follows from final value theorem \citep[Theorem~4.3]{grosof_nudge_2021}:
\[
    C_\pi = \lim_{\theta \to \gamma} \frac{\gamma - \theta}{\gamma} \E{\exp(\theta T_\pi)}.
\]
There is thus a vague sense in which minimizing $C_\pi$ is equivalent to minimizing $\E{\exp(\gamma T_\pi)}$. This is only an informal statement because, as one can deduce from \cref{eq:intro:pi_asymptotics}, we have $\E{\exp(\gamma T_\pi)} = \infty$ for all policies~$\pi$, even those that are weakly tail-optimal.

While minimizing the always-infinite quantity $\E{\exp(\gamma T_\pi)}$ is not a well-posed problem in the M/G/1, it is analogous to a well-posed problem in \emph{deterministic single-machine scheduling} \citep{pinedo_scheduling_2016, lenstra_elements_2020}. Consider an arbitrary finite batch of jobs $\calI = \{(a_1, s_1), \dots, (a_n, s_n)\}$. Here $a_i$ is the arrival time of job~$i$, and $s_i$ is its size. Additionally, let $d_{\pi, i}$ be the departure time of job~$i$ under policy~$\pi$, and let the \emph{$\theta$-cost} of policy~$\pi$ be
    $K_\pi(\theta, \calI) = \sum_{i = 1}^n \exp(\theta (d_{\pi, i} - a_i))$.
Minimizing $\E{\exp(\gamma T_\pi)}$ is analogous to minimizing $\gamma$-cost $K_\pi(\gamma, \calI)$ in the deterministic setting.

For $\theta < 0$, minimizing $\theta$-cost is actually a variation of a classic single-machine scheduling problem: minimizing total weighted discounted completion time \citep{pinedo_scheduling_2016, lenstra_elements_2020}, where job~$i$'s weight is $\exp(-\theta a_i)$. This problem is hard, but only because of arrival times. In the \emph{batch relaxation}, in which we allow job~$i$ to be served even before time~$a_i$, the optimal policy is an index policy called \emph{Weighted Discounted Shortest Processing Time} (WDSPT) \citep[Theorem~3.1.6]{pinedo_scheduling_2016}. To clarify, the arrival times~$a_i$ still matter in the batch relaxation, because they determine the weights $\exp(-\theta a_i)$.

Because $\gamma > 0$, one can view minimizing $\gamma$-cost as an instance of minimizing total weighted discounted completion time, but with a \emph{negative discount rate}. To the best of our knowledge, this variant of the problem has not been considered in the literature. Nevertheless, essentially the same proof as in the standard positive-discount case shows that a version of WDSPT is optimal in the negative-discount case.\footnote{%
    See, for instance, the interchange argument in \citet[Theorem~3.1.6]{pinedo_scheduling_2016}. We believe this result may be folklore, but we sketch a proof in \cref{sec:batch} for completeness.}
The \boost{\gamma} policy arises from finding a function~$b_\gamma$ such that WDSPT is equivalent to serving jobs in order of increasing boosted arrival time $a_i - b_\gamma(s_i)$.

Above, we have focused on the full-information case, but nearly the same reasoning works in the partial-information case. The difference is that we base the optimal boost function on Weighted Discounted Shortest \emph{Expected} Processing Time (WDSEPT) \citep[Theorem~10.1.3]{pinedo_scheduling_2016} instead of WDSPT.

\subsection{Technical challenge: translating from the batch relaxation to the M/G/1}

The fact that \boost{\gamma} minimizes $\gamma$-cost in the batch relaxation is a promising sign that it is strongly tail-optimal, meaning $C_\boost{\gamma} = \inf_\pi C_\pi$, in the M/G/1. There are two significant obstacles between this intuition and a proof of \boost{\gamma}'s strong tail optimality.

The first obstacle is that the batch relaxation allows jobs to be served at any time, whereas in the M/G/1, jobs cannot be served before they arrive. We therefore need to show that adjusting \boost{\gamma}'s schedule from the batch relaxation to not serve jobs before they arrive does not significantly degrade its performance. If we consider an arbitrarily long sequence of arrivals, this might not be true, so the first step is to figure out how to split up the M/G/1's infinite sequence of arrivals into finite batches. It turns out that using \emph{busy periods} as batches works well. The main technical challenge then becomes showing that ``honest'' \boost{\gamma}, which only serves jobs after they arrive, is nearly as good as ``cheating'' \boost{\gamma}, which is allowed to serve any job in the current busy period, even if it has not arrived yet.

The second obstacle is that minimizing steady-state mean $\gamma$-cost $\E{\exp(\gamma T_\pi)}$ is not a well-posed problem in the M/G/1, because the expectation is infinite for all policies. Instead, we must make do with the fact that for any weakly tail-optimal policy~$\pi$, mean $\theta$-cost $\E{\exp(\theta T_\pi)}$ is finite for all $\theta < \gamma$. We therefore work with $\theta \to \gamma$ limits of $\theta$-cost throughout the paper, as opposed to working directly with $\gamma$-cost. The main technical challenge is to show that \boost{\gamma} is near-optimal for minimizing not just mean $\gamma$-cost but also mean $\theta$-cost, provided $\theta$ is close enough to~$\gamma$.

Above, we have focused on the full-information case, and for good reason: we have not been able to generalize part of this argument to the partial-information case. The issue has to do with a subtle difference between the traditional stochastic batch setting \citep[Section~10.1]{pinedo_scheduling_2016}, which assumes independent job sizes, and the instances that arise from busy periods, which can have subtle dependencies between jobs' sizes (\cref{sec:reduction}). Nevertheless, we show \boost{\gamma} outperforms all other versions of \boost{} (\cref{sec:partial-information_boost_optimality}) and Nudge-M (\cref{sec:nudge_vs_boost}) in the partial-information setting.

\subsection{Contributions}
\label{sec:intro:contributions}

In this work, we present the \emph{first strongly tail-optimal scheduling policy}, namely \boost{\gamma}, for the M/G/1 with light-tailed job size distributions. This solves a significant open problem in queueing and scheduling theory. We also study \boost{} more generally in both theory and simulation, making the following specific contributions:
\* (\cref{sec:model}) We propose \emph{\boost{}}, a new family of scheduling policies that balance the tradeoff between prioritizing short jobs and prioritizing jobs that have been waiting a long time.
\* (\cref{sec:analysis}) We theoretically analyze \boost{}, giving an explicit formula for its tail constant $C_\boost{}$ in terms of the boost function used (\cref{thm:general_transform}).
\* (\cref{sec:batch}) We draw a new connection between minimizing the tail constant in the M/G/1 and a batch scheduling problem with \emph{negative discounting}. We solve the batch scheduling problem using \emph{\boost{\gamma}}, a specific instance of \boost{}.
\* (\cref{sec:optimality}) In the full-information setting, we prove \boost{\gamma} is \emph{strongly tail-optimal} in the M/G/1 with light-tailed job size distribution (\cref{thm:strong_tail_optimality}).
\* (\cref{sec:simulation}) We show in simulation that \boost{\gamma} has excellent practical performance, improving upon FCFS's tail performance by more than 50\% in some cases. We observe that \boost{\gamma}'s performance is robust to using the wrong value of~$\gamma$ or noisily estimated job sizes.
\*/
We also make an observation about strong tail optimality in the heavy-tailed case in \cref{sec:heavy-tailed_strong_optimality}, though it follows nearly immediately from known results.

\section{System model and Boost policies}
\label{sec:model}


We consider an M/G/1 queue with arrival rate $\lambda$, job size distribution~$S$, and load $\rho = \lambda \E{S}$. We make the standard assumption that $\rho < 1$, ensuring stability, and we assume that $S > 0$ almost surely to avoid trivial jobs of size~$0$. We assume that $S$ is light-tailed, considering the following specific class of light-tailed distributions initially identified by \citet{abate_waitingtime_1994}.

\begin{definition}\label{def:light_tailed_dist}
    A distribution $S$ is \emph{class~I} if its moment generating function's leftmost singularity
    \[
        \theta^* = \sup\curlgp{\theta \in \bbR \mid \E{\exp(\theta S) < \infty}},
    \]
    which may be $\infty$, satisfies $\theta^* > 0$ and $\lim_{\theta \to \theta^*} \E{\exp(\theta S)} = \infty$. In informal discussion, ``light-tailed'' is understood to refer to class~I unless otherwise stated.
\end{definition}

The main metric we are concerned with is \emph{response time}, the amount of time between a job's arrival and departure. We denote the response time distribution under scheduling policy~$\pi$ by~$T_\pi$. Thanks to the ``PASTA'' property of Poisson arrivals \citep{wolff_poisson_1982}, we can interpret $T_\pi$ as the response time of a random ``tagged'' job arriving to a steady-state system. We discuss the details of the scheduling policies~$\pi$ we consider in \cref{sec:model:scheduler}.

The quantity that has the largest impact on the system's response time is the \emph{work}, the total remaining processing time of jobs currently in the system. We denote the steady-state amount of work in the M/G/1 by~$W$. This amount is the same under all non-idling (aka work-conserving) scheduling policies.

Our main objective is to find the strongly tail-optimal scheduling policy, defined in \cref{sec:intro:background} and recalled below.

\begin{definition}
    \label{def:tail-optimal}
    A scheduling policy~$\pi$ is \emph{weakly tail-optimal} if there exists finite $c \geq 1$ such that
    \[
        \sup_{\pi'} \limsup_{t \to \infty} \frac{\P{T_\pi > t}}{\P{T_{\pi'} > t}} = c,
    \]
    If additionally $c = 1$, we say $\pi$ is \emph{strongly tail-optimal}.
\end{definition}

The supremum in \cref{def:tail-optimal} ranges over all preemptive scheduling policies $\pi'$ that have access to full information about the sizes and arrival times of all arrivals. In particular, $\pi'$ could in principle use information about \emph{future} arrivals. However, none of the policies we consider use this information (aside from the ``cheating'' policy introduced in \cref{sec:model:cheating}), and we achieve strong tail optimality without it.

\subsection{Asymptotic tails}
\label{sec:model:tails}

The key property of class~I distributions is that they ensure that the work~$W$ has asymptotically exponential tail. Specifically, there exist constants $\gamma > 0$ and $C_W > 0$ such that \citep[equation~(2)]{grosof_nudge_2021}\footnote{%
    Throughout the paper, $\theta \to \gamma$ limits are understood as being limits from below, seeing as $\E{\exp(\theta W)} = \infty$ for all $\theta \geq \gamma$.%
}
\[
    \label{eq:model:tail_constant_work}
    C_W = \lim_{t \to \infty} \exp(\gamma t) \P{W > t} = \lim_{\theta \to \gamma} \frac{\gamma - \theta}{\gamma} \E{\exp(\theta W)},
\]
with the equivalence of the two limits being due to final value theorem \citep[Theorem~4.3]{grosof_nudge_2021}. We call $\gamma$ the \emph{decay rate} and $C_W$ the \emph{tail constant} of the work distribution~$W$. It is known that $\gamma$ is the least positive real solution to
\[
    \label{eq:model:gamma}
    \gamma = \lambda(\E{\exp(\gamma S)} - 1).
\]
When $S$ is class~I, $\gamma$ is a simple pole of $W$'s moment generating function $\theta \mapsto \E{\exp(\theta W)}$ \citep{abate_waitingtime_1994, abate_asymptotics_1997}, regardless of the arrival rate~$\lambda$. Our results likely generalize to other combinations of $\lambda$ and~$S$ for which this is the case.

We define the \emph{tail constant of scheduling policy~$\pi$}, denoted $C_\pi$, in the same way as the tail constant of the work distribution:\footnote{%
    The limits below may not exist, so strictly speaking, we should define lower and upper constants using $\liminf_{t \to \infty}$ and $\limsup_{t \to \infty}$ in place of $\lim_{t \to \infty}$. But the limits exist for all policies we consider, so we omit this additional complexity.%
}
\[
    \label{eq:model:tail_constant}
    C_\pi = \lim_{t \to \infty} \exp(\gamma t) \P{T_\pi > t} = \lim_{\theta \to \gamma} \frac{\gamma - \theta}{\gamma} \E{\exp(\theta T_\pi)}.
\]
As an example, FCFS's tail constant is easily shown to be
\[
    C_\fcfs = C_W \E{\exp(\gamma S)},
\]
where finiteness of $\E{\exp(\gamma S)}$ follows from~\cref{eq:model:gamma}.

\subsection{Scheduling model and what information the scheduler has}
\label{sec:model:scheduler}

We consider both \emph{nonpreemptive} scheduling, where once a job begins service, it will complete without interruption, and \emph{preemptive} scheduling, where jobs may be paused in the middle of service. In the latter case, we assume a standard preempt-resume model in which jobs may be paused and resumed without delay, overhead, or loss of progress.

We wish to study both the \emph{full-information} setting, in which the scheduler learns each job's exact size (aka service time) when it arrives; as well as \emph{partial-information} settings, in which the scheduler has some limited but incomplete information about job sizes.
For instance, perhaps there are two types of arrivals, each with its own size distribution, but we do not know the size of any particular job.\footnote{%
    If there is no information at all to distinguish different jobs from one another, then, at least among nonpreemptive policies, there is no way to improve upon FCFS, due to $T_\fcfs$ being minimal in the convex order \citep{shanthikumar_convex_1987}. Investigating whether preemptive policies could improve upon FCFS in this setting is an interesting future direction.%
}

To capture a wide range of information settings, we use the flexible \emph{label-size pair} model from recent work on M/G/1 scheduling \citep{scully_new_2022, scully_soap_2018}. In this model, each job has an i.i.d. pair $(L, S)$ of a \emph{label}~$L$ and size~$S$. The space of possible labels, denoted~$\bbL$, can be arbitrary, and there may be an arbitrary joint distribution between labels and sizes. For example:
\* To model known job sizes, let $\bbL = \bbR_+$ and $L = S$.
\** We call this case the \emph{full-information} setting.
\* To model a scenario with two types of jobs A and~B, where job types are known but job sizes are unknown, let $\bbL = \{\mathrm{A}, \mathrm{B}\}$, and define the joint distribution such that, for instance, $(S \given L = \mathrm{A})$ is the size distribution of type~A jobs.
\** We call any case where $L \neq S$ with positive probability, of which the above is one example, the \emph{partial-information} setting.
\*/
Of course, one can imagine more complicated label-size pair distributions. As a final example, perhaps some jobs are labeled with their exact size, while others are labeled only type~A or type~B. This can be modeled using $\bbL = \bbR_+ \cup \{\mathrm{A}, \mathrm{B}\}$.

We assume that the scheduler has access to each job's arrival time and label, but that it has no information about each job's size beyond what can be deduced from its label. That is, if a job is labeled~$l$, the scheduler knows its size is distributed as $(S \given L = l)$, but it does not learn the realization until the job is complete.

\subsection{Defining the Boost family of policies}

We introduce a new scheduling policy called \emph{\boost{}}. Strictly speaking, \boost{} is a family of scheduling policies, where an instance of the family is determined by a \emph{boost function} $b : \bbL \to \bbR_+$.\footnote{%
    One can in principle allow negative boosts, but we focus our analysis on the case where boosts are always nonnegative.%
}
The boost function maps each label $l \in \bbL$ to a quantity $b(l)$ called the \emph{boost} of a job with label~$l$.

\boost{} operates as follows. Suppose a job with label~$l$ has arrival time~$a$. We define the job's \emph{boosted arrival time} to be $a - b(l)$. \boost{} uses the same basic rule with any boost function: \emph{prioritize jobs in order from least to greatest boosted arrival time}.\footnote{%
    To clarify, at any moment in time, \boost{} is only aware of jobs whose arrival time~$a$ is in the past. There may be jobs that will arrive in the future with boosted arrival time $a - b(l)$ in the past. But \boost{} is not aware of these jobs yet and will not serve them prior to their arrival. \boost{} thus does not require knowledge of future arrival times to implement.}
As a trivial example, choosing $b(l) = 0$ reduces \boost{} to FCFS.

One can define preemptive or nonpreemptive versions of \boost{}. The preemptive version makes scheduling decisions continuously, always serving the job of least boosted arrival time. The nonpreemptive version, after it serves the first job in each busy period, makes scheduling decisions whenever a job completes, each time choosing the job of least boosted arrival time. One can also define intermediate versions where a job may be preempted by some, but not necessarily all, arrivals with lower boosted arrival time.

All of our theoretical results hold for the preemptive, nonpreemptive, and intermediate versions of \boost{} (\cref{rmk:preemption_irrelevant}). As such, we leave the exact preemption rule unspecified throughout our theoretical results. But for concreteness, the reader may safely imagine that nonpreemptive \boost{} is used throughout, and we use nonpreemptive \boost{} in our simulations (\cref{sec:simulation}).

There is one family of boost functions that is especially important, as they result in strong tail optimality.

\begin{definition}
    \label{def:boost_theta}
    For any $\theta > 0$, the \emph{\boost{\theta}} policy for label-size pair distribution $(L, S)$ is the version of \boost{} with the following boost function, which we call the \emph{$\theta$-optimal boost function}:
    \[
        b_\theta(l) = \frac{1}{\theta} \log \frac{\E{\exp(\theta S) \given L = l}}{\E{\exp(\theta S) \given L = l} - 1}.
    \]
    While the definitions of \boost{\theta} and $b_\theta$ depend on the label-size pair distribution $(L, S)$, we leave this implicit in our notation.
\end{definition}

The \boost{\gamma} policy alluded to in \cref{sec:intro} is simply \boost{\theta} with $\theta = \gamma$. In the full-information case where $L = S$, we have $\E{\exp(\gamma S) \given L = s} = \exp(\gamma s)$, and so $b_\gamma$ reduces to the formula in \cref{eq:intro:optimal_boost}.

We use the name ``\boost{}'' when referring to a version with generic boost function~$b$, and we use the name ``\boost{\theta}'' when referring to a version using the $\theta$-optimal boost function~$b_\theta$, with $\theta = \gamma$ being the most important case.

If one uses an overly aggressive boost function, such as boosting small jobs too much, then \boost{} may not be weakly tail-optimal, let alone strongly tail-optimal. Our results (\cref{thm:general_transform}) show that as long as
\[
    \label{eq:model:boost_okay}
    \E{b(L) (\exp(\gamma S) - 1)} < \infty,
\]
then \boost{} is indeed weakly tail-optimal. There are two important special cases where \cref{eq:model:boost_okay} holds.
\* It always holds if one uses the $\gamma$-optimal boost function~$b_\gamma$ (\cref{lem:optimal_work_crossing_finite}).
\* In the full-information case where $L = S$, one can show using finiteness of $\E{\exp(\gamma S)}$ that \cref{eq:model:boost_okay} holds if $b(s) \leq O(1/s)$ in the $s \to 0$ limit and $b(s) \leq O(1)$ in the $s \to \infty$ limit. The intuition behind why we need such a condition is that if small jobs have too large of a boost, then each large job could be overtaken by so many small jobs that we lose weak tail optimality.
\*/

To reduce clutter, we let $B = b(L)$ be the boost of a random job, meaning we have a joint distribution of boost-size pairs $(B, S)$. We write $B_\theta$ when using the $\theta$-optimal boost function~$b_\theta$.

\subsection{Lower bounding tool: ``cheating'' version of Boost}
\label{sec:model:cheating}

In order to prove that \boost{\gamma} is strongly tail-optimal, we need a lower bound on the possible tail constant~$C_\pi$ (\cref{sec:model:tails}) achievable by any policy~$\pi$. Our main tool for doing so is to define a ``cheating'' version of \boost{}, which we call \emph{\cheat{}}.

Like \boost{}, \cheat{} is defined by a boost function~$b$, it assigns each job a boosted arrival time in the same way, and it also prioritizes jobs in order from least to greatest boosted arrival time. The difference is that we allow \cheat{} to \emph{serve arrivals from the future}. Specifically, we allow \cheat{} to serve any job that will arrive in the current busy period, where a \emph{busy period} is a maximal interval of time during which the server is busy \citep{harchol-balter_performance_2013}.\footnote{%
    This definition is assuming a non-idling scheduling policy, where all such policies lead to the same notion of busy periods.%
}

We can view \cheat{} as being essentially the same policy as \boost{} for a modified ``cheating'' M/G/1, which differs from the standard M/G/1. To describe the cheating M/G/1, we distinguish between a job's \emph{arrival time}, which is the time it arrives in the standard system, and a job's \emph{release time}, which is the earliest moment in time at which a job is allowed to be served by the scheduler.
\* In the standard M/G/1, a job's release time is its arrival time.
\* In the cheating M/G/1, a job's release time is the start of the busy period containing its arrival time.
\*/
In the cheating M/G/1, we still define a job's response time to be its departure time minus its arrival time. The difference is that a job's departure time may now be less than its arrival time plus its size, so a job's response time may be less than its size, or even negative.

To clarify, for a given arrival sequence, the standard and cheating M/G/1 systems have the same busy periods. That is, one can imagine first deciding what the busy periods are using the standard M/G/1, and then ``retroactively'' moving release times to construct the corresponding cheating M/G/1. Note that this implies the cheating M/G/1 is ergodic, because it inherits the renewal cycles of the standard M/G/1.

With the above distinction in mind, strictly speaking, \cheat{} is best thought of as the same policy as \boost{} but for a modified system, namely the cheating M/G/1, as opposed to a different scheduling policy for the standard system. But for the purposes of notation, we treat it like a different scheduling policy. For instance, we denote the response time distribution of \cheat{} by~$T_\cheat{}$. As a reminder, due to how the cheating system works, we can have $T_\cheat{} < 0$ with positive probability.

We define \emph{\cheat{\theta}} analogously to \boost{\theta} (\cref{def:boost_theta}), namely as the version of \cheat{} using the $\theta$-optimal boost function~$b_\theta$. The significance of \cheat{\theta} is that, as we show in \cref{thm:optimal_batch_policy}, $\E{\exp(\theta T_\cheat{\theta})} \leq \E{\exp(\theta T_\pi)}$ for any policy~$\pi$ for the standard M/G/1.

\section{Analysis of Boost's Tail Constant}
\label{sec:analysis}

In this section, we analyze the tail constants of \boost{} and \cheat{} in terms of the boost function~$b$ and the system model parameters, namely $\lambda$ and $(L, S)$.

Our main result is the following characterization of $C_\boost{}$ and $C_\cheat{}$. Interestingly, we find that cheating has no effect on the tail constant. See \cref{pf:general_transform} for the proof.

\begin{theorem}\label{thm:general_transform}
    Consider an M/G/1 with class~I job size distribution, and consider the \boost{} policy with a fixed boost function~$b$. If \cref{eq:model:boost_okay} holds, then \boost{} and \cheat{} both have tail constant
    \[
        C_\boost{} = C_\cheat{} = C_W \E[\big]{\exp\gp{\gamma \gp{S - b(L)}}} \exp\gp[\big]{\lambda \E{b(L) \, (\exp(\gamma S) - 1)}}.
    \]
    In particular, $C_\boost{} < \infty$, so \boost{} is weakly tail-optimal.
\end{theorem}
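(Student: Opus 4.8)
The plan is to work throughout with the limit $C_\pi=\lim_{\theta\to\gamma}\frac{\gamma-\theta}{\gamma}\E{\exp(\theta T_\pi)}$ from \cref{eq:model:tail_constant}, rather than the always-infinite $\E{\exp(\gamma T_\pi)}$. I would establish the formula first for \cheat{}, whose within-busy-period dynamics are the cleaner of the two (a \cheat{} busy period processes jobs in \emph{global} order of boosted arrival time, since unarrived jobs of the busy period are available), and then deduce the \boost{} case.

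\textbf{Tagged-job decomposition for \cheat{}.} Tag a job $J$; by PASTA and the renewal structure of busy periods, take $J$ to arrive at time $0$, and condition on its label $\ell$, so its size is distributed as $(S\mid L=\ell)$ and its boost is $b=b(\ell)$. Because \cheat{} is non-idling and serves a busy period in boosted-arrival-time order, between the start of $J$'s busy period and $J$'s departure the server does exactly the work of $J$ together with every job of that busy period whose boosted arrival time is below $-b$. Rewriting this in terms of the stationary workload $W$ at time $0^-$ gives, in distribution,
\[
    T_\cheat{} = W + S + V^{+}(b) - V^{-}(b),
\]
where $V^{+}(b)$ is the total size of jobs arriving \emph{after} $J$ that nonetheless cross in front of it (their boost exceeds their head start over $J$), and $V^{-}(b)$ is the total size of jobs present at time $0$ that $J$ crosses (boosted arrival time at least $-b$). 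The terms $V^{\pm}(b)$ are functionals of arrivals in time intervals disjoint from, respectively later and earlier than, those building up $W$, and by the exponential formula (Campbell's theorem) their moment generating functions are dominated by $\exp(\lambda\,\E{b(L)(\exp(\gamma S)-1)})$, which is finite at $\theta=\gamma$ \emph{precisely} under hypothesis \cref{eq:model:boost_okay} (cf.\ \cref{lem:optimal_work_crossing_finite}). Hence $W$ is the only summand with a singularity at $\gamma$, so $\theta\mapsto\E{\exp(\theta T_\cheat{})}$ inherits the simple pole of $W$'s transform.

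\textbf{Extracting the constant.} By the final value theorem, $\lim_{\theta\to\gamma}\frac{\gamma-\theta}{\gamma}\E{\exp(\theta T_\cheat{})\mid L=\ell}=C_W\,\E{\exp(\gamma S)\mid L=\ell}\,\lim_{w\to\infty}\E{\exp(\gamma(V^{+}(b)-V^{-}(b)))\mid W=w}$, using that the tagged size is independent of everything else. The crux is that conditioning on $\{W=w\}$ and letting $w\to\infty$ replaces the local arrival process around time $0$ by its exponential tilt — Poisson rate $\lambda\E{\exp(\gamma S)}$ with label--size pairs tilted by $\exp(\gamma S)$ — under which the busy period does not end locally, so $V^{+}(b)$ sheds its busy-period truncation; Campbell's theorem under the tilted law then evaluates both limiting factors in closed form, as $\exp(\lambda\E{(\exp(\gamma S)-1)(b(L)-b)^{+}})$ and $\exp(-\lambda\E{(\exp(\gamma S)-1)(b-b(L))^{+}})$. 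Multiplying these, the two truncations merge into $b(L)-b$, and the identity $\gamma=\lambda(\E{\exp(\gamma S)}-1)$ turns the $-b$ piece into a factor $\exp(-\gamma b(\ell))$. This yields $C_W\,\E{\exp(\gamma S)\mid L=\ell}\exp(-\gamma b(\ell))\exp(\lambda\E{b(L)(\exp(\gamma S)-1)})$; averaging over $\ell$ and recognizing $\E{\exp(-\gamma b(L))\,\E{\exp(\gamma S)\mid L}}=\E{\exp(\gamma(S-b(L)))}$ gives the claimed $C_\cheat{}$.

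\textbf{From \cheat{} to \boost{}, and the main obstacle.} For honest \boost{}, the no-serving-before-arrival rule alters the set of jobs that cross $J$ only through jobs near the two endpoints of $J$'s busy period; I would couple \boost{} and \cheat{} on a common arrival sequence and show the resulting discrepancy is carried by terms whose moment generating functions are finite at $\gamma$, so the tail constant is unchanged — the same argument absorbs the preemptive/nonpreemptive/intermediate distinction (\cref{rmk:preemption_irrelevant}). I expect the main obstacle to be the conditional-limit step: proving that the law of the local arrival configuration given $\{W=w\}$ converges to the $\gamma$-tilt, and — more delicately — that $V^{\pm}$ together with the busy-period coupling are regular enough (a uniform-integrability estimate, again powered by \cref{eq:model:boost_okay}) to move the $w\to\infty$ limit inside the expectation. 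Making the \boost{}--\cheat{} comparison fully rigorous is the second place that requires care.
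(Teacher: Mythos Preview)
Your decomposition has a concrete error: you claim ``$V^{\pm}(b)$ are functionals of arrivals in time intervals disjoint from \ldots\ those building up $W$,'' but $V^{-}(b)$ is the size of jobs \emph{present at time $0$}, i.e.\ a functional of exactly the arrivals that constitute~$W$. So $V^{-}(b)$ and $W$ are not independent, and you cannot simply factor $\E{\exp(\theta(W - V^{-}(b)))}$. You try to repair this via the conditional-limit / exponential-tilt step, and heuristically that does give the right constant, but you have correctly identified it as the main obstacle: proving that the local arrival configuration given $\{W=w\}$ converges to the $\gamma$-tilted Poisson law, and that $\exp(-\gamma V^{-}(b))$ is uniformly integrable under that conditioning, is a genuine large-deviations result that requires real work. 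The busy-period truncation on $V^{+}(b)$ adds a second layer of dependence on~$W$ that must be handled the same way.

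The paper sidesteps all of this with one device you are missing: it anchors the tagged job so that its \emph{boosted arrival time}, not its arrival time, is~$0$. Then the tagged job arrives at time~$B$, and every job present at time~$0$ has boosted arrival time below~$0$ and hence priority over the tagged job. There is no $V^{-}$ term at all: the response time is sandwiched between $W - B + V(W) + S$ and $(W - \min\{B,u\})^{+} + V(\infty) + S + \bar V(u)\,\1(W<\min\{B,u\})$, where the crossing work $V(\cdot)$ is built solely from arrivals \emph{after} time~$0$ and is therefore genuinely independent of~$W$. The constant then drops out of elementary monotone-convergence arguments (send $w\to\infty$ in the lower bound, $u\to\infty$ in the upper), with no tilting and no conditional limit theorem. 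Moreover, the same sandwich holds verbatim for both \boost{} and \cheat{}, so no separate coupling argument is needed; the paper proves $C_{\boost{}}=C_{\cheat{}}$ for free rather than as a second step.
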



\subsection{Approach: tagged job analysis}
\label{sec:analysis:tagged_job}

To derive the tail constants of \boost{} and \cheat{}, we require bounds on $T_\boost{}$ and $T_\cheat{}$. We obtain these by considering a pair of M/G/1 systems, one standard and one cheating, experiencing the same arrival process. The only difference is that the release times in the standard M/G/1 coincide with arrival times, whereas in the cheating M/G/1, all jobs in a given busy period have release time at the start of the busy period (\cref{sec:model:cheating}). We assume both systems are stationary processes.

We use a \emph{tagged job} analysis \citep{harchol-balter_performance_2013}, which is a common technique in analyzing complex scheduling policies \cite{scully_soap_2018, scully_soap_2018a, grosof_srpt_2018, fajardo_controlling_2015, stanford_waiting_2014, fajardo_waiting_2017, scully_new_2022, grosof_nudge_2021, vanhoudt_stochastic_2022, grosof_optimal_2023, charlet_tail_2024}. A typical tagged job analysis uses the observation that, due to the ``PASTA'' property of Poisson arrivals \citep{wolff_poisson_1982}, to analyze a policy~$\pi$'s response time distribution~$T_\pi$, we can analyze the response time of a single ``tagged'' job which arrives at an arbitrary time, such as time~$0$. We then interpret $T_\pi$ as the random response time the tagged job experiences. There are three \emph{independent} sources of randomness that contribute to $T_\pi$:
\* The tagged job's label-size pair, which is drawn from $(L, S)$.
\* Aspects of the M/G/1's state, such as its work~$W$, at time~$0$, which is drawn from the system's stationary distribution.
\* Arrivals that occur after time~$0$.
\*/

In our approach, instead of assuming without loss of generality that the tagged job's arrival time is time~$0$, we assume the \emph{boosted arrival time} is time~$0$. The tagged job's response time is still determined by the same three sources of randomness listed above. However, the interpretation of the latter two sources changes, e.g., some of the arrivals after time~$0$ arrive before the tagged job.

The only subtlety to check is that the system state at time~$0$, and in particular the work~$W$, still has the stationary distribution. This is indeed the case. Consider the stationary work process $W_t$ as a function of time $t \in \bbR$. The key observation is that the tagged job's boost is independent of its arrival time. So if the tagged job arrives at time~$a$ and has boost $B = b(L)$, the work at the boosted arrival time $W_{a - B}$ is distributed according to the stationary distribution, because $\{W_t\}_{t \leq a}$ is stationary and independent of~$B$. As such, we can imagine $a - B = 0$ without loss of generality. One can use the framework of Palm calculus \citep{baccelli_elements_2003} to formalize this argument.

To summarize our approach and notation:
\* We analyze the response time of a tagged job with boosted arrival time~$0$.
\** Abusing notation slightly, we write $L$, $S$, $B = b(L)$, and $T_\pi$ for the tagged job's label, size, boost, and response time under policy~$\pi$.
\** This means the tagged job's arrival time is~$B$.
\* The amount of work in the (standard) M/G/1 at time~$0$, denoted~$W$, is distributed according to the standard stationary work distribution \citep{harchol-balter_performance_2013}.
\** We follow the convention that $W$ does not include the tagged job's size~$S$.
\* After time~$0$, new Poisson arrivals occur at rate~$\lambda$ with iid label-size pairs. 
\** To avoid ambiguity, when we need to discuss the label, size, and boost of a generic future arrival, we write $L'$, $S'$, and~$B'$ instead of (the identically distributed) $L$, $S$, and~$B$.
\** We call these arrivals ``new'' because their arrival times are after time~$0$, even if their arrival times are before the tagged job's arrival time~$B$.
\*/

\subsection{Bounding Boost's response time using crossing work}


A critical step in bounding both $T_\cheat{}$ and $T_\boost{}$ is quantifying how much arriving work will ``boost past'' the tagged job, which we define formally below. We also define the complementary quantity for the arriving work that doesn't boost past the tagged job.

\begin{definition}\label{def:work_crossing}
    \leavevmode
    \*[subenv] The \emph{crossing work} arriving in~$(0, u)$, denoted~$V(u)$, is the amount of work due to jobs that have arrival time in $(0, u)$ and boosted arrival time in $(-\infty, 0]$.
    \* The \emph{non-crossing work} arriving in~$(0, u)$, denoted $\bar{V}(u)$, is the amount of work due to jobs that have arrival time in $(0, u)$ and boosted arrival time in $(0, \infty)$.
    \*/
\end{definition}

For example, $V(\infty)$ is all the work that ``boosts past'' time~$0$, meaning arriving after~$0$ but having boosted arrival time before~$0$. On the other extreme, $V(0)$ is simply~$0$.

Understanding the amount of crossing work that the tagged job experiences is key to understanding the response time of the tagged job under \boost{}. However, the exact crossing work is difficult to compute, as it depends on the amount of work $W$ at the tagged job's boosted arrival time, its size, as well as the sizes and labels of future arrivals. In particular, jobs with boosted arrival time before time~$0$ may still depart after the tagged job, if their arrival times are late enough. Under \cheat{}, since the release time of all jobs is the beginning of the busy period, the crossing work is given by $V(Z)$, where $Z$ is the random time denoting the end of the busy period. But $Z$ in turn depends on the same arrivals that $V(Z)$ is counting, making this a hard quantity to analyze.

Due to these difficulties, instead of computing crossing work exactly, we find bounds on $T_\boost{}$ and $T_\cheat{}$ which are good enough for computing $C_\boost{}$ and $C_\cheat{}$.






\begin{lemma}\label{lem:boost_response_time_bounds}
    The tagged job's response times under \boost{} and \cheat{} are both lower-bounded by
    \[
        T_\boost{}, T_\cheat{} &\geq W - B + V(W) + S
    \]
    and, for all $u \geq 0$, upper-bounded by
    \[
        T_\boost{}, T_\cheat{} &\leq (W - \min\{B, u\})^+ + V(\infty) + S + \bar{V}(u) \, \1(W < \min\{B,u\}).
    \]
\end{lemma}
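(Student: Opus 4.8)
The plan is to carry out the tagged-job analysis of \cref{sec:analysis:tagged_job}: the tagged job has boosted arrival time~$0$, hence true arrival time $B = b(L)$ and size~$S$, and at time~$0$ the standard system contains $W$ units of work. Writing $d$ for the tagged job's departure time (in whichever system, standard or cheating, is under consideration), we have $T = d - B$, so it suffices to bound $d$ from both sides. The key bookkeeping device is the process $g(t)$, defined for $t \ge 0$ as the total size of jobs currently in the system whose boosted arrival time is $\le 0$; these are exactly the jobs that \boost{} or \cheat{} serve ahead of the tagged job. We have $g(0) = W$; the cumulative upward jumps of $g$ over $(0, u)$ equal $V(u)$ (from crossing arrivals); and $g$ decreases at rate~$1$ exactly while the server processes a $g$-job, which under every preemption rule is whenever $g(t) > 0$ and the server is not serving the tagged job, with the single exception --- relevant only to the nonpreemptive and intermediate versions --- of intervals on which the server is completing a lower-priority job it began when $g$ was momentarily~$0$.

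\emph{Lower bound.} All $W$ units present at time~$0$ have boosted arrival time $\le 0$ and are available from time~$0$, so $g(t) \ge W - t$ until $g$ first reaches~$0$; in particular $g > 0$ on $[0, W)$, the server spends all of $[0, W]$ on $g$-work, and at time~$W$ at least $g(W) \ge V(W)$ units of $g$-work remain. The tagged job, being of no higher priority than any $g$-job, receives no service before time~$W$ and must then wait out this remaining $V(W)$ (plus any later crossing arrivals) before completing its own $S$ units; hence $d \ge W + V(W) + S$, i.e.\ $T \ge W - B + V(W) + S$. This uses nothing about preemption, and it holds verbatim for \cheat{}, since cheating only lets the tagged job be \emph{released} earlier --- never lets it overtake the already-released work tallied by~$g$.

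\emph{Upper bound.} Fix $u \ge 0$ and put $m = \min\{B, u\}$. On $(B, d)$ the tagged job is present, the server is continuously busy, and it works only on $g$-jobs (while $g > 0$) or on the tagged job; integrating the rate-$1$ decrease of $g$ and using $g(d) \ge 0$ together with the balance between crossing arrivals and $g$-service gives, in the preemptive case, $d - B \le g(B) + (V(\infty) - V(B)) + S$, and the same inequality with an extra error of at most $\bar V(u)$ in the nonpreemptive and intermediate cases, coming from service the server wastes on a lower-priority job shortly before time~$B$. It then remains to bound $g(B)$, by a case split on whether $W \ge m$. If $W \ge m$: since $g > 0$ on $[0, \min\{W, B\})$ and $\min\{W, B\} \ge m$ (as $m \le B$), the server completes at least $m$ units of $g$-work by time~$B$, so $g(B) \le W + V(B) - m$ and $d - B \le (W-m)^+ + V(\infty) + S$, and for $W \ge B$ there is no wasted-service error because the server never idles or runs a lower-priority job on $[0,B]$. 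If $W < m$: the time-$0$ work is drained before time $m \le B$, so $g(B)$ together with all subsequent crossing work is at most $V(\infty)$, and the non-crossing service that can slip in before time~$B$ is absorbed into the $\bar V(u)\,\1(W < m)$ term. In both cases the right-hand side is dominated by the claimed bound, and \cheat{} is handled identically since it only moves release times back within a busy period.

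I expect the nonpreemptive/intermediate part of the upper bound to be the main obstacle: one must pin down how much service the server can ``waste'' on a lower-priority job during $(0, B)$, argue this total is controlled by $\bar V(u)$, and check that it does not simply add on top of the crossing contribution $V(\infty)$ but is instead partly absorbed by server idle time --- and that it is only needed when $W < m$. The other delicate point is making the crossing-arrival balance on $(B, d)$ fully rigorous, including tie-breaking for jobs whose boosted arrival time is exactly~$0$ and the accounting of which crossing jobs are served before, versus after, the tagged job when preemption is disallowed.
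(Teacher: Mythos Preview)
Your lower bound for \boost{} is correct and matches the paper's argument. But the \cheat{} lower bound does \emph{not} hold verbatim: your process $g(t)$ tracks jobs currently in the system, and under \cheat{} release times move to busy-period starts, so neither $g(0)=W$ nor ``upward jumps of $g$ on $(0,u)$ equal $V(u)$'' is correct in the cheating system. The paper treats \cheat{} separately via a busy-period argument: it identifies the busy period containing time~$0$, shows at least $W+V(W)$ work in it has priority over the tagged job, and then splits on whether the tagged job lies in that busy period or a later one (in the latter case, the tagged job's release time is already past $W+V(W)$). Your $g$-process machinery would need to be rebuilt for the cheating system before it applies.

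For the upper bound, the paper takes a different route. Its key trick---which you are missing---is to \emph{first reduce the tagged job's boost from $B$ to $\hat B=\min\{B,u\}$}, arguing this can only increase the response time (smaller boost means later boosted arrival, hence lower priority). With this reduction in hand, the tagged job is treated as arriving at time~$\hat B$; then any lower-priority job in service at the tagged job's arrival must itself have arrived in $(0,\hat B)\subseteq(0,u)$, so the nonpreemptive residual $R$ is automatically at most $\bar V(u)$, and $R>0$ forces $W<\hat B$. Your direct approach, keeping the arrival time at $B$, leaves a genuine gap in the regime $m\le W<B$ (nonempty whenever $u<B$): there the server can start a non-crossing job sometime in $(W,B)$, and because that job may arrive anywhere in $(0,B)$, its size is bounded only by $\bar V(B)$, not by $\bar V(u)$---yet the indicator $\1(W<m)$ in the claimed bound is~$0$. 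This is exactly the obstacle you flagged in your last paragraph; the boost-reduction trick is what closes it.
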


\begin{proof}
    We first recall our conventions from \cref{sec:analysis:tagged_job}. The tagged job's boosted arrival time is~$0$, true arrival time is~$B$, and departure time is $T_\pi + B$. There is $W$ work in the system at time~$0$.

    We first treat both bounds on $T_\boost{}$ before handling both bounds on $T_\cheat{}$. For $T_\boost{}$, we analyze the work done by the server between the tagged job's boosted arrival time~$0$ and its departure time $T_\boost{} + B$. Both the lower and upper bounds use the fact that the server must complete the following work between $0$ and $T_\boost{} + B$:
    \* $W$ from work present at time~$0$. This must be completed because all the work present at time~$0$ has priority over the tagged job. This is because their arrival times, and thereby boosted arrival times, are earlier than~$0$.
    \* $S$ from the tagged job itself.
    \* Some additional work from new arrivals that occur after time~$0$.
    \*/
    The main task is thus to bound the work from new arrivals.

    \paragraph{Lower bound on $T_\boost{}$.}
    By the first bullet above, work on $W$ must complete before the tagged job can enter service. Therefore, it cannot enter service before time $W$. In this time, there is at least $V(W)$ work from new arrivals that will have priority over the tagged job. Adding the required components, we have $T_\boost{} + B \geq W + V(W) + S$, as desired.

    \paragraph{Upper bound on $T_\boost{}$.}
    First, observe that decreasing the tagged job's boost from $B$ to $\hat{B} = \min\{B, u\}$ can only increase its response time, so it suffices to analyze the case with this reduced boost. In the remainder of this argument, $\hat{B}$ plays the role of~$B$.

    We first consider the fully preemptive case, in which the tagged job has priority over all jobs with boosted arrival time later than~$0$. In addition to the required completions of $W$ and $S$ before the tagged job's departure, it will need to complete at most $V(\infty)$ from new arrivals that occur after time~$0$.

    However, we must also account for time the server spends idle. No idling occurs after the tagged job arrives, but the server may be idle for some time during $[0, \hat{B}]$. This idle time is at most $(\hat{B} - W)^+$, so in the fully preemptive case,
    \[
        \label{eq:boost_bound_preemptive}
        T_\boost{} + \hat{B} \leq W + V(\infty) + S + (\hat{B} - W)^+.
    \]

    We now turn to the case where the server is not fully preemptive. The only change to the argument is that the tagged job may also have to wait for the remaining work of a lower-priority job, if one is in service when the tagged job arrives at time~$\hat{B}$. If there is such a job, let $R$ be its total size, and otherwise, let $R = 0$. It is clear that the tagged job's departure time is at most the right-hand side of \cref{eq:boost_bound_preemptive} plus~$R$. We bound $R$ with two observations:
    \* If $W \geq \hat{B}$, then at time~$\hat{B}$ when the tagged job arrives, the server is still working on jobs with priority over the tagged job, either from $W$ or from new arrivals. This means $R > 0$ only if $W < \hat{B}$.
    \* If $R > 0$, then a job with boosted arrival time after~$0$ is in service at time~$\hat{B}$. The job's true arrival time is therefore in $(0, \hat{B})$, so $R \leq \bar{V}(\hat{B}) \leq \bar{V}(u)$ (\cref{def:work_crossing}).
    \*/
    Combining these observations implies
    \[
        R \leq \bar{V}(u)\,\1(W < \hat{B}).
    \]
    Adding this to the fully preemptive bound in \cref{eq:boost_bound_preemptive} yields
    \[
        T_\boost{} + \hat{B} \leq W + V(\infty) + S + (\hat{B} - W)^+ + \bar{V}(u)\,\1(W < \hat{B}),
    \]
    which rearranges to the desired bound.


    \paragraph{Lower bound on $T_\cheat{}$.}
    Recall throughout that $W$ refers to the amount of work in the standard M/G/1 at time~$0$, whose busy periods affect the release times in the cheating M/G/1 (\cref{sec:model:cheating}). We will show that the tagged job cannot begin service before at least time $W + V(W)$, which implies the desired bound. To do this, we first analyze the busy period related to $W$, the work present at time~$0$ in the system. This work belongs to a busy period, which we call BP, that started at time $-A < 0$. Let $U$ be the amount of work from jobs with arrival time in $[-A, 0)$. Then we have
    \[
        U - A = W.
    \]
    Therefore, the server will be busy during $[-A, -A + U] = [-A, W]$. As such, at least $V(W)$ work from new arrivals will arrive during $[0, W]$. This means there is at least $U + V(W)$ total work in the busy period with boosted arrival time at most~$0$.

    There are now two cases to consider for the tagged job. First, suppose the system is busy for all of $[0, B]$. Then the tagged job belongs to the same busy period BP as described above. In this period we know that there is at least $U + V(W)$ total work with priority over the tagged job, so it cannot begin service prior to
    \[
        -A + U + V(W) = W + V(W).
    \]

    Second, suppose that the system becomes idle at some time in $[0, B]$. This means that the tagged job belongs to a busy period after BP. Therefore the tagged job's release time, which is the earliest it can enter service, must be after the end of BP. But we know that BP ends at the earliest at $-A + U + V(W) = W + V(W)$, so the tagged job's release time must be at least $W + V(W)$.

    \paragraph{Upper bound on $T_\cheat{}$.}
    First, as in the $T_\boost{}$ upper bound, we reduce the tagged job's boost to $\hat{B} = \min\{B, u\}$. We then imagine a variant of \cheat{} that treats the tagged job especially poorly, forcing it to begin service at no earlier than its arrival time~$\hat{B}$. This clearly only increases the tagged job's response time. From here, the reasoning from the fully preemptive $T_\boost{}$ upper bound also applies to $T_\cheat{}$.
\end{proof}

\begin{remark}
    \label{rmk:preemption_irrelevant}
    The proof of \cref{lem:boost_response_time_bounds} above works regardless of whether we are considering a preemptive, nonpreemptive, or intermediate version of \boost{}. As such, our results apply regardless of the precise preemption rule used.
\end{remark}

\Cref{lem:boost_response_time_bounds} allows us to bound both $T_\boost{}$ and $T_\cheat{}$ with quantities that only depend on the standard M/G/1. It then suffices to show that the $\liminf$ and $\limsup$ of the lower and upper bounds, respectively, converge to the same number, namely the expression given in \cref{thm:general_transform}. To do so, we require that the crossing work terms in our bounds have finite moment generating function. This holds under our assumption \cref{eq:model:boost_okay}, which says, roughly speaking, that the boost isn't too large for too many labels.

\begin{lemma}\label{lem:work_crossing_expression}
    If \cref{eq:model:boost_okay} holds, then for all $u \in \bbR_+ \cup \{\infty\}$,
    \[
        \E{\exp\gp{\theta V(u)}} = \exp\gp[\big]{\lambda \E[\big]{\gp{\exp\gp{\theta S'} - 1} \min\{B', u\}}} < \infty.
    \]
\end{lemma}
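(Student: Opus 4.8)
The plan is to recognize $V(u)$ as a functional of a marked Poisson process and evaluate its moment generating function via the exponential (Campbell) formula. By the conventions of \cref{sec:analysis:tagged_job}, the new arrivals after time~$0$ form a rate-$\lambda$ Poisson process on $(0, \infty)$, and each arrival epoch $a'$ carries an independent mark $(L', S')$ drawn from the label--size distribution, with boost $B' = b(L')$. By \cref{def:work_crossing}, such an arrival contributes its size $S'$ to $V(u)$ precisely when its arrival time lies in $(0, u)$ and its boosted arrival time $a' - B'$ lies in $(-\infty, 0]$, i.e.\ when $a' \in (0, \min\{u, B'\}]$ (the right endpoint being irrelevant since the Poisson process is atomless). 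Hence $V(u) = \sum_i S_i' \, \1(a_i' \le \min\{u, B_i'\})$, a sum of independent marks over the points of a marked Poisson process, which sets up a direct application of Palm/Campbell machinery \citep{baccelli_elements_2003}.

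Next I would apply the standard exponential functional identity: for a Poisson process with intensity measure $\mu$ and a nonnegative measurable $g$, one has $\E{\exp(\theta \sum_i g(\xi_i))} = \exp(\int (e^{\theta g} - 1)\, d\mu)$. Here the mark space is $(0, \infty) \times \bbL \times \bbR_+$ with intensity $\lambda\, da' \otimes \nu$, where $\nu$ is the law of $(L', S')$, and $g(a', l', s') = s'\, \1(a' \le \min\{u, b(l')\})$. For a fixed mark, $\int_0^\infty (e^{\theta s' \1(a' \le \min\{u, b(l')\})} - 1)\, da'$ equals $e^{\theta s'} - 1$ on an interval of length $\min\{u, b(l')\}$ and $0$ elsewhere, so it evaluates to $(e^{\theta s'} - 1)\min\{u, b(l')\}$. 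Integrating over $\nu$ (Tonelli applies, as everything is nonnegative) yields $\E{\exp(\theta V(u))} = \exp(\lambda\, \E{(e^{\theta S'} - 1)\min\{B', u\}})$, which is the claimed formula.

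For finiteness, I would note that whenever $\theta \le \gamma$ we have $e^{\theta S'} - 1 \le e^{\gamma S'} - 1$ and $\min\{B', u\} \le B'$, so $\E{(e^{\theta S'} - 1)\min\{B', u\}} \le \E{B'(e^{\gamma S'} - 1)} = \E{b(L)(e^{\gamma S} - 1)}$, which is finite by \cref{eq:model:boost_okay}; hence the exponent, and therefore $\E{\exp(\theta V(u))}$, is finite. (This is also the step that implicitly restricts the relevant range of $\theta$ to $\theta \le \gamma$, consistent with the $\theta \to \gamma$ limits used elsewhere.)

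The main obstacle is the case $u = \infty$, where the integration window $(0, B')$ has unbounded expectation: this is exactly where \cref{eq:model:boost_okay} does the work, ensuring $\int (e^{\theta g} - 1)\, d\mu < \infty$ so that the Campbell formula produces a genuine (finite) MGF rather than a trivially infinite one. Some care is also needed in the bookkeeping inherited from \cref{sec:analysis:tagged_job}: the marks $(L_i', S_i')$ must be i.i.d.\ and independent of the arrival epochs $a_i'$, and $V(u)$ must be understood relative to the tagged job's boosted arrival time~$0$ (and to count only the new arrivals, not the tagged job itself) --- it is precisely this normalization that makes the marked-Poisson description exact. If one preferred to avoid Palm calculus, the same identity follows by conditioning on the number of Poisson points in a finite window, using that their locations are then i.i.d.\ uniform, and passing to the limit, but the marked-Poisson argument is cleaner.
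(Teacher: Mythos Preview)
Your proposal is correct and follows essentially the same approach as the paper: both represent $V(u)$ as a sum over a marked Poisson point process and apply Campbell's exponential formula with the same intensity measure, arriving at the identity by integrating out the time coordinate first and then the mark via Tonelli. Your explicit treatment of the finiteness claim (restricting to $\theta \le \gamma$ so that \cref{eq:model:boost_okay} bounds the exponent) is a welcome addition that the paper leaves largely implicit.
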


\begin{proof}
    We can consider each new arrival to be a triple $(b', s', t)$ representing its boost~$b'$, size~$s'$, and arrival time~$t$. Let $X$ be the random set of triples corresponding to arrivals after time~$0$. We can write the crossing work $V(u)$ as
    \[
         V(u) = \sum_{(b', s', t)\in X} s'\1\gp{t \leq \min\curlgp{b', u}}.
    \]
    To compute $\E{\exp(\theta V(u))}$, we use Campbell's theorem \citep[Section~3.2]{kingman_poisson_1993} for the Laplace functional of a Poisson point process. In our case, the point process is~$X$, and its intensity measure is
    \[
        \mu(\calB \times \calS \times \d{t}) = \P{B' \in \calB, S' \in \calS} \, \lambda \d{t}.
    \]
    Campbell's theorem and a brief computation involving Tonnelli's theorem then imply\footnote{%
        The precondition of Campbell's theorem \citep[Section~3.2]{kingman_poisson_1993} is satisfied when the right-hand-side expression is finite, as $\exp\gp{\theta s' \1\gp{t \leq \min{b', u}}} - 1 \geq \theta s' \1\gp{t \leq \min\gp{b', u}}$.%
    }
    \[
        \E{\exp\gp{\theta V(u)}}
        &= \exp\gp[\bigg]{\int \gp[\big]{\exp\gp[\big]{\theta s' \1\gp{t \leq \min \curlgp{b', u}}} - 1} \, \mu(\d(b', s', t))}\\
        &= \exp\gp[\bigg]{\int \gp{\exp\gp{\theta s'} - 1} \, \1\gp{t \leq \min \curlgp{b', u}} \, \mu(\d(b', s', t))}\\
        &= \exp\gp[\bigg]{\E[\bigg]{\int_0^\infty \gp{\exp\gp{\theta S'} - 1} \, \1\gp{t\leq\min\curlgp{B', u}} \, \lambda \d{t}}}\\
        &= \exp\gp[\big]{\lambda\E[\big]{\gp{\exp(\theta S') - 1} \min\curlgp{B', u}}}.
        \qedhere
    \]
\end{proof}

\subsection{Tail Constant of Boost}


\begin{proof}[Proof of \cref{thm:general_transform}]
    \label{pf:general_transform}
    Let $\pi$ be one of \boost{} or \cheat{}, and let the claimed tail constant be
    \[
        C &= C_W \E[\big]{\exp\gp{\gamma \gp{S - B}}} \exp\gp[\big]{\lambda \E{B (\exp(\gamma S) - 1)}}
        = C_W \E[\big]{\exp\gp{\gamma \gp{S - B}}} \, \E{\exp \gamma V(\infty)},
    \]
    where \cref{lem:work_crossing_expression} implies the second equality. We will show $C_\pi = C$. By \cref{lem:boost_response_time_bounds}, for all $u \geq 0$,
    \[
        \E{\exp\gp{\theta T_\pi}} &\geq \E[\big]{\exp\gp[\big]{\theta \gp[\big]{W - B + V(W) + S}}}, \\
        \E{\exp\gp{\theta T_\pi}} &\leq \E[\big]{\exp\gp[\big]{\theta \gp[\big]{(W - \min\{B,u\})^+ + V(\infty) + S + \1(W < \min\{B,u\})\bar{V}(u)}}}.
    \]
    \Cref{eq:model:tail_constant} thus implies
    \[
        \gamma C_\pi &\geq \liminf_{\theta\to\gamma}{} (\gamma - \theta) \E[\big]{\exp\gp[\big]{\theta \gp[\big]{W - B + V(W) + S}}}, \\
        \gamma C_\pi &\leq \limsup_{\theta\to\gamma}{} (\gamma - \theta) \E[\big]{\exp\gp[\big]{\theta \gp[\big]{(W - \min\{B,u\})^+ + V(\infty) + S + \1(W < \min\{B,u\})\bar{V}(u)}}}.
    \]
    It thus suffices to show that the lower bound is at least~$\gamma C$, and that the infimum over~$u$ of the upper bound is at most~$\gamma C$.

    We first analyze the lower bound $\gamma C_\pi$. For all $w \geq 0$, we have
    \[
        \MoveEqLeft
        \liminf_{\theta\to\gamma}{} (\gamma - \theta) \E[\big]{\exp\gp[\big]{\theta \gp[\big]{W - B + V(W) + S}}}\\
        &\geq \liminf_{\theta\to\gamma}{} (\gamma - \theta) \E*{\exp\gp*{\theta \gp*{W - B + V(W) + S}}\1\gp{W > w}}\\
        &\geq \liminf_{\theta\to\gamma}{} (\gamma - \theta) \E*{\exp\gp*{\theta \gp*{W - B + V(w) + S}}\1\gp{W > w}}\\
        &= \liminf_{\theta\to\gamma}{} (\gamma - \theta) \E*{\exp\gp*{\theta W}\1\gp{W > w}} \, \E*{\exp\gp*{\theta \gp{S - B}}} \, \E*{\exp{\theta V(w)}}\\
        &= \gamma C_W\E*{\exp\gp{\gamma\gp{S - B}}} \, \E*{\exp{\gamma V(w)}},
    \]
    where the last step follows from \cref{eq:model:tail_constant_work}. Since this holds for all~$w$, it also holds in the $w \to \infty$ limit. The monotone convergence theorem implies the limit is $\gamma C$, as desired.

    We now turn to the upper bound on $\gamma C_\pi$. We have
    \[
        \MoveEqLeft
        \limsup_{\theta\to\gamma}{} (\gamma - \theta) \E[\big]{\exp\gp[\big]{\theta \gp[\big]{(W - \min\{B,u\})^+ + V(\infty) + S + \bar{V}(u)\,\1(W < \min\{B, u\})}}} \\
        &= \limsup_{\theta\to\gamma}{} (\gamma - \theta) \E[\big]{\exp\gp[\big]{\theta \gp[\big]{W - \min\{B,u\} + V(\infty) + S}} \, \1(W \geq \min\{B,u\})}\\
        &\quad + \limsup_{\theta\to\gamma}{} (\gamma - \theta) \E[\big]{\exp\gp[\big]{\theta \gp[\big]{V(\infty) + S + \bar{V}(u)}} \, \1(W < \min\{B,u\})}\\
        &\leq \limsup_{\theta\to\gamma}{} (\gamma - \theta) \E[\big]{\exp\gp[\big]{\theta \gp[\big]{W - \min\{B,u\} + V(\infty) + S}}} \label{eq:almost_there:first} \\
        &\quad + \limsup_{\theta\to\gamma}{} (\gamma - \theta) \E[\big]{\exp\gp[\big]{\theta \gp[\big]{V(\infty) + S + \bar{V}(u)}}}. \label{eq:almost_there:second}
    \]
    We now compute the limits in \cref{eq:almost_there:first, eq:almost_there:second}. We first show the limit in \cref{eq:almost_there:second} vanishes, then we show the limit in \cref{eq:almost_there:first} has the desired value.

    Let $A(u) = V(u) + \bar{V}(u) \geq \bar{V}(u)$ be the total amount of work that arrives during~$(0, u)$. The limit in \cref{eq:almost_there:second} is bounded by
    \[
        \MoveEqLeft
        \limsup_{\theta\to\gamma}{} (\gamma - \theta) \E*{\exp\gp*{\theta\gp[\big]{V(\infty) + S + \bar{V}(u)}}}\\
        &= \limsup_{\theta\to\gamma}{} \E{\exp(\theta V(\infty))} \, \E{\exp(\theta S)} \, \E{\exp(\theta \bar{V}(u))}\\
        &\leq \limsup_{\theta\to\gamma}{} (\gamma - \theta) \E{\exp(\theta V(\infty))} \, \E{\exp(\theta S)} \, \E{\exp(\theta A(u))}\\
        &\leq \limsup_{\theta\to\gamma}{} (\gamma - \theta) \E{\exp(\gamma V(\infty))} \, \E{\exp(\gamma S)} \, \E{\exp(\gamma A(u))} \\
        &= 0,
    \]
    where the second line holds by independence of $V(\cdot)$, $\bar{V}(\cdot)$, and $S$; and the second-to-last line is possible because all three of the factors are finite.
    \* $\E{\exp(\gamma S)}$ is finite by the definition of~$\gamma$.
    \* $\E{\exp(\gamma V(\infty))}$ is finite by \cref{eq:model:boost_okay, lem:work_crossing_expression}.
    \* $\E{\exp(\gamma A(u))} = \exp(\lambda u (\E{\exp(\gamma S)} - 1)) = \exp(\gamma u)$, by a standard M/G/1 result \citep{harchol-balter_performance_2013} and the definition of~$\gamma$.
    \*/

    Because the limit in \cref{eq:almost_there:second} vanishes, $\gamma C_\pi$ is at most the limit in \cref{eq:almost_there:first}, so
    \[
        \gamma C_\pi
        &\leq \limsup_{\theta\to\gamma}{} (\gamma - \theta) \E*{\exp\gp*{\theta \gp{W - \min\{B,u\} + V(\infty) + S}}}\\
        &= \limsup_{\theta\to\gamma}{} (\gamma - \theta) \E{\exp\gp{\theta W}} \, \E{\exp\gp{\theta\gp{S - \min\{B,u\}}}} \, \E{\exp\gp{\theta V\gp{\infty}}}\\
        &= \gamma C_W\E{\exp\gp{\gamma\gp{S - \min\{B,u\}}}} \, \E{\exp\gp{\gamma V(\infty)}},
    \]
    where the last step follows from \cref{eq:model:tail_constant_work}. Because this holds for any $u$, it holds in the $u\to\infty$ limit. The monotone convergence theorem implies the limit is $\gamma C$, as desired.
\end{proof}

\section{A batch scheduling problem related to tail optimality}
\label{sec:batch}
In this section we show that \cheat{\theta} minimizes $\E{\exp\gp{\theta T_\pi}}$ in the full-information case. The key idea is that in this case, we can treat each busy period as a finite batch of jobs on which to minimize a cost function which corresponds to $\E{\exp\gp{\theta T_\pi}}$. We start by defining what a batch is and what the cost function is.

\begin{definition}
    A \emph{batch instance} $\calI = \{(a_1,s_1),\ldots,(a_n,s_n)\}$ is a finite batch of pairs of arrival times and job sizes.
\end{definition}

\begin{definition}
    The \emph{$\theta$-cost} of policy~$\pi$ on batch instance $\calI = \{(a_1, s_1), \dots, (a_n, s_n)\}$ is
    \[
        K_\pi(\theta, \calI) = \sum_{i = 1}^n \E{\exp\gp{\theta T_{\pi, i}}} = \sum_{i = 1}^n \E{\exp\gp{\theta (D_{\pi, i} - a_i)}},
    \]
    where $D_{\pi, i}$ is the departure time of job $i$ under policy $\pi$. The expectation is over any randomness in the policy.
\end{definition}

We now show that \cheat{\theta} minimizes $\theta$-cost across any finite batch of jobs, across all preemptive policies, and that this in turn minimizes $\E{\exp\gp{\theta T_\pi}}$.

\begin{theorem}\label{thm:optimal_batch_policy}
    In the full-information setting, the \cheat{\theta} policy minimizes $\theta$-cost. Specifically, for any batch instance~$\calI$,
    \[
        \label{eq:optimal_batch_policy:instance}
        K_{\cheat{\theta}}(\theta, \calI) = \min_\pi K_\pi(\theta, \calI),
    \]
    and therefore
    \[
        \label{eq:optimal_batch_policy:mg1}
        \E{\exp\gp{\theta T_{\cheat{\theta}}}} \leq \inf_\pi \E{\exp\gp{\theta T_\pi}}.
    \]
\end{theorem}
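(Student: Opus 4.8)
The plan is to prove \cref{eq:optimal_batch_policy:instance} first via an interchange argument, then deduce \cref{eq:optimal_batch_policy:mg1} by decomposing the cheating M/G/1 into independent busy periods. For the batch statement, fix an instance $\calI$ and work in the batch relaxation, where any job may be served at any time (including before $a_i$) — this is exactly what \cheat{\theta} is allowed to do within a busy period. Since $\theta$-cost is additive over jobs and a nonidling schedule in the relaxation is just an ordering of the $n$ jobs (with no forced idle time), the optimal batch policy is a fixed permutation. I would compare two adjacent jobs $i$ and $j$ in an optimal order: if $i$ is served in $[t, t+s_i]$ and $j$ in $[t+s_i, t+s_i+s_j]$, swapping them changes the cost by $\exp(-\theta a_i)(\exp(\theta(t+s_i)) - \exp(\theta(t+s_i+s_j))) + \exp(-\theta a_j)(\exp(\theta(t+s_j)) - \exp(\theta(t+s_i+s_j)))$ up to the common factor $\exp(\theta t)$; collecting terms, serving $i$ before $j$ is optimal iff
\[
    \frac{\exp(-\theta a_i)}{1 - \exp(-\theta s_i)} \geq \frac{\exp(-\theta a_j)}{1 - \exp(-\theta s_j)}.
\]
So the optimal order sorts jobs by decreasing value of $\exp(-\theta a_i)/(1 - \exp(-\theta s_i))$ — this is the negative-discount WDSPT rule referenced in \cref{sec:intro}. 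Taking $-\frac1\theta\log$ of this index, the ordering is equivalently by increasing $a_i - \frac1\theta\log\frac{1}{1-\exp(-\theta s_i)} = a_i - b_\theta(s_i)$, i.e. by increasing boosted arrival time, which is precisely what \cheat{\theta} does. I should also note the index is well-defined since $s_i > 0$, and that ties can be broken arbitrarily without changing the cost.

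To pass from \cref{eq:optimal_batch_policy:instance} to \cref{eq:optimal_batch_policy:mg1}, I would use that in the cheating M/G/1 each busy period is a self-contained batch: \cheat{\theta} restricted to a busy period is exactly the $\theta$-cost-optimal batch policy applied to that period's arrivals, and no job's service crosses a busy-period boundary (busy periods are the same under all nonidling policies, \cref{sec:model:cheating}). Conditioning on a tagged job landing in a busy period with arrival set $\calI$, the contribution to $\E{\exp(\theta T_{\cheat{\theta}})}$ is governed by $K_{\cheat{\theta}}(\theta, \calI)$ normalized by the expected number of jobs per period, and by \cref{eq:optimal_batch_policy:instance} this is minimized over all policies that respect busy-period structure — in particular over all standard-M/G/1 policies $\pi$, since any such $\pi$ also cannot serve across busy-period boundaries and hence induces a (feasible, possibly suboptimal) batch schedule on each period. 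Making this rigorous requires a renewal-reward / Palm-calculus identity expressing $\E{\exp(\theta T_\pi)}$ as $\E{\sum_{i \in \text{period}} \exp(\theta T_{\pi,i})} / \E{\#\text{jobs in period}}$, with the denominator policy-independent.

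The main obstacle is this last reduction: one must be careful that $\E{\exp(\theta T_\pi)}$ may be infinite (indeed it is at $\theta = \gamma$), so the busy-period decomposition and the inequality $\E{\exp(\theta T_{\cheat{\theta}})} \le \inf_\pi \E{\exp(\theta T_\pi)}$ must be stated so that it holds as an inequality in $[0,\infty]$ — if the right side is infinite there is nothing to prove, and if it is finite for some $\pi$ then the per-period expectations are summable and the renewal-reward identity applies termwise. A secondary subtlety is confirming that comparing only \emph{adjacent} swaps suffices to establish global optimality of the sorted order; this is the standard bubble-sort argument (any order not sorted by the index admits a cost-non-increasing adjacent swap), which I would invoke rather than belabor. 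Everything else is the routine interchange computation sketched above.
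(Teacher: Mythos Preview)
Your proposal is correct and follows essentially the same approach as the paper: an interchange argument yielding the negative-discount WDSPT index (which the paper simply defers to \citet[Theorem~3.1.6]{pinedo_scheduling_2016}), then the renewal-reward decomposition into busy periods to pass to the M/G/1 inequality. The intermediate cost-change expression you wrote has a sign slip in the second term, and you should note that restricting to permutations already presumes preemption cannot help (a standard side observation the paper also leaves to Pinedo), but your final index condition and its reformulation as the boosted-arrival-time ordering are correct.
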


\begin{proof}
    We first show \cref{eq:optimal_batch_policy:instance}, namely that \cheat{\theta} minimizes $\theta$-cost for any batch instance~$\calI$. This can be seen by observing that \cheat{\theta} serves jobs in $\calI$ in the same order as the Weighted Discounted Shortest Processing Time policy (WDSPT), which is known to minimize $\theta$-cost for any batch instance $\calI$. This is because boosted arrival time $a_i - b_\theta(s_i)$ is a monotonic function, namely $1/\theta$ times the negative log, of WDSPT's priority index:
    \[
        a_i - b_\theta(s_i) = -\frac{1}{\theta} \log\gp*{\exp(-\theta a_i) \frac{\exp(\theta s_i)}{\exp(\theta s_i) - 1}}.
    \]
    The proof is an interchange argument identical to that of \citet[Theorem 3.1.6]{pinedo_scheduling_2016}, with the signs for the discount rate and objective flipped. Specifically, with negative discounting, we define discounted completion time as $\exp(\theta D_i) - 1$ instead of $1 - \exp(\theta D_i)$ to keep the sign positive.

    Having shown \cref{eq:optimal_batch_policy:instance}, we now turn to \cref{eq:optimal_batch_policy:mg1}. The key idea is to consider busy periods as batch instances. Specifically, by renewal-reward theorem \citep[Theorem~10.2.15]{bremaud_probability_2020},
    \[
        \E{\exp(\theta T_\pi)} = \frac{\E{K_\pi(\theta, \calB)}}{\E{|\calB|}},
    \]
    where $\calB$ is a random batch instance corresponding to an M/G/1 busy period, and $|\calB|$ is the number of jobs in the instance. The intuition is that the average $\theta$-cost per job is the expected total $\theta$-cost of all jobs in a busy period, namely $\E{K_\pi(\theta, \calB)}$, divided by the expected number of jobs in a busy period, namely $\E{|\calB|}$. But $\E{|\calB|} = \frac{1}{1 - \rho}$ is the same under all scheduling policies, so
    \[
        \E{\exp(\theta T_{\cheat{\theta}})}
        &= \frac{\E{K_{\cheat{\theta}}(\theta, \calB)}}{\E{|\calB|}} \\
        &= \frac{\E{\min_\pi K_\pi(\theta, \calB)}}{\E{|\calB|}} \\
        &\leq \inf_\pi \frac{\E{K_\pi(\theta, \calB)}}{\E{|\calB|}} \\
        &= \inf_\pi \E{\exp(\theta T_\pi)}.
        \qedhere
    \]
\end{proof}

\begin{remark}
    \label{rmk:partial-information_hard}
    While \cref{thm:optimal_batch_policy} treats only the full-information setting, an analogue of \cref{eq:optimal_batch_policy:instance} holds in the partial-information setting. The optimality is relative not to all policies, but \emph{nonpreemptive} policies that have access to only labels and arrival times. However, \cref{eq:optimal_batch_policy:instance} no longer implies \cref{eq:optimal_batch_policy:mg1} in the partial-information setting. See \cref{sec:reduction} for details.
\end{remark}

\section{Proof of Boost's Strong Tail Optimality}
\label{sec:optimality}

In this section, we prove strong tail optimality of \boost{\gamma} in the full-information setting. We do so by computing an explicit lower bound on the optimal tail constant, namely $C^*$ from \cref{thm:strong_tail_optimality} below, and showing that $C_\boost{\gamma} = C^*$. The fact that $C^*$ is a lower bound on the optimal tail constant in the full-information case follows from \cref{thm:optimal_batch_policy}.

\begin{theorem}\label{thm:strong_tail_optimality}
    Consider an M/G/1 with class~I job size distribution with a fixed label-size pair distribution $(L, S)$, and let
    \[
        C^* = \liminf_{\theta \to \gamma} \frac{\gamma - \theta}{\gamma} \E{\exp\gp{\theta T_{\cheat{\theta}}}}.
    \]
    \*[subenv] The tail constant of \boost{\gamma} is
    \[
        \label{eq:gamma_boost_star}
        C_{\boost{\gamma}}
        = C^*
        = C_W \E{\exp(\gamma S) - 1} \exp\gp[\big]{\lambda \E{b_\gamma(L) \, \gp{\exp(\gamma S) - 1}}}.
    \]
    \* In the full-information setting, namely when $L = S$, \boost{\gamma} is strongly tail-optimal.
    \*/
\end{theorem}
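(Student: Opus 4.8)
The plan is to prove the two parts of \cref{thm:strong_tail_optimality} in sequence. For the closed form in \cref{eq:gamma_boost_star}, I would apply \cref{thm:general_transform} with boost function $b = b_\gamma$; its hypothesis \cref{eq:model:boost_okay} holds by \cref{lem:optimal_work_crossing_finite}, so $C_{\boost{\gamma}} = C_W\,\E{\exp(\gamma(S - b_\gamma(L)))}\exp(\lambda\E{b_\gamma(L)(\exp(\gamma S) - 1)})$. The only computation needed is $\E{\exp(\gamma(S - b_\gamma(L)))} = \E{\exp(\gamma S)} - 1$: conditioning on $L = l$, \cref{def:boost_theta} gives $\exp(-\gamma b_\gamma(l)) = (\E{\exp(\gamma S) \given L = l} - 1)/\E{\exp(\gamma S) \given L = l}$, hence $\exp(-\gamma b_\gamma(l))\,\E{\exp(\gamma S) \given L = l} = \E{\exp(\gamma S) \given L = l} - 1$, and taking expectations over $L$ gives the identity. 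Substituting yields \cref{eq:gamma_boost_star} except for the assertion $C_{\boost{\gamma}} = C^*$.

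For $C_{\boost{\gamma}} = C^*$ I would prove both inequalities. The inequality $C^* \le C_{\boost{\gamma}}$ is immediate from \cref{thm:optimal_batch_policy} with $\pi = \boost{\gamma}$: since $\E{\exp(\theta T_{\cheat{\theta}})} \le \E{\exp(\theta T_{\boost{\gamma}})}$ for all $\theta < \gamma$, multiplying by $(\gamma - \theta)/\gamma$ and taking $\liminf_{\theta \to \gamma}$ gives $C^* \le C_{\boost{\gamma}}$ by \cref{eq:model:tail_constant}. The reverse inequality is the technical core, and the idea is to re-run the limiting argument behind \cref{thm:general_transform} while tracking that \cheat{\theta} uses $b_\theta$, not $b_\gamma$. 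Write $B_\theta = b_\theta(L)$ and let $V_\theta$ be the corresponding crossing-work process (\cref{def:work_crossing}). The lower bound of \cref{lem:boost_response_time_bounds} gives $T_{\cheat{\theta}} \ge W - B_\theta + V_\theta(W) + S$. Fix $w \ge 0$; restricting to the event $W > w$, bounding $V_\theta(W) \ge V_\theta(w)$ there, and using independence of $W$, $(S, B_\theta)$, and $V_\theta(w)$ lower-bounds $\E{\exp(\theta T_{\cheat{\theta}})}$ by $\E{\exp(\theta W)\1(W > w)}\cdot\E{\exp(\theta(S - B_\theta))}\cdot\E{\exp(\theta V_\theta(w))}$. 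As $\theta \to \gamma$: the first factor times $(\gamma - \theta)/\gamma$ tends to $C_W$ by \cref{eq:model:tail_constant_work} together with $\E{\exp(\theta W)\1(W \le w)} \le \exp(\gamma w)$; the second equals $\E{\exp(\theta S)} - 1$ by the conditional identity above with $\gamma$ replaced by $\theta$, and tends to $\E{\exp(\gamma S)} - 1$ by monotone convergence; and the third equals $\exp(\lambda\E{(\exp(\theta S') - 1)\min\{B'_\theta, w\}})$ by \cref{lem:work_crossing_expression} (which applies here without separately checking \cref{eq:model:boost_okay} for $b_\theta$, since the time window is truncated at $w$ and the expression is then automatically finite), and tends to $\exp(\lambda\E{(\exp(\gamma S') - 1)\min\{b_\gamma(L'), w\}})$ by dominated convergence with integrable majorant $(\exp(\gamma S') - 1)w$, using the pointwise convergence $b_\theta(l) \to b_\gamma(l)$ that follows from $\E{\exp(\theta S) \given L = l} \to \E{\exp(\gamma S) \given L = l}$ for almost every $l$. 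This gives $C^* \ge C_W(\E{\exp(\gamma S)} - 1)\exp(\lambda\E{(\exp(\gamma S') - 1)\min\{b_\gamma(L'), w\}})$ for every $w$; letting $w \to \infty$ (monotone convergence, with finiteness from \cref{lem:optimal_work_crossing_finite}) yields $C^* \ge C_{\boost{\gamma}}$, completing the first part.

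For the second part, \cref{thm:optimal_batch_policy} gives $\E{\exp(\theta T_{\cheat{\theta}})} \le \E{\exp(\theta T_\pi)}$ for every policy $\pi$ and every $\theta < \gamma$, so $C^* \le \liminf_{\theta \to \gamma}\frac{\gamma - \theta}{\gamma}\E{\exp(\theta T_\pi)} = C_\pi$, the latter equality holding by the final value theorem \citep[Theorem~4.3]{grosof_nudge_2021} since, under the class~I assumption, the relevant response-time tails are asymptotically exponential. With the first part this gives $C_{\boost{\gamma}} = C^* \le C_\pi$ for every $\pi$, i.e.\ \boost{\gamma} attains the least possible tail constant. Since \cref{eq:model:tail_constant} (via the same final value theorem) gives $\P{T_{\boost{\gamma}} > t} \sim C_{\boost{\gamma}}\exp(-\gamma t)$, for any competing $\pi'$ one gets $\limsup_{t \to \infty}\P{T_{\boost{\gamma}} > t}/\P{T_{\pi'} > t} = C_{\boost{\gamma}}/C_{\pi'} \le 1$, with equality at $\pi' = \boost{\gamma}$; hence the supremum in \cref{def:tail-optimal} equals $1$.

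The main obstacle is the inequality $C^* \ge C_{\boost{\gamma}}$: because the policy \cheat{\theta} defining $C^*$ changes with $\theta$, one cannot simply quote \cref{thm:general_transform} and must instead redo its limiting argument while simultaneously sending $b_\theta \to b_\gamma$, which forces the dominated-convergence bookkeeping above and a careful interchange of the $\theta \to \gamma$ and $w \to \infty$ limits. A secondary delicate point is the passage from the moment-generating-function characterization of the tail constant to the literal $\limsup$-of-ratios in \cref{def:tail-optimal}, which relies on asymptotic exponentiality of the competing response-time tails.
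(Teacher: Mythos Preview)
Your approach is essentially the paper's: get the closed form from \cref{thm:general_transform}, get $C^* \le C_{\boost{\gamma}}$ from \cref{thm:optimal_batch_policy}, and get $C^* \ge C_{\boost{\gamma}}$ by re-running the lower-bound half of \cref{thm:general_transform} on $T_{\cheat{\theta}}$ while letting the boost function vary with~$\theta$.

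There is one genuine gap. Part~(a) is stated for an \emph{arbitrary} label--size pair distribution $(L,S)$, but your proof of $C^* \le C_{\boost{\gamma}}$ invokes \cref{thm:optimal_batch_policy}, which is proved only in the full-information setting (see \cref{rmk:partial-information_hard}). In the partial-information setting you have not established $C^* \le C_{\boost{\gamma}}$. The paper fills this gap by applying the \emph{upper} bound of \cref{lem:boost_response_time_bounds} to $T_{\cheat{\theta}}$ and redoing the upper-bound limit computation from \cref{thm:general_transform}, obtaining $C^* \le C_W\,\E{\exp(\gamma(S-\min\{B_\gamma,u\}))}\,\E{\exp(\gamma V_\gamma(\infty))}$ for every $u$, whence $C^* \le C_{\boost{\gamma}}$ as $u \to \infty$. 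You should add this argument (or restrict part~(a) to $L=S$).

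One technical difference worth noting: for $C^* \ge C_{\boost{\gamma}}$, the paper first invokes \cref{lem:optimal_boost_monotonicity} to replace $V_\theta(w)$ by $V_\gamma(w)$ \emph{before} taking $\theta \to \gamma$, and then again to justify monotone convergence for $\E{\exp(\theta(S-B_\theta))}$. You instead compute $\E{\exp(\theta(S-B_\theta))} = \E{\exp(\theta S)} - 1$ exactly for each $\theta$ (so that factor needs no interchange at all) and handle $\E{\exp(\theta V_\theta(w))}$ by dominated convergence with majorant $(\exp(\gamma S')-1)w$. This is a clean alternative that makes \cref{lem:optimal_boost_monotonicity} unnecessary for this direction. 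Your flagged ``secondary delicate point'' about passing from the moment-generating-function inequality to the $\limsup$-of-ratios in \cref{def:tail-optimal} is real but is treated at the same level of rigor in the paper; neither proof addresses potential oscillation of $e^{\gamma t}\P{T_{\pi'}>t}$ beyond the footnote after \cref{eq:model:tail_constant}.
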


\begin{remark}
    While the strong tail optimality result is only for the full-information setting, the definition of $C^*$ and the fact that $C_{\boost{\gamma}} = C^*$ extend to the partial-information setting. Of course, the details of what the labels are affects the value of~$C^*$, with the minimum occurring for the full-information setting.

    However, based on \cref{rmk:partial-information_hard}, we conjecture that in the partial-information setting, $C^*$ is (a lower bound on) the optimal tail constant achievable with \emph{nonpreemptive} policies that have access to only labels and arrival times. \Cref{thm:strong_tail_optimality} would then imply that \boost{\gamma} achieves this optimal tail constant. As evidence for this conjecture, we show in \cref{sec:partial-information_boost_optimality} that \boost{\gamma} outperforms all other versions of \boost{}, and we show in \cref{sec:nudge_vs_boost} that \boost{\gamma} outperforms Nudge-M.
\end{remark}


Before proving \cref{thm:strong_tail_optimality}, we introduce some notation for working with the $\theta$-optimal boost function, analogous to the notation used in \cref{sec:analysis}:
\* Recall that $b_\theta$ is the boost function given in \cref{eq:intro:optimal_boost}.
\* We write $B_\theta = b_\theta(L)$ to mean the boost of the tagged job using boost function~$b_\theta$.
\** Similarly $B'_\theta = b_\theta(L')$ is the boost of a generic future arrival.
\* We write $V_\theta(u)$ to refer to the crossing work using boost function~$b_\theta$.
\*/

Recall that for the constant to be well-defined, we require the crossing work $V(\infty)$ to have finite moment generating function. By \cref{lem:work_crossing_expression}, this amounts to showing \cref{eq:model:boost_okay}. We therefore first show that \cref{eq:model:boost_okay} holds for the $\gamma$-optimal boost function~$b_\gamma$.


\begin{lemma}\label{lem:optimal_work_crossing_finite}
    For all $\theta > 0$, for all labels $l \in \bbL$,
    \[
        \E{B_\theta' \gp*{\exp\gp{\theta S'} - 1}} < \frac{1}{\theta}.
    \]
    In particular, taking $\theta = \gamma$ implies the $\gamma$-optimal boost function satisfies \cref{eq:model:boost_okay}.
\end{lemma}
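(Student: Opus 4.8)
The plan is to prove the bound label-by-label and then integrate over $L'$. Conditioning the tagged arrival on $L' = l$ turns the boost $B_\theta' = b_\theta(L')$ into the constant $b_\theta(l)$, so the conditional expectation factors:
\[
    \E{B_\theta' \gp*{\exp\gp{\theta S'} - 1} \given L' = l} = b_\theta(l)\gp*{m_l - 1}, \qquad m_l := \E{\exp\gp{\theta S} \given L = l}.
\]
Because $S > 0$ almost surely we have $m_l > 1$, so $b_\theta(l) \in (0, \infty)$ whenever $m_l < \infty$ (and in the degenerate case $m_l = \infty$ one reads $b_\theta(l) = 0$, making the quantity $0$); assume henceforth $m_l < \infty$. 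Substituting the formula for $b_\theta$ from \cref{def:boost_theta}, the desired per-label bound $b_\theta(l)(m_l - 1) < 1/\theta$ is equivalent to the elementary inequality $\gp{m_l - 1}\log\frac{m_l}{m_l - 1} < 1$.

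To establish that, I would set $x = m_l - 1 \in (0, \infty)$ and rewrite the left side as $x\log\gp{1 + 1/x}$; applying $\log(1 + y) < y$ (valid for every $y > 0$) with $y = 1/x$ gives $x\log\gp{1 + 1/x} < x \cdot \tfrac{1}{x} = 1$. Hence $\E{B_\theta'\gp{\exp(\theta S') - 1} \given L' = l} < 1/\theta$ for every label $l$.

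Finally I would take the expectation over $L'$. Since the per-label bound is uniform in $l$, the random variable $1/\theta - B_\theta'\gp{\exp(\theta S') - 1}$ is nonnegative and strictly positive almost surely, hence has strictly positive mean, giving $\E{B_\theta'\gp{\exp(\theta S') - 1}} < 1/\theta$. Specializing $\theta = \gamma$ yields $\E{b_\gamma(L)\gp{\exp(\gamma S) - 1}} < 1/\gamma < \infty$, which is precisely \cref{eq:model:boost_okay}. I do not expect any real obstacle: the content is the inequality $\log(1+y) < y$, and the only subtlety is confirming that strictness is preserved under the expectation, which follows from the uniformity of the per-label bound.
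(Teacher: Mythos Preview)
Your proposal is correct and follows essentially the same route as the paper: condition on the label so that the boost becomes a constant, substitute the definition of $b_\theta$, and reduce the per-label bound to the elementary inequality $x\log(1+1/x)<1$ with $x=\E{\exp(\theta S')\mid L'}-1$. The paper's proof is terser (it does not spell out the $m_l=\infty$ case or the preservation of strictness under expectation), but the argument is identical in substance.
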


\begin{proof}
    Plugging in $B_\theta' = b_\theta(L')$ (\cref{def:boost_theta}) and rearranging, it suffices to show that with probability~$1$,
    \[
        \label{eq:optimal_boost_work_crossing_bound_step}
        \gp[\big]{\E{\exp\gp{\theta S'} \given L'} - 1} \log \frac{\E{\exp\gp{\theta S'} \given L'}}{\E{\exp\gp{\theta S'} \given L'} - 1} < 1.
    \]
    Letting $x = \E{\exp\gp{\theta S'} \given L'} - 1$, this holds because $x \log (1 + 1/x) \leq 1$ for all $x > 0$.
\end{proof}

Given \cref{lem:optimal_work_crossing_finite}, we can now employ \cref{thm:general_transform, lem:boost_response_time_bounds}. The last ingredient we need to prove \cref{thm:strong_tail_optimality} is to understand how the $\theta$-optimal boost function compares to the $\gamma$-optimal boost function. We observe below that as a function of~$\theta$, the $\theta$-optimal boost is actually monotonic, with larger~$\theta$ yielding smaller boosts.

\begin{lemma}\label{lem:optimal_boost_monotonicity}
    For all labels $l \in \bbL$, both $b_\theta(l)$ and $\theta b_\theta(l)$ are decreasing as functions of~$\theta > 0$.
\end{lemma}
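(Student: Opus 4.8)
The plan is to reduce the whole statement to the monotonicity of a single scalar function. Fix a label $l \in \bbL$ and write $M_l(\theta) = \E{\exp(\theta S) \given L = l}$ for the conditional moment generating function of the job size given label~$l$. Since $S > 0$ almost surely (a standing assumption from \cref{sec:model}), the map $\theta \mapsto \exp(\theta S)$ is pointwise strictly increasing on $(0,\infty)$, so $M_l$ is strictly increasing there, with $M_l(\theta) > M_l(0) = 1$; it may become $+\infty$ past some threshold, in which case we read $b_\theta(l) = 0$ off the natural limit $\frac{M_l(\theta)}{M_l(\theta)-1} \to 1$.

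The key step is to rewrite the scaled boost transparently. Directly from \cref{def:boost_theta},
\[
    \theta \, b_\theta(l) = \log \frac{M_l(\theta)}{M_l(\theta) - 1} = -\log\gp[\big]{1 - M_l(\theta)^{-1}}.
\]
Because $M_l$ is increasing, $M_l(\theta)^{-1}$ is decreasing, so $1 - M_l(\theta)^{-1} \in (0,1)$ is increasing, and hence $-\log\gp{1 - M_l(\theta)^{-1}}$ is decreasing in~$\theta$. This settles the claim for $\theta \mapsto \theta\, b_\theta(l)$, with strict monotonicity wherever $M_l$ is finite.

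For $b_\theta(l)$ itself, I would simply note that it is the product of $\theta \mapsto 1/\theta$, which is positive and decreasing, with $\theta \mapsto \theta\, b_\theta(l)$, which we have just shown is nonnegative and decreasing; a product of two nonnegative decreasing functions on $(0,\infty)$ is decreasing. (Equivalently, writing $g(\theta) = \theta\, b_\theta(l) \geq 0$, one has $\frac{d}{d\theta}\gp{g(\theta)/\theta} = \gp{\theta g'(\theta) - g(\theta)}/\theta^2 \le 0$ since $g' \le 0 \le g$.)

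There is essentially no obstacle here: the only two points that need a word of care are the convention at labels where $M_l(\theta) = \infty$, and the elementary observation that $\exp(\theta S)$ is increasing in~$\theta$ precisely because $S > 0$. Everything else is a one-line composition-of-monotone-functions argument, so the lemma is short once the rewrite in the display above is in hand.
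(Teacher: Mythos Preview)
Your proof is correct and follows essentially the same approach as the paper: rewrite $\theta\, b_\theta(l) = -\log\bigl(1 - 1/\E{\exp(\theta S)\mid L=l}\bigr)$, read off monotonicity from that of the conditional MGF, and then deduce monotonicity of $b_\theta(l)$ from that of $\theta\, b_\theta(l)$ via the product with $1/\theta$. The paper's version is just terser, omitting the derivative check and the remark on the $M_l(\theta)=\infty$ convention.
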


\begin{proof}
    It suffices to prove that $\theta b_\theta(l)$ is decreasing, which follows from writing it as
    \[
        \theta b_\theta(l) = -\log \gp*{1 - \frac{1}{\E{\exp(\theta S) \given L = l}}}.
        \qedhere
    \]
\end{proof}

\begin{proof}[Proof of \cref{thm:strong_tail_optimality}.]
    We first note that the expression on the right-hand side of \cref{eq:gamma_boost_star} follows from \cref{thm:general_transform} and plugging in \cref{def:boost_theta} with $\theta = \gamma$. Specifically, the second factor from \cref{thm:general_transform} simplifies to
    \[
        \E{\exp(\gamma (S - B_\gamma))}
        &= \E{\E{\exp(\gamma S) \given L} \, \exp(-\gamma b_\gamma(L))} \\
        &= \E*{\E{\exp(\gamma S) \given L} \cdot \frac{\E{\exp(\gamma S) \given L} - 1}{\E{\exp(\gamma S) \given L}}} \\
        &= \E{\exp(\gamma S)} - 1.
    \]
    Also, note that in the full-information setting, $C^* \leq C_\pi$ for all policies~$\pi$, so strong tail optimality is implied by $C_{\boost{\gamma}} = C^*$.
    
    It thus remains only to compute $C^*$ to confirm that $C_{\boost{\gamma}} = C^*$. To do so, we consider a system using the \cheat{\theta} policy, bound $\E{\exp(\theta T_{\cheat{\theta}})}$, then compute the $\theta \to \gamma$ limit. We use the analysis from \cref{sec:analysis:tagged_job} with boost function~$b_\theta$ (\cref{def:boost_theta}).
    
    From \cref{lem:boost_response_time_bounds}, we have $T_\cheat{\theta} \geq W - B_\theta + V_\theta(W) + S$. We can lower bound this further using \cref{lem:optimal_boost_monotonicity}: if we were to change \emph{other jobs' boosts} from $B_\theta'$ to $B_\gamma'$, it would only improve the tagged job's response time. This means $V_\theta(w) \geq V_\gamma(w)$ for all~$w \geq 0$, so
    \[
        T_\cheat{\theta} \geq W - B_\theta + V_\gamma(W) + S.
    \]
    Therefore, for all~$w \geq 0$,
    \[
        \E*{\exp\gp*{\theta T_\cheat{\theta}}} &\geq \E*{\exp\gp*{\theta \gp*{W - B_\theta + V_\gamma(W) + S}}}\\
        &\geq \E*{\exp\gp*{\theta \gp{W - B_\theta + V_\gamma(W) + S}} \, \1\gp*{W > w}} \\
        &\geq \E*{\exp\gp*{\theta \gp{W - B_\theta + V_\gamma(w) + S}} \, \1\gp*{W > w}} \\
        &= \E{\exp\gp{\theta W} \, \1\gp*{W > w}} \, \E{\exp\gp{\theta \gp{S - B_\theta}}} \, \E{\exp\gp{\theta V_\gamma(w)}}.
    \]
    Taking the $\theta \to \gamma$ limit and applying the monotone convergence theorem, which applies thanks to the monotonicity of $\theta B_\theta$ (\cref{lem:optimal_boost_monotonicity}), yields
    \[
        \MoveEqLeft
        \liminf_{\theta\to\gamma}{} (\gamma - \theta) \E{\exp\gp{\theta T_\cheat{\theta}}} \\
        &\geq \liminf_{\theta\to\gamma}{} (\gamma - \theta) \E{\exp\gp*{\theta W}\1\gp*{W > w}} \, \E{\exp\gp{\theta \gp{S - B_\theta}}} \, \E{\exp\gp{\theta V_\gamma(w)}} \\
        &= \gp*{\lim_{\theta\to\gamma}{} (\gamma - \theta) \E{\exp\gp*{\theta W}\1\gp*{W > w}}} \gp*{\lim_{\theta\to\gamma} \E{\exp\gp{\theta \gp{S - B_\theta}}}} \gp*{\lim_{\theta\to\gamma}\E{\exp\gp{\theta V_\gamma(w)}}} \\
        \label{eq:optimality:almost_there}
        &= \gamma C_W\E{\exp\gp{\gamma\gp{S - B_\gamma}}}\E{\exp\gp{\gamma V_\gamma(w)}},
    \]
    where the second line follows from \cref{eq:model:tail_constant_work} and the fact that
    \[
        \lim_{\theta\to\gamma}{} \gp{\gamma - \theta} \E{\exp\gp{\theta W} \, \1(W \leq w)}
        \leq \lim_{\theta\to\gamma}{} \gp{\gamma - \theta} \exp(\gamma w)
        = 0.
    \]
    Finally, as in the proof of \cref{thm:general_transform}, we observe that because \cref{eq:optimality:almost_there} holds for all~$w$, it holds in the $w \to \infty$ limit, yielding
    \[
        C^* \geq C_W\E*{\exp\gp*{\gamma\gp*{S - B_\gamma}}}\E*{\exp\gp*{\gamma V_\gamma(\infty)}}.
    \]
    The right-hand side is exactly the tail constant~$C_\boost{\gamma}$ from \cref{thm:general_transform}, so $C^* \geq C_\boost{\gamma}$.

    It remains only to show $C^* \leq C_\boost{\gamma}$. In the full-information setting, this is immediate from \cref{thm:optimal_batch_policy}, which implies $C^* \leq C_\pi$ for all policies~$\pi$. In the partial-information setting, \cref{thm:optimal_batch_policy} does not apply, but by following essentially the same steps as the upper bound in \cref{thm:general_transform}, we obtain
    \[
        \gamma C^*
        &\leq
        \gp*{\lim_{\theta\to\gamma}{} (\gamma - \theta) \E{\exp\gp{\theta W}}}
            \gp*{\lim_{\theta\to\gamma}{} \E{\exp\gp{\theta\gp{S - \min\{B_\theta,u\}}}}}
            \gp*{\lim_{\theta\to\gamma}{} \E{\exp\gp{\theta V_{\theta}\gp{\infty}}}} \\
        &= \gamma C_W\E{\exp\gp{\gamma\gp{S - \min\{B_\gamma,u\}}}} \, \E{\exp\gp{\gamma V_{\gamma}(\infty)}}.
    \]
    This becomes $\gamma C_{\boost{\gamma}}$ in the $u \to \infty$ limit, as desired. The main difference from \cref{thm:general_transform} is that the second and third limits above involve a boost function that varies with~$\theta$. We compute the second limit with \cref{lem:optimal_boost_monotonicity} and the monotone convergence theorem, and we compute the third limit with \cref{lem:work_crossing_expression} and the bounded convergence theorem, which applies thanks to~\cref{eq:optimal_boost_work_crossing_bound_step}.
\end{proof}



    

\section{Simulations}
\label{sec:simulation}
We have shown that \boost{\gamma} achieves strong tail optimality, which is an asymptotic property. However, there remain unanswered questions about \boost{\gamma} that are important to practitioners. In this section, we explore the questions below via simulations. We focus by default on the full-information setting, but we address some partial-information settings in the last two questions.
\* (\cref{sec:simulation:is_it_good}) How well does \boost{\gamma} perform in practical regimes? Does it do as well as one would predict from its tail constant~$C_\boost{\gamma}$?
\* (\cref{sec:simulation:vs_others}) How does \boost{\gamma} compare to Nudge and SRPT?
\* (\cref{sec:simulation:variation}) In what settings does \boost{\gamma} offer the largest benefit? What role does the variance of the job size distribution play?
\* (\cref{sec:simulation:robustness}) Is \boost{\gamma} robust to misspecification, such as being given the wrong value of~$\gamma$ or noisily estimated job sizes?
\* (\cref{sec:simulation:unknown_sizes}) How well does \boost{\gamma} perform in the partial-information setting, where we have much coarser information about jobs' sizes?
\*/

\subsection{Boost in the full-information setting}
\label{sec:simulation:is_it_good}

\begin{figure}
  \begin{subfigure}{\subfigwidth}
    \includegraphics[width=\linewidth]{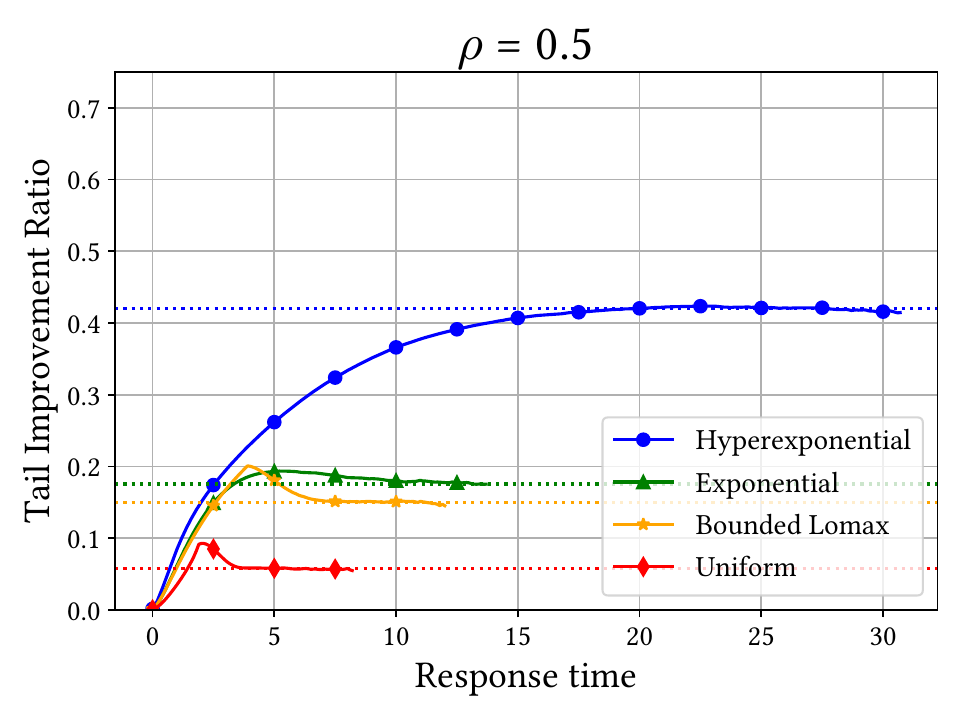}
    \Description{Graph showing tail improvement ratio of $\gamma$-Boost for four job size distributions: Hyperexponential, Exponential, Bounded Lomax, and Uniform. $x$ axis is response time (range 0 to 200), $y$ axis is tail improvement ratio (range 0 to 0.7). Hyperexponential: starts at y = 0, then slowly rises to an asymptote at about y = 0.42. Each other curve starts at y = 0, peaks early, then reduces as it converges to an asymptote. Exponential: peaks at about y = 0.2, asymptotes to about y = 0.18. Bounded Lomax: peaks at about y = 0.2, asymptotes to about y = 0.15. Uniform: peaks at about y = 0.1, asymptotes to about y = 0.05.}%
  \end{subfigure}%
  \hfill
  \begin{subfigure}{\subfigwidth}
    \includegraphics[width=\linewidth]{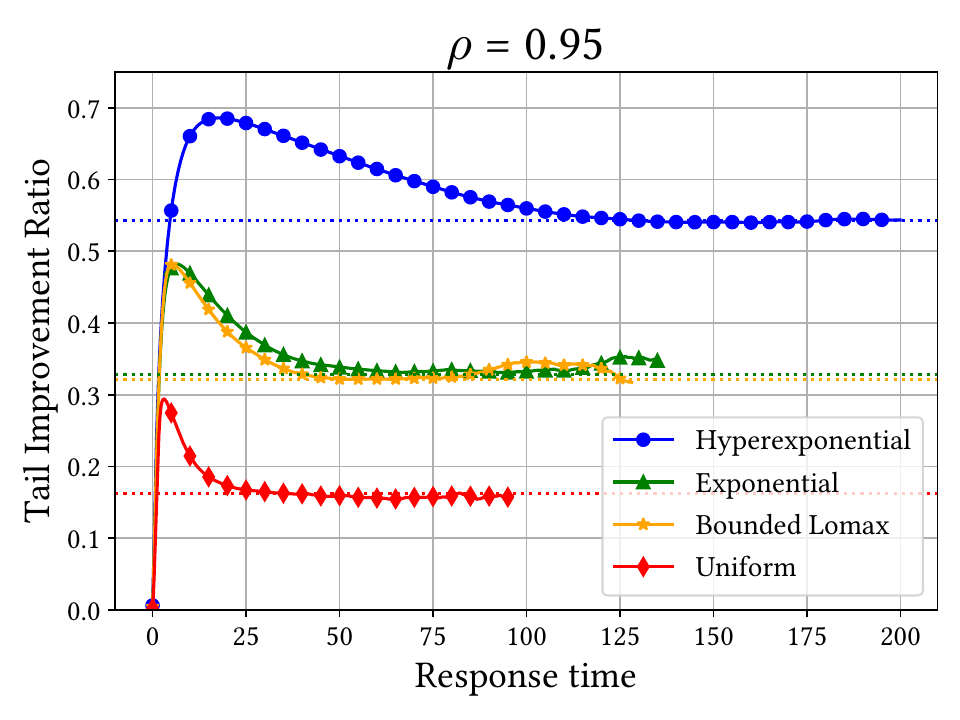}
    \Description{Graph showing tail improvement ratio of $\gamma$-Boost for four job size distributions: Hyperexponential, Exponential, Bounded Lomax, and Uniform. $x$ axis is response time (range 0 to 200), $y$ axis is tail improvement ratio (range 0 to 0.7). Each curve starts at y = 0, peaks early, then reduces as it converges to an asymptote. Hyperexponential: peaks at about y = 0.68, asymptotes to about y = 0.54. Exponential: peaks at about y = 0.48, asymptotes to about y = 0.33. Bounded Lomax: peaks at about y = 0.48, asymptotes to about y = 0.32. Uniform: peaks at about y = 0.29, asymptotes to about y = 0.16.}%
  \end{subfigure}%
  \caption{Empirical TIR of \boost{\gamma} over FCFS for several job size distributions~$S$, each with mean $\E{S} = 1$, at loads $\rho = 0.5, 0.95$. See \cref{fig:intro:performance:distributions} for $\rho = 0.8$. We use the same distributions as \citet[Fig.~2]{grosof_nudge_2021}, which are: Uniform($0$,$2$), Exponential, Hyperexponential with branches drawn from Exp($2$) and Exp($1/3$) and first branch probability $0.8$, and BoundedLomax with shape parameter $\alpha = 2$ and upper bound~$4$. The asymptotic TIR is computed with \cref{thm:general_transform} and plotted as a same color dotted line for each distribution. Simulations run with 50 million arrivals.}
  \label{fig:boost_vs_load}
\end{figure}

In \cref{fig:boost_vs_load}, we evaluate the performance of the optimal boost policy on common distributions by looking at the empirical Tail Improvement Ratio (TIR) with respect to \fcfs, that is, by looking at
    $\mathrm{TIR}(t) = 1 - {\P{T_\boost{\gamma} > t}}/{\P{T_\fcfs > t}}$.
\boost{}'s performance improves upon \fcfs's across a variety of job size distributions and loads. In all tested distributions, \boost{\gamma} achieves asymptotic performance equivalent to the TIR that our theory suggests. Moreover, these improvements can be significant: \boost{\gamma} improves the tail constant in the Exponential case by roughly 30\%, and improvements exceed 50\% for the Hyperexponential distribution case. Another observation of note is that across all loads and distributions, not only does \boost{\gamma} achieve the asymptotic performance suggested by theory, it also improves \emph{stochastically} over \fcfs, meaning $\mathrm{TIR}(t) > 0$ for all $t > 0$, not just the $t \to \infty$ limit. Moreover, in many cases the ``pre-asymptotic'' improvement actually exceeds the asymptotic TIR.

\subsection{Boost compared to other policies}
\label{sec:simulation:vs_others}

\begin{figure}
  \centering
  \begin{subfigure}{\subfigwidth}
    \includegraphics[width=\linewidth]{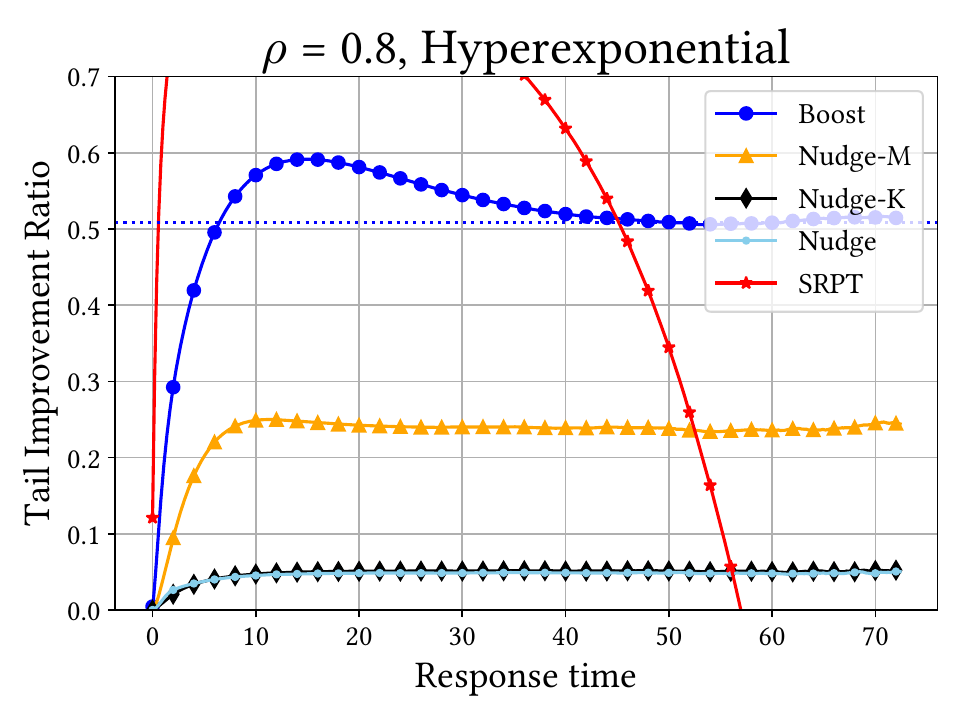}
    \Description{Graph showing tail improvement ratio of $\gamma$-Boost, Nudge-M, Nudge-K, Nudge, and SRPT for the Hyperexponential job distribution: $x$ axis is response time (range 0 to 80), $y$ axis is tail improvement ratio (range 0 to 0.7). SRPT's curve starts at y = 0, peaks above y = 0.7, then decays to negative infinity, going below y = 0 at x = 55. For the other policies, each of their curves starts at y = 0, peaks, then reduces as it converges to an asymptote. Boost: peaks at about y = 0.6, asymptotes to about y = 0.5. Nudge-M: peaks at about y = 0.25, asymptotes to about y = 0.25. Nudge-K: peaks at about y = 0.05, asymptotes to about y = 0.05. Nudge: peaks at about y = 0.05, asymptotes to about y = 0.05.}
  \end{subfigure}%
  \hfill
  \begin{subfigure}{\subfigwidth}
    \includegraphics[width=\linewidth]{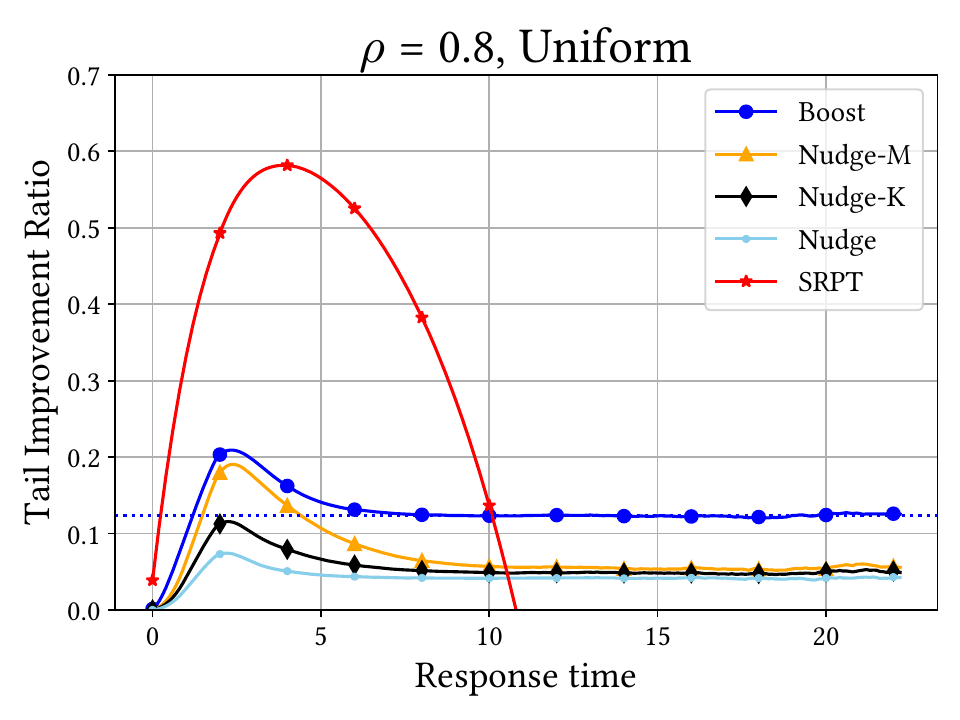}
    \Description{Graph showing tail improvement ratio of $\gamma$-Boost, Nudge-M, Nudge-K, Nudge, and SRPT for the Uniform job distribution: $x$ axis is response time (range 0 to 25), $y$ axis is tail improvement ratio (range 0 to 0.7). SRPT's curve starts at y = 0, peaks at y = 0.6, then decays to negative infinity, going below 0 around x = 11. For the other policies, each of their curves starts at y = 0, peaks, then reduces as it converges to an asymptote. Boost: peaks at about y = 0.2, asymptotes to about y = 0.12. Nudge-M: peaks at about y = 0.22, asymptotes to about y = 0.06. Nudge-K: peaks at about y = 0.12, asymptotes to about y = 0.06. Nudge: peaks at about y = 0.08, asymptotes to about y = 0.05.}
  \end{subfigure}


  \caption{Comparison of empirical TIR of \boost{\gamma} against FCFS and Nudge, for two job size distributions, each with mean $1$, and with the same settings as in \citet[Fig. 2]{grosof_nudge_2021}. The left is Hyperexponential with branches drawn from Exp($2$) and Exp($1/3$), with first branch probability $0.8$ and the right is Uniform($0$,$2$). See \cref{fig:intro:performance:policies} for exponential. In each plot, the dotted horizontal line represents \boost{\gamma}'s asymptotic TIR for the respective distribution. For Hyperexponential, we set~$K$ to the optimal value of~$8$ for Nudge-K/M, with type-$1$ and type-$2$ jobs set to jobs coming from the Exp($2$) and Exp($1/3$) branches respectively. (Nudge-K does perform slightly better than Nudge in this case, though this is barely visible on the plot.) For Uniform, we use the same small-large split as Nudge, where type-$1$ jobs are small (smaller than the mean of the distribution) and type-$2$ jobs are large, and set~$K$ to the optimal value of~$3$ for Nudge-K/M. Simulations run with 50 million arrivals.}
  \label{fig:boost_vs_other_policies}
\end{figure}

We evaluate the performance of boost policies against other policies, namely against Nudge, which is known to have better stochastic performance than \fcfs, and SRPT, which is tail-pessimal for Class~I distributions. In \cref{fig:intro:performance:policies, fig:boost_vs_other_policies}, we compare the TIR for all three policies. Following best practices from \citet[Fig. 2, Section 9]{grosof_nudge_2021}, we set the small-large threshold for Nudge to be at $\E*{S}$, with no medium or extra-large split, and examine performance under a variety of common job size distributions. We find that \boost{\gamma} has larger asymptotic TIR than Nudge and SRPT across all tested job size distributions. Moreover, \boost{\gamma} is stochastically better than Nudge across the distributions tested.

\subsection{Variation matters: how CoV affects asymptotic performance}
\label{sec:simulation:variation}

\begin{figure}
  \centering
  \begin{subfigure}{\subfigwidth}
    \includegraphics[width=\linewidth]{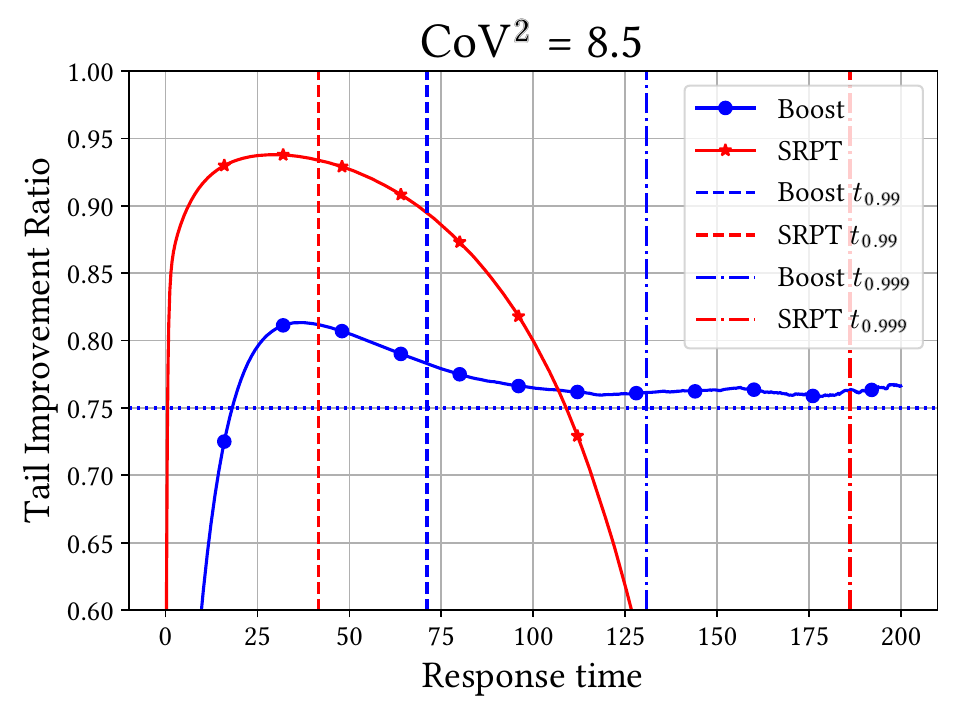}
    \Description{Graph showing tail improvement ratio of $\gamma$-Boost and SRPT. $x$ axis is response time (range 0 to 200), $y$ axis is tail improvement ratio (range 0.6 to 1). $\gamma$-Boost: enters the range at about x = 12, peaks at about 0.81, then asymptotes to about 0.75. SRPT: enters the range at about x = 0, peaks at about y = 0.94, has a broad peak, then reduces sharply, exiting the range about x = 125. Dotted lines show the 99th and 99.9th response time percentiles of both policies. SRPT's 99th percentile (about x = 40) is less than $\gamma$-Boost's (about x = 70), but $\gamma$-Boost's 99.9th percentile (about x = 130) is less than SRPT's (about x = 185).}%
  \end{subfigure}%
  \hfill
  \begin{subfigure}{\subfigwidth}
    \includegraphics[width=\linewidth]{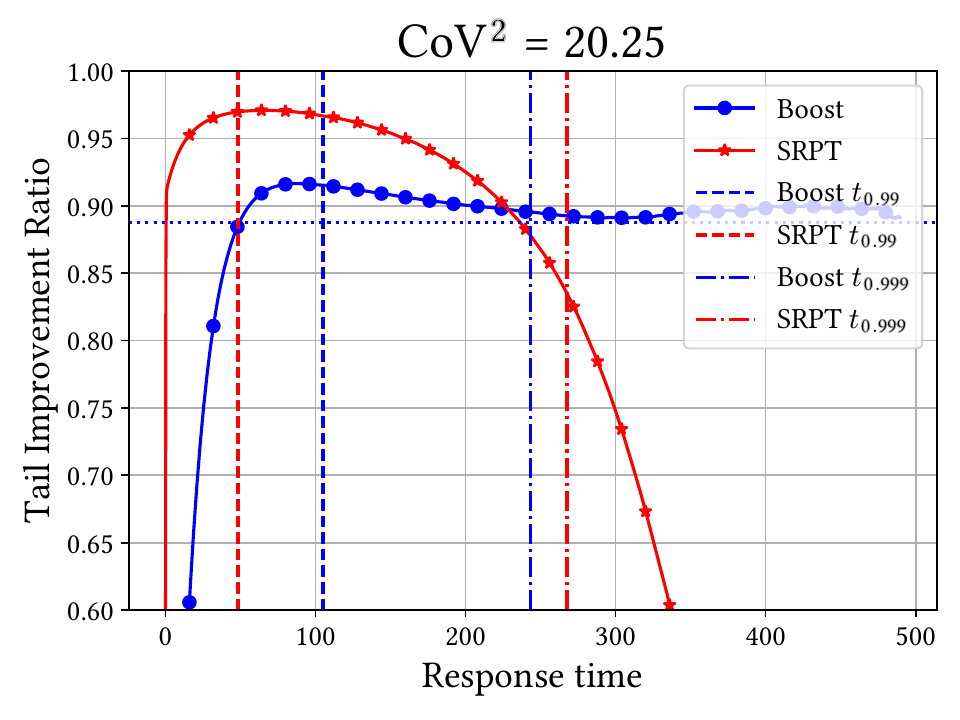}
    \Description{Graph showing tail improvement ratio of $\gamma$-Boost and SRPT. $x$ axis is response time (range 0 to 500), $y$ axis is tail improvement ratio (range 0.6 to 1). $\gamma$-Boost: enters the range at about x = 20, peaks at about 0.92, then asymptotes to about 0.89. SRPT: enters the range at about x = 0, peaks at about y = 0.97, has a broad peak, then reduces sharply, exiting the range about x = 335. Dotted lines show the 99th and 99.9th response time percentiles of both policies. SRPT's 99th percentile (about x = 50) is less than $\gamma$-Boost's (about x = 105), but $\gamma$-Boost's 99.9th percentile (about x = 245) is less than SRPT's (about x = 265).}
  \end{subfigure}
  \caption{A comparison of \boost{\gamma} and SRPT at high $\mathrm{CoV}$. In both plots, the distributions considered are Hyperexponential, mean $1$ distributions. For $\mathrm{CoV}^2 = 8.5$, we choose parameters Exp(4), Exp(1/6), and a first-branch probability of $p = 20/23$. For $\mathrm{CoV}^2 = 20.25$ we choose parameters Exp(8), Exp(1/12), and a first-branch probability of $p = 88/95$. Load is $\rho = 0.8$. Dashed vertical lines mark the $t_{0.99}$ response times of the two policies, and dash-dotted vertical lines mark the $t_{0.999}$ response time. The dotted horizontal line represents the theoretical asymptotic TIR for \boost{\gamma}. Observe how SRPT has lower $t_{0.99}$ response time but higher $t_{0.999}$ response time than \boost{\gamma}. Simulations run with 50 million arrivals.}
  \label{fig:c2_plots}
\end{figure}

What is important when scheduling for the tail in distributions? If job size distributions are highly variable, tail-pessimal policies for class~I distributions that are optimal for heavy-tailed distributions may still perform well for all but the highest tail percentiles. In particular, SRPT demonstrates strong performance for all but the highest tail percentiles. Therefore, while \boost{\gamma} is asymptotically optimal, a practitioner's choice of policy depends on how sensitive they are to tail response times and the properties of the job size distribution they face. In particular, one may care about tail percentiles that are ``pre-asymptotic''. In \cref{fig:c2_plots}, we show how this ``pre-asymptotic'' performance can depend on the variability of the job size distribution. The main takeaway is that for job size distributions with high variability, SRPT has great performance unless one cares about extremely high threshold (e.g. $t_{0.999}$) response times.


\subsection{Robustness of Boost}
\label{sec:simulation:robustness}

\begin{figure}
  \centering
  \begin{subfigure}{\subfigwidth}
    \includegraphics[width=\linewidth]{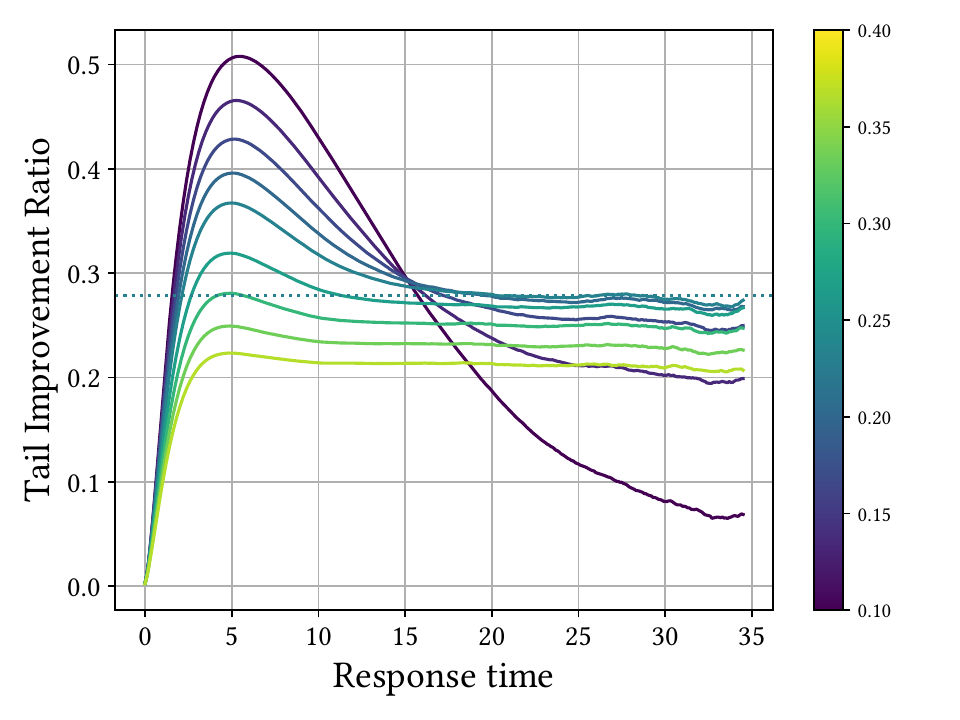}
    \caption{Sensitivity to misspecified $\gamma$.}
    \label{fig:misspecified_gamma}
    \Description{Graph showing tail improvement ratio of $\gamma$-Boost for different gamma parameters, ranging from 0.1 to 0.4: $x$ axis is response time (range 0 to 35), $y$ axis is tail improvement ratio (range 0 to 0.6). All curves peak at x = 5 and then asymptote after. Gamma between 0.1 to 0.15: the curves peak between y = 0.45 to 0.5. Higher gamma have lower peaks. The curves decay to an asymptote between y = 0.08 to 0.2. Higher gamma have higher asymptote. Gamma between 0.15 and 0.2: the curves peak between y = 0.32 to 0.43. Higher gamma have lower peaks. The curves decay to an asymptote between y = 0.25 and 0.27. Higher gamma have higher asymptote. Gamma between y = 0.2 and 0.4: the curves peak between y = 0.22 to 0.32. Higher gamma have lower peaks. The curves decay to an asymptote between y = 0.27 and 0.2. Higher gamma have lower asymptote.}
  \end{subfigure}%
  \hfill
  \begin{subfigure}{\subfigwidth}
    \includegraphics[width=\linewidth]{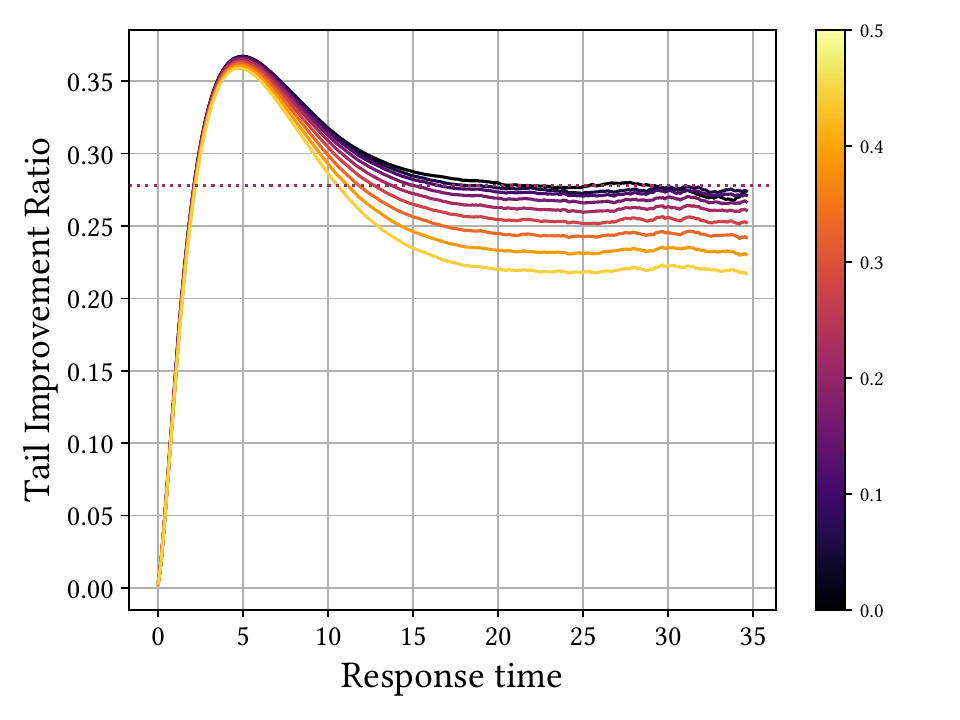}
    \caption{Sensitivity to noisy size estimates.}
    \label{fig:noisy_job_sizes}
    \Description{Graph showing tail improvement ratio of $\gamma$-Boost for different noise levels of job size estimates. The standard deviation ranges from 0 to 0.5: $x$ axis is response time (range 0 to 35), $y$ axis is tail improvement ratio (range 0 to 0.35). All curves peak at x = 5 and then asymptote after. Higher noise leads to lower asymptotes. The range of asymptote decreases from y = 0.27 to 0.22 as noise ranges from 0 to 0.5.}
  \end{subfigure}
  \caption{Performance of \boost{\gamma} with different types of noise. Job size distribution is Exponential with mean $1$, with $\rho = 0.8$. (a)~We plot the performance with misspecified $\gamma$. The theoretical asymptotic TIR is shown as a dotted horizontal line. We consider a range of $[\gamma/2, 2\gamma]$, where the optimal $\gamma = 0.2$. Observe that underestimating $\gamma$ leads to significantly worse TIR compared to overestimating $\gamma$. (b)~We plot the performance with noisy job size estimates. The noise is multiplicative and drawn i.i.d. for each job from a LogNormal distribution with mean $0$ and some standard deviation. We consider noise levels (i.e. standard deviations) ranging from $0$ to $0.5$. While performance is degraded from the theoretically optimal TIR as more noise is added, at $0.5$, the TIR is still above 20\%. In both (a) and~(b), simulations run with 50 million arrivals.}
  \label{fig:noise_plots}
\end{figure}

\boost{\gamma} sets a job's boost using its arrival time, size, and the decay rate parameter $\gamma$ for the job size distribution. In practice, one might only have access to noisy estimates of the job size distribution and of the job sizes.

If one does not know the exact job size distribution, then the parameter $\gamma$ may be misspecified. In this case, our takeaways suggest that one can conservatively set the decay rate parameter $\gamma$ higher than estimated to maintain good performance. For clarity, we denote the optimal policy as \boost{\gamma}, and assume that we set the parameter $\hat{\gamma}$ based on our noisy information. We can see in \cref{fig:misspecified_gamma} that using a conservative overestimate of the parameter, namely setting $\hat{\gamma} > \gamma$, has good performance. While $\hat{\gamma} < \gamma$ can still have good performance near $\gamma$, setting too low of a parameter leads to a faster decay in performance than setting too high of a parameter.

If one only has noisy estimates of the job sizes, using these estimates directly provides still good performance. In \cref{fig:noisy_job_sizes} we examine the sensitivity of \boost{\gamma} to noise in the labels. Using the noisy size as input to the boost policy maintains good performance, with TIR above 20\%.

\subsection{Boost in the partial-information setting}
\label{sec:simulation:unknown_sizes}

\begin{figure}
  \centering
  \begin{subfigure}{\subfigwidth}
    \includegraphics[width=\linewidth]{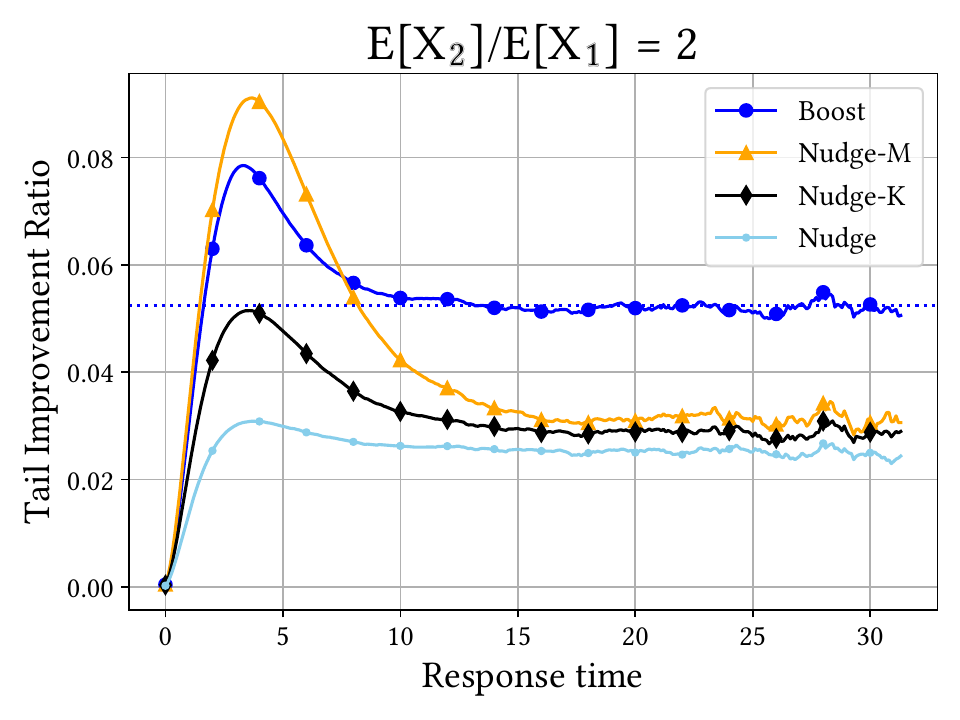}
    \Description{Graph showing tail improvement ratio of $\gamma$-Boost, Nudge-M, Nudge-K, and Nudge for the two-class Hyperexponential distribution with mean ratio 2: $x$ axis is response time (range 0 to 35), $y$ axis is tail improvement ratio (range 0 to 0.1). For all policies, each of their curves starts at y = 0, peaks around x = 4, then reduces as it converges to an asymptote. Boost: peaks at about y = 0.08, asymptotes to about y = 0.055. Nudge-M: peaks at about y = 0.09, asymptotes to about y = 0.03. Nudge-K: peaks at about y = 0.055, asymptotes to about y = 0.03, but below Nudge-M. Nudge: peaks at about y = 0.03, asymptotes to about y = 0.025.}
  \end{subfigure}%
  \hfill
  \begin{subfigure}{\subfigwidth}
    \includegraphics[width=\linewidth]{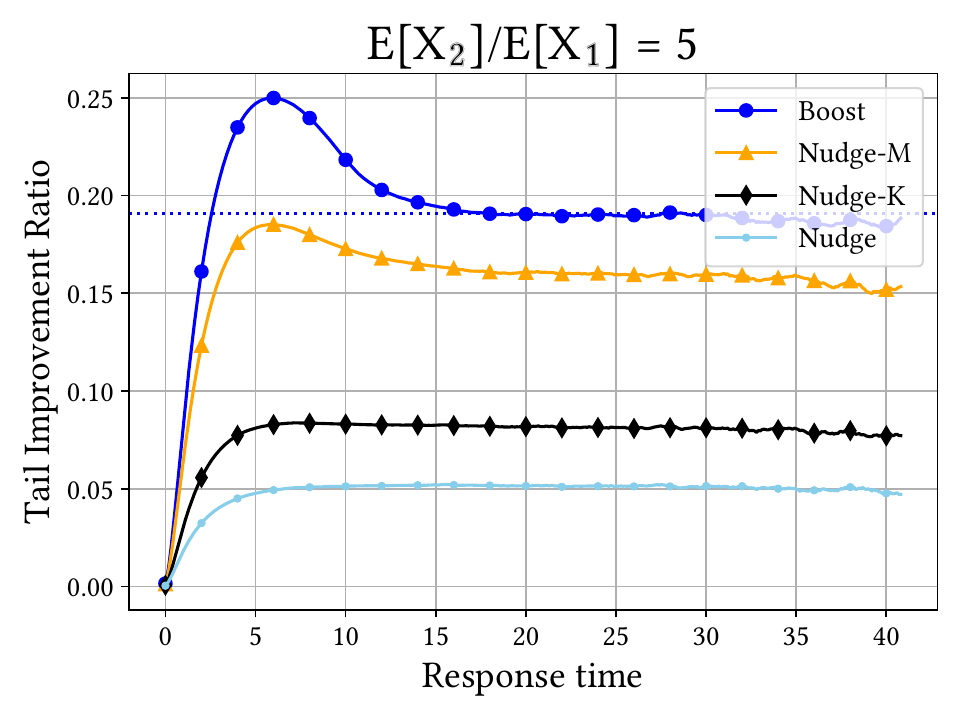}
    \Description{Graph showing tail improvement ratio of $\gamma$-Boost, Nudge-M, Nudge-K, and Nudge for the two-class Hyperexponential distribution with mean ratio 5: $x$ axis is response time (range 0 to 45), $y$ axis is tail improvement ratio (range 0 to 0.3). For all policies, each of their curves starts at y = 0, peaks around x = 5, then reduces as it converges to an asymptote. Boost: peaks at about y = 0.25, asymptotes to about y = 0.19. Nudge-M: peaks at about y = 0.19, asymptotes to about y = 0.15. Nudge-K: peaks at about y = 0.08, asymptotes to about y = 0.08. Nudge: peaks at about y = 0.05, asymptotes to about y = 0.05.}
  \end{subfigure}
  \caption{Comparison of \boost{\gamma}'s performance to Nudge-K's and Nudge-M's performance in the two-class partial-information setting. The mean of the job size distribution is $1$, and, following the settings in \citet[Fig.~1]{vanhoudt_stochastic_2022}, we set the load to be $\rho = 0.75$ and consider the case where both type-$1$ and type-$2$ jobs come from Exponential distributions and the probability of drawing from either branch is $1/2$. On the left, we consider the case where the ratio of the means type-2/type-1 is $2$; on the right it is $5$. The more separable the means are, the better the performance of \boost{\gamma} becomes. In both cases, the asymptotic TIR of the optimal boost, represented by the dotted line, is better than that of both Nudge-K and Nudge-M. In both cases we set the optimal value of $K$ for the Nudge policies ($K = 3$ in the first setting and $K = 5$ in the second setting). Simulations run with 50 million arrivals.}
  \label{fig:label_setting}
\end{figure}

In this section, we consider the case when job sizes are unknown, and one may only have coarse information, such as the class, of incoming jobs or rough thresholds on job sizes. We find that, as with the full-information setting, \boost{\gamma} demonstrates good practical performance, attaining the asymptotic TIR suggested by theory.

In \cref{fig:label_setting} we evaluate the performance of the optimal boost policy in one such limited-information setting. We consider the case where jobs can be \emph{type-1} or \emph{type-2}, where each type has a different distribution, and consider Nudge-K and Nudge-M for $K$ in the setup from \citet[Fig.~1]{vanhoudt_stochastic_2022}. We also extend this to the case where the distributions of the two job types are further separated. We find that performance increases the more distinguishable the distributions of the two job types are. We also find that under these settings, \boost{\gamma} has better performance than both Nudge-K and Nudge-M.

\section{Conclusion}
\label{sec:conclusion}
In this work, we introduce the \emph{\boost{}} family of scheduling policies, which provide a simple new approach to balancing the tradeoff between prioritizing short jobs and prioritizing jobs that have been waiting a long time. We prove that a policy in this new family, called \emph{\boost{\gamma}}, is a \emph{strongly tail-optimal scheduling policy} for the M/G/1 with light-tailed job size distributions, resolving a long-standing open problem in queueing theory. Our simulations show that in addition to achieving a theoretical milestone, \boost{\gamma} has excellent practical performance.

Our results on \boost{} reveal many promising future directions, some of which we outline below.

\subsection{Settings beyond the full-information M/G/1}

The most direct question prompted by our results is: what policy is strongly tail-optimal in the partial-information setting? We believe that \boost{\gamma} is the desired policy if one further restricts attention to nonpreemptive policies. But if one allows preemption, it is less clear what to do. For minimizing mean response time in the M/G/1, it is known that the \emph{Gittins policy} \citep{gittins_multiarmed_2011} is optimal for a wide variety of preemptive partial-information settings \citep{scully_gittins_2020, scully_gittins_2021}. Moreover, versions of the Gittins policy exist for the batch relaxation \citep{pinedo_scheduling_2016} with discounting. Perhaps a version of \boost{} based on the Gittins policy is needed for the preemptive partial-information setting.

\boost{} is also a natural candidate for systems beyond the M/G/1. One important direction would be multiserver systems. Recent results have shown that simple priority policies for optimizing mean response time in the M/G/1 generalize well to a wide variety of multiserver systems \citep{grosof_load_2019, grosof_srpt_2018, scully_new_2022, scully_gittins_2020, grosof_optimal_2022, grosof_optimal_2023, hong_performance_2024}. Can we leverage similar techniques to analyze \boost{} in the M/G/$k$ or dispatching settings? Similarly, one could ask about non-Poisson arrival processes. We suspect that \boost{\gamma} is also strongly-tail optimal in models like the G/G/1 with light-tailed job sizes, as long as we still have $\P{T_\fcfs > t} \sim C_\fcfs \exp(-\gamma t)$.

\subsection{Metrics beyond the tail constant}

One could reexamine the metric or metrics being optimized. We now know how to minimize mean response time, namely using SRPT, and the tail constant, namely using \boost{\gamma}. Can we characterize the Pareto frontier of achievable means and tail constants, and can we design policies to achieve the entire frontier? Because the tail constant is a purely asymptotic notion, we conjecture that, at least theoretically, it is possible to design a policy with mean response time arbitrarily close to SRPT's and tail constant arbitrarily close to \boost{\gamma}'s. But whether such a policy would have good tail performance in practice remains to be seen.

Other important metrics arise in systems with multiple priority classes, where we want to promise better performance to better priority classes. Can we design policies to optimize a \emph{weighted} tail constant, meaning a convex combination of each class's tail constant? It seems likely that reducing to a weighted batch problem would yield a version of \boost{\gamma} that accounts for the weights.

But what if we also want to balance a tradeoff between the \emph{decay rates} of the different classes' response times? This seems like a more challenging problem, but we expect it to be important for settings where the top priority classes must have very low response times. The results of \citet{stolyar_largest_2001} show how to balance decay rates optimally using what is essentially an accumulating priority policy \citep{fajardo_controlling_2015, fajardo_waiting_2017, stanford_waiting_2014}, albeit in a slightly weaker sense than our \cref{def:tail-optimal}. Combining \boost{} with accumulating priority thus seems like a promising direction for further exploration.

\subsection{Boost in practice}

We are excited about the potential of deploying \boost{} in real-world computer systems, due to its promising simulation performance and simplicity of implementation. The simplicity angle is especially important in high-performance environments such as network switches. \boost{} fits into the recently proposed Priority-In, First-Out (PIFO) scheduling abstraction \citep{sivaraman_programmable_2016}. One question is whether \boost{} is amenable to being approximated using methods like SP-PIFO \citep{alcoz_sppifo_2020}, which would allow it to be deployed more easily on common network hardware.

Another question is how to determine $\gamma$ in practice, given that the job size distribution may be unknown, or may change over time. We suspect that the best way to obtain $\gamma$ in practice is to directly measure the empirical decay rate of the work distribution. This seems feasible to do in many computer systems, and it seems simpler than trying to estimate the full job size distribution. Our simulations show that \boost{} is robust to some misspecification of $\gamma$, so such an empirical estimate would likely suffice.

Of course, computer systems are just one of many domains where one might consider deploying \boost{}. Different domains have different prioritization needs, which could be met by different boost functions. It is worth highlighting that \boost{} is weakly tail-optimal for any boost function satisfying the relatively mild condition of \cref{eq:model:boost_okay}. Even jobs that receive zero boost experience weakly optimal response time tail. This means that for systems with light-tailed job sizes, \boost{} offers a flexible framework for priority scheduling without sacrificing tail performance.

\begin{acks}
    We thank Isaac Grosof, Benny Van Houdt, and David Shmoys for helpful discussions. This work was supported by the \grantsponsor{nsf}{National Science Foundation}{https://www.nsf.gov} under grant nos. \grantnum{nsf}{CMMI-2307008}, \grantnum{nsf}{DMS-2023528}, and \grantnum{nsf}{DMS-2022448}.
\end{acks}

\bibliographystyle{ACM-Reference-Format}
\bibliography{refs}

\appendix

\section{Strong tail optimality for heavy-tailed job size distributions}
\label{sec:heavy-tailed_strong_optimality}
In this appendix, we show that many scheduling policies that are known to be weakly tail-optimal in an M/G/1 with heavy-tailed job size distribution are, in fact, strongly tail-optimal. This confirms conjectures by \citet{boxma_tails_2007} and \citet{wierman_tailoptimal_2012}.

The specific class of heavy-tailed distributions we consider are \emph{regularly varying} distributions \citep{bingham_regular_1987}. These are the distributions~$S$ such that there exists a constant $-\alpha$ such that for all $k > 0$,
\[
    \label{eq:regularly_varying}
    \lim_{t \to \infty} \frac{\P{S > kt}}{\P{S > t}} = k^{-\alpha}.
\]
The constant $-\alpha$ is called the \emph{index} of regular variation of~$S$. For the rest of this appendix, we consider an M/G/1 with regularly varying job size distribution~$S$ with index~$-\alpha$.

There are a number of results showing that under various scheduling policies~$\pi$,
\[
    \label{eq:heavy-tailed_strong_optimality}
    \lim_{t \to \infty} \frac{\P{T_\pi > t}}{\P{S > (1 - \rho) t}} = 1,
\]
where $\rho = \lambda \E{S}$ is the system load. Policies~$\pi$ which satisfy \cref{eq:heavy-tailed_strong_optimality} include classic policies like Processor Sharing, Shortest Remaining Processing Time, and Least Attained Service \citep{boxma_tails_2007, nunez-queija_queues_2002, nuyens_preventing_2008, zwart_sojourn_2000}; as well as more complex policies like Shortest Expected Remaining Processing Time, Randomized Multi-Level Feedback, and the Gittins policy \citep{scully_characterizing_2020, scully_when_2024}.

Motivated by \cref{eq:regularly_varying, eq:heavy-tailed_strong_optimality}, let
\[
    C_\pi = \lim_{t \to \infty} \frac{\P{T_\pi > t}}{\P{S > (1 - \rho) t}}.
\]
Because a job's response time is at least its size, by \cref{eq:regularly_varying}, we have $C_\pi \geq (1 - \rho)^\alpha$ for all policies. Any policy~$\pi$ satisfying \cref{eq:heavy-tailed_strong_optimality} has $C_\pi = 1$ and is thus weakly tail-optimal. The question is whether any policy can achieve $C_\pi < 1$. Below, we use a result of \citet{wierman_tailoptimal_2012} to answer negatively.

\begin{theorem}
    Consider an M/G/1 whose job size distribution has regularly varying tail. Any scheduling policy whose response time distribution satisfies \cref{eq:heavy-tailed_strong_optimality} is strongly tail-optimal, and thus $\inf_\pi C_\pi = 1$.
\end{theorem}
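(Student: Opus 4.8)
The plan is to reduce the statement to the combination of a matching lower bound, due to \citet{wierman_tailoptimal_2012}, and the already-cited upper bound \cref{eq:heavy-tailed_strong_optimality}. The only thing that needs to be established beyond what is quoted is that \emph{every} policy $\pi'$ satisfies $\liminf_{t\to\infty}\P{T_{\pi'}>t}/\P{S>(1-\rho)t}\ge 1$, i.e., $C_{\pi'}\ge 1$ in the notation above; this strengthens the trivial bound $C_{\pi'}\ge(1-\rho)^\alpha$ from \cref{eq:regularly_varying}. Granting this, both strong tail optimality and $\inf_\pi C_\pi=1$ follow by elementary manipulation.

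First I would observe that it suffices to treat non-idling (work-conserving) policies: inserting idle time only postpones departures, so an arbitrary $\pi'$ is dominated pointwise in $\P{T>t}$ by the work-conserving policy obtained from deleting its idle periods. Hence a non-work-conserving $\pi'$ is never the maximizer of the supremum in \cref{def:tail-optimal}, and it does not affect $\inf_{\pi'}C_{\pi'}$ either; so we may restrict both to work-conserving $\pi'$. Next I would invoke the lower bound of \citet{wierman_tailoptimal_2012}: for the M/G/1 with regularly varying service distribution of index $-\alpha$, every work-conserving policy satisfies $\P{T_{\pi'}>t}\ge\P{S>(1-\rho)t}\,(1-o(1))$, equivalently $\liminf_{t\to\infty}\P{T_{\pi'}>t}/\P{S>(1-\rho)t}\ge 1$. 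If the statement there is phrased for a narrower class of policies, the same sample-path argument extends; this is the step where one must take care about precisely which hypotheses (work-conserving, non-anticipating, $\alpha>1$) are in force.

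Finally I would assemble the pieces. Let $\pi$ be any policy satisfying \cref{eq:heavy-tailed_strong_optimality}, so that $C_\pi=\lim_{t\to\infty}\P{T_\pi>t}/\P{S>(1-\rho)t}=1$, and let $\pi'$ be arbitrary (work-conserving, by the reduction). Then
\[
  \limsup_{t\to\infty}\frac{\P{T_\pi>t}}{\P{T_{\pi'}>t}}
  = \lim_{t\to\infty}\frac{\P{T_\pi>t}}{\P{S>(1-\rho)t}}\cdot\limsup_{t\to\infty}\frac{\P{S>(1-\rho)t}}{\P{T_{\pi'}>t}}
  \le 1,
\]
where the first factor equals $1$ by \cref{eq:heavy-tailed_strong_optimality} and the second is at most $1$ by the lower bound just cited. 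Taking the supremum over $\pi'$ gives $c\le 1$ in \cref{def:tail-optimal}, while choosing $\pi'=\pi$ shows $c\ge 1$; hence $c=1$ and $\pi$ is strongly tail-optimal. Likewise $\inf_{\pi'}C_{\pi'}\ge 1$ by the lower bound, and $C_\pi=1$, so $\inf_{\pi'}C_{\pi'}=1$.

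The main obstacle is not mathematical depth — as the section preamble notes, the argument is nearly immediate — but bookkeeping: confirming that \cref{eq:heavy-tailed_strong_optimality} is indeed known in the regularly varying case for each advertised policy (PS, SRPT, LAS, SERPT, RMLF, and Gittins), and pinning down the exact form and hypotheses of the matching lower bound $C_{\pi'}\ge 1$ in \citet{wierman_tailoptimal_2012}, together with the routine reduction to work-conserving policies and the $\liminf/\limsup$ algebra above.
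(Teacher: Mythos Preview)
Your high-level plan matches the paper's: show $C_{\pi'}\ge 1$ for every policy~$\pi'$, then combine this with \cref{eq:heavy-tailed_strong_optimality} to conclude. The $\limsup/\liminf$ algebra at the end is fine. The gap is in the middle step, which you treat as a black-box citation.

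The lower bound $\liminf_{t\to\infty}\P{T_{\pi'}>t}/\P{S>(1-\rho)t}\ge 1$ is \emph{not} stated in \citet{wierman_tailoptimal_2012} in the form you invoke. What is stated there (their Proposition~1) is a different result: a necessary condition for $C_{\pi'}<\infty$, phrased in terms of $R_{\pi'}(t)$, the amount of time during $[0,t]$ that $\pi'$ spends serving jobs that arrived \emph{after} the tagged job. Specifically, if $C_{\pi'}<\infty$ then for all $\delta>0$,
\[
    \lim_{t\to\infty}\P[\big]{R_{\pi'}(t)>(\rho-\delta)t \ \big|\  S>(1-\rho+\delta)t}=1.
\]
The paper's actual contribution in this appendix is the observation that this condition also forces $C_{\pi'}\ge 1$: on the event $\{R_{\pi'}(t)>(\rho-\delta)t\}\cap\{S>(1-\rho+\delta)t\}$, the tagged job cannot finish by time~$t$, so $T_{\pi'}>t$. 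Conditioning and then using regular variation to pass from $\P{S>(1-\rho+\delta)t}$ to $\P{S>(1-\rho)t}$ gives $C_{\pi'}\ge\bigl((1-\rho)/(1-\rho+\delta)\bigr)^\alpha$ for every $\delta>0$, hence $C_{\pi'}\ge 1$. This derivation is precisely the substance of the proof; by citing the conclusion directly you have skipped it. Your hedging about ``pinning down the exact form'' of the Wierman et al.\ bound correctly identifies where the difficulty lies, but does not resolve it.

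A secondary point: the paper's argument requires no reduction to work-conserving policies, because Proposition~1 of \citet{wierman_tailoptimal_2012} is stated for arbitrary policies with $C_{\pi'}<\infty$ (and $C_{\pi'}=\infty$ trivially gives $C_{\pi'}\ge 1$). Your reduction is not wrong, but it is unnecessary, and the phrase ``the work-conserving policy obtained from deleting its idle periods'' would need more care to make precise.
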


\begin{proof}
    Fix a scheduling policy~$\pi$. The key is a result of \citet{wierman_tailoptimal_2012} which gives a necessary condition to have $C_\pi < \infty$. We show that the condition also implies $C_\pi \geq 1$, as desired.

    We first state a version of the necessary condition. Consider a tagged job arriving at time~$0$, and let $R_\pi(t)$ be the total time during $[0, t]$ for which $\pi$ serves jobs that arrive after the tagged job. If $C_\pi < \infty$, then for all $\delta > 0$ \citep[Proposition~1]{wierman_tailoptimal_2012},
    \[
        \label{eq:heavy-tailed_necessary_condition}
        \lim_{t \to \infty} \P[\big]{R_\pi(t) > (\rho - \delta) t \given S > (1 - \rho + \delta) t} = 1.
    \]

    Notice that $R_\pi(t) > (\rho - \delta)$ and $S > (1 - \rho + \delta) t$ together imply $R_\pi(t) + S > t$. When this occurs, the tagged job does not receive enough service to complete by time~$t$, so its response time satisfies $T_\pi > t$. Therefore, for any $\delta > 0$,
    \[
        1
        &= \lim_{t \to \infty} \P[\big]{R_\pi(t) > (\rho - \delta) t \given S > (1 - \rho + \delta) t} \\
        &\leq \lim_{t \to \infty} \P[\big]{T_\pi > t \given S > (1 - \rho + \delta) t} \\
        &\leq \lim_{t \to \infty} \frac{\P{T_\pi > t}}{\P{S > (1 - \rho + \delta) t}} \\
        &= \gp*{\frac{1 - \rho + \delta}{1 - \rho}}^\alpha \lim_{t \to \infty} \frac{\P{T_\pi > t}}{\P{S > (1 - \rho) t}},
    \]
    where the last line follows from \cref{eq:regularly_varying}. Taking the $\delta \to 0$ limit then implies all policies have $C_\pi \geq 1$. The fact that $\inf_\pi C_\pi = 1$ follows from the fact that multiple policies $\pi$ achieve $C_\pi = 1$ \citep{zwart_sojourn_2000, nunez-queija_queues_2002, nuyens_preventing_2008, boxma_tails_2007, scully_characterizing_2020, scully_when_2024}
\end{proof}

We further conjecture that the regularly varying requirement on $S$ can be relaxed to requiring $S$ be \emph{intermediate regularly varying}, namely
\[
    \limsup_{\delta \to 0} \limsup_{t \to \infty} \frac{\P{S > t}}{\P{S > (1 - \delta) t}} = 1.
\]
This property suffices for the computation in our proof above, and it suffices for many of the prior works showing $C_\pi = 1$ for various policies~$\pi$ \citep{nunez-queija_queues_2002, scully_characterizing_2020, scully_when_2024}. The only step of the proof that requires (non-intermediate) regular variation is \cref{eq:heavy-tailed_necessary_condition}, the result of \citet[Proposition~1]{wierman_tailoptimal_2012}. If their result could be generalized to give the same necessary condition for $C_\pi < \infty$ even when $S$ is intermediate regularly varying, it would imply $\inf_\pi C_\pi = 1$ in that setting, too.

\section{Reduction to the batch scheduling problem for unknown sizes}
\label{sec:reduction}
In this section, we extend the batch reduction to the case of unknown sizes with only size-label information for jobs. Specifically, we expand the definitions from \cref{sec:batch} and show how the unknown-size case differs from the known size case.

\begin{definition}
    A \emph{batch instance} $\calI = \gp[\big]{(a_1, l_1), \ldots, (a_n, l_n)}$ is a finite batch of arrival times and labels.
\end{definition}

Unlike the batch instance in \cref{sec:batch}, which consists of fully deterministic arrival time, job size pairs, we additionally need to define a \emph{job size model}, which describes the distribution of job sizes given instance information.

\begin{definition}
    A \emph{job size model} $\calS$ is a function that maps batch instances $\calI = \gp[\big]{(a_1, l_1), \ldots, (a_n, l_n)}$ to a joint distribution $\gp{S_1, \dots, S_n}$ on job sizes.
\end{definition}

We consider two job size models. The first model is $\calS_{\mathrm{ind}}$, which is the model where the joint distribution of job sizes for an instance $\calI$ is simply the product of the conditional distributions~$(S \given l_i)$. That is,
\[
    \gp[\big]{\calS_{\mathrm{ind}}(\calI)}_i &\sim (S \given l_i), & \gp[\big]{\calS_{\mathrm{ind}}(\calI)}_i \text{ independent of } \gp[\big]{\calS_{\mathrm{ind}}(\calI)}_j &\text{ if } i \neq j.
\]

The second job size model is the busy period model $\calS_{\mathrm{busy}}$. $\calS_{\mathrm{busy}}$ maps from batches to joint distributions of job sizes in the following way. Let $\mathit{BP}$ denote the busy period distribution, where the triples $(A_i, L_i, S_i) \sim \mathit{BP}$. Then $\calS_{\mathrm{busy}}$ is a mapping such that if $\calB$ is a random variable $(A_i, L_i)$, then $(\calB, \calS_{\mathrm{busy}}(\calB)) \sim \mathit{BP}$. That is, the joint distribution of sizes conditional on the batch instance coming from a busy period is also distributed such that it makes the triple distributed with the busy period distribution.

We can now define the cost function given a batch and job size model.

\begin{definition}
    The $\theta$-cost of a policy $\pi$ for an instance $\calI$ under a given job size model $\calS$ is given by
    \[
        K_\pi(\theta, \calI, \calS) = \sum_{i=1}^n\E{\exp\gp{\theta (D_{\pi, i} - a_i)}},
    \]
    where the expectation is over the joint job size distribution $\calS(\calI)$ and any randomness in the policy.
\end{definition}

Under the job size model $\calS_{\mathrm{ind}}$, we have the following optimality result:
\begin{theorem}\label{thm:nonpreemptive_batch_policy}
    For any batch instance $\calI = \gp[\big]{(a_i, l_i)}_{i=1}^n$, the \cheat{\theta} policy minimizes $K_\pi(\theta, \calI, \calS_{\mathrm{ind}})$ in the nonpreemptive batch scheduling problem. Specifically, for any nonpreemptive policy $\pi$ under this job size model,
    \[
        K_{\cheat{\theta}}(\theta, \calI, \calS_{\mathrm{ind}}) \leq \min_{\pi} K_\pi(\theta, \calI, \calS_{\mathrm{ind}}).
    \]
\end{theorem}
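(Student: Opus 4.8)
The plan is to follow the proof of \cref{thm:optimal_batch_policy}, upgrading it from deterministic job sizes to the independent-size model $\calS_{\mathrm{ind}}$; this is exactly the step from Weighted Discounted SPT to Weighted Discounted Shortest \emph{Expected} Processing Time, analogous to the passage from \citet[Theorem~3.1.6]{pinedo_scheduling_2016} to \citet[Theorem~10.1.3]{pinedo_scheduling_2016}, with the discount rate and the objective's sign flipped as in \cref{thm:optimal_batch_policy}. Write $m_i = \E{\exp(\theta S) \given L = l_i}$ (finite in the range of $\theta$ under consideration, and $> 1$ since $S > 0$ a.s.) and $w_i = \exp(-\theta a_i)$. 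The computation underlying everything is that, under $\calS_{\mathrm{ind}}$, if a non-idling nonpreemptive policy serves the jobs in the fixed order given by a permutation $\sigma$ (the job in position $k$ being $\sigma(k)$), then by independence of the sizes its $\theta$-cost is, up to the common factor $\exp(\theta t_0)$ coming from the batch's start time $t_0$,
\[
    K_\sigma(\theta, \calI, \calS_{\mathrm{ind}}) = \sum_{k=1}^n w_{\sigma(k)} \prod_{j=1}^k m_{\sigma(j)}.
\]

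The first step is to reduce the optimization to these static list policies. Idling only postpones departures and thus weakly increases every term of the $\theta$-cost, so it suffices to consider non-idling policies, and \cheat{\theta} is non-idling. Next, a nonpreemptive policy only acquires information when a job completes, and under $\calS_{\mathrm{ind}}$ the realized size of a completed job is independent of the remaining jobs' sizes; conditioning on the first completion and factoring out $\exp(\theta\cdot(\textnormal{elapsed time}))$, the expected cost-to-go equals $m_{j_1}$ times the cost of some nonpreemptive policy on the remaining $(n-1)$-job subproblem, which does not depend on the realized first size. Hence $\E{K_\pi} \ge w_{j_1} m_{j_1} + m_{j_1} g^*_{n-1}$, where $g^*_{n-1}$ is the optimal subproblem cost, and by induction $g^*_{n-1}$ is attained by a static list policy; so every nonpreemptive policy is dominated by a static one. (Randomized first choices are handled by noting $\E{K_\pi}$ is then a convex combination of such expressions.)

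The second step is the adjacent-interchange argument on permutations. Swapping two jobs $a,b$ adjacent in the processing order changes only the two corresponding terms of the sum above, and a short calculation shows that placing $a$ before $b$ is (weakly) better precisely when $\frac{w_a m_a}{m_a - 1} \ge \frac{w_b m_b}{m_b - 1}$. Applying $-\frac{1}{\theta}\log(\cdot)$ to both sides (which reverses the inequality) turns this into $a_a - b_\theta(l_a) \le a_b - b_\theta(l_b)$, where $b_\theta$ is the $\theta$-optimal boost function of \cref{def:boost_theta}; that is, the optimal order is non-decreasing boosted arrival time. Since \cheat{\theta} serves the batch in exactly this order (the same identity as in the proof of \cref{thm:optimal_batch_policy}, now with $m_i$ in place of $\exp(\theta s_i)$) and is non-idling, combining the two steps gives $K_{\cheat{\theta}}(\theta, \calI, \calS_{\mathrm{ind}}) \le \min_\pi K_\pi(\theta, \calI, \calS_{\mathrm{ind}})$.

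I expect the main obstacle to be the first step: the interchange computation itself is routine once one is working with permutations, but some care is needed to argue cleanly that adaptivity cannot help a nonpreemptive policy under $\calS_{\mathrm{ind}}$. This is precisely where the independence of sizes is used, and it is exactly the property that fails for the busy-period model $\calS_{\mathrm{busy}}$, which is why the analogue of \cref{eq:optimal_batch_policy:mg1} breaks in the partial-information setting (cf.\ \cref{rmk:partial-information_hard}). Secondary technical points are the bookkeeping around the batch's start time in the non-idling reduction and the standing assumption that each $m_i$ is finite.
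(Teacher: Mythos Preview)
Your proposal is correct and follows essentially the same approach as the paper: both identify \cheat{\theta} with the WDSEPT rule (with negative discount rate) via the monotone transformation $a_i - b_\theta(l_i) = -\frac{1}{\theta}\log\bigl(w_i\,m_i/(m_i-1)\bigr)$, and both appeal to the adjacent-interchange argument of \citet[Theorem~10.1.3]{pinedo_scheduling_2016} with the sign of the discount rate flipped. The paper's proof simply cites Pinedo for the interchange step, whereas you spell out the cost formula, the reduction to static list policies, and the swap computation explicitly---but the underlying argument is the same.
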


\begin{proof}
    We observe that \cheat{\theta} serves jobs in~$\calI$ in the same order as the Weighted Discounted Shortest Expected Processing Time (WDSEPT) rule, with the weights equal to $\exp\gp{-\theta a_i}$ and negative discount rate $-\theta$ for each job~$i$. This is because boosted arrival time $a_i - b_\theta(l_i)$ is a monotonic function, namely the negative log, of WDSEPT's priority index:
    \[
        a_i - b_\theta(l_i) = -\frac{1}{\theta} \log\gp*{\exp(-\theta a_i) \frac{\E{\exp(\theta S) \given L = l_i}}{\E{\exp(\theta S) \given L = l_i} - 1}}.
    \]
    The proof is an interchange argument identical to that of \citet[Theorem 10.1.3]{pinedo_scheduling_2016}, with the signs for the discount rate and objective flipped. Specifically, with negative discounting, we define discounted completion time as $\exp(\theta D_i) - 1$ instead of $1 - \exp(\theta D_i)$ to keep the sign positive.
\end{proof}

In \cref{thm:optimal_batch_policy}, we concluded that \cheat{\theta}'s optimality for all batch instances could be translated to a bound in the M/G/1. Why does this not work in \cref{thm:nonpreemptive_batch_policy}? The issue is that our optimality argument in the M/G/1 relies on sampling instances from busy periods. In particular, letting $\calB$ be the random batch instance resulting from a busy period, by reasoning to that in the proof \cref{thm:optimal_batch_policy}, we have
\[
    \E{\exp\gp{\theta T_\pi}} = \frac{\E{K_\pi(\theta, \calB, \calS_{\mathrm{busy}})}}{|\calB|},
\]
where the expectations on the right-hand side are over the distribution of~$\calB$. The key difference is that the job size model is $\calS_{\mathrm{busy}}$, not $\calS_{\mathrm{ind}}$, so \cref{thm:nonpreemptive_batch_policy} does not apply.

Why is there not a similar issue in the full-information case? Because when a job's size is its label, the instance $\calI$ already contains all the job size information. In particular, $\calS_{\mathrm{ind}}(\calI) = \calS_{\mathrm{busy}}(\calI)$ for all instances~$\calI$, with both models simply yielding a deterministic vector of sizes extracted from the labels in the instance~$\calI$.

\section{Optimizing the boost function in the partial-information setting}
\label{sec:partial-information_boost_optimality}
In this appendix, we show that in the partial-information setting, \boost{\gamma} has a lower tail constant than \boost{} with any other boost function:
\[
    C_{\boost{\gamma}} \leq C_{\boost{}}.
\]
To keep the argument simple, we assume a finite set of labels $\bbL = \{1, \dots, n\}$. The result can be generalized to arbitrary sets of labels using a calculus of variations argument.

For brevity, we adopt the notation
\[
    p_i &= \P{L = i}, & b_i &= b(i), & s_i &= \E{\exp({\gamma S)} \given L = i}.
\]
Choosing a boost function amounts to choosing a vector of boosts $(b_1, \dots, b_n)$. To make the dependence of $C_{\boost{}}$ on the $b_i$ explicit, we let
\[
    C(b_1, \dots, b_n) = C_{\boost{}}.
\]
By \cref{def:boost_theta}, we can write $C_{\boost{\gamma}} = C\gp{b^*_1, \dots, b^*_n}$, where
\[
    b^*_i = \frac{1}{\gamma} \log \frac{s_i}{s_i - 1}.
\]
Optimality of \boost{\gamma} among Boost policies amounts to showing that $(b^*_1, \dots, b^*_n)$ is a minimizer of~$C(\cdot)$. To show this, it suffices to show the following two claims:
\* $C(\cdot)$ is convex.
\* $\nabla C(b^*_1, \dots, b^*_n) = 0$.
\*/

We start with convexity of $C(\cdot)$. Writing $p_i = \P{L = i}$, we can rewrite the tail constant from \cref{thm:general_transform} as
\[
    C(b_1, \dots, b_n)
    &= C_W \gp[\bigg]{\sum_{i = 1}^n p_i s_i \exp(-\gamma b_i)} \exp\gp[\bigg]{\sum_{i = 1}^n \lambda p_i b_i (s_i - 1)} \\
    &= C_W \sum_{i = 1}^n p_i s_i \exp\gp[\bigg]{-\gamma b_i + \sum_{j = 1}^n \lambda p_j b_j (s_j - 1)}.
\]
We observe that $C(\cdot)$ a sum of compositions of linear functions with $\exp$, so it is convex. In fact, $C(\cdot)$ is log-convex, because it is a product of a mixture of log-convex functions, which is log-convex \citep[proof of Lemma~A.3]{kontorovich_exact_2019}, and another log-convex function.

To show the gradient at $(b^*_1, \dots, b^*_n)$ is zero, we take the derivative of $C(b_1, \dots, b_n)$ with respect to $b_k$, obtaining
\[
    \frac{\partial}{\partial b_k} C(b_1, \dots, b_n)
    = C_W \sum_{i = 1}^n p_i s_i \gp[\big]{\lambda p_k (s_k - 1) - \gamma \1(i = k)} \exp\gp[\bigg]{-\gamma b_i + \sum_{j = 1}^n \lambda p_j b_j (s_j - 1)}.
\]
This derivative is zero if and only if
\[
     \lambda p_k (s_k - 1) \sum_{i = 1}^n p_i s_i \exp(-\gamma b_i) = \gamma p_k s_k \exp(-\gamma b_k).
\]
Plugging in $b_i = b^*_i$ and noticing $s_i \exp(-\gamma b^*_i) = s_i - 1$, it suffices to show
\[
    \lambda \sum_{i = 1}^n p_i (s_i - 1) = \gamma.
\]
But the right-hand is $\lambda (\E{\exp(\gamma S)} - 1)$, so this is simply the definition of~$\gamma$ from~\cref{eq:model:gamma}.

\section{Comparison of Nudge-M and Boost tail constants}
\label{sec:nudge_vs_boost}
In this appendix, we show that the \boost{\gamma} policy for unknown job sizes with two labels is state-of-the-art among policies for this setting. Specifically, we compare to the known state-of-the-art, Nudge-M \citep{charlet_tail_2024}, and show that for any choice for the parameter $K$, there is a Boost policy which achieves a better tail constant. In particular, this means that the optimal Boost policy in this setting, namely \boost{\gamma} (see \cref{sec:partial-information_boost_optimality}), has better tail constant than the optimal Nudge-M policy.

Recall that Nudge-M partitions the jobs into type-1 and type-2 jobs, allowing any type-1 job to pass any type-2 job that has arrived as one of the last M jobs. This can be thought of as the setting where each job is given label~$1$ or label~$2$, and a scheduling decision is made based on this label. From here onward we will refer to type-1 jobs as jobs with label~$1$ and type-2 jobs as jobs with label~$2$.

For brevity, we adopt the notation
\[
    p_1 &= \P{L = 1}, &
    p_2 &= \P{L = 2}, \\
    s_1 &= \E{e^{\gamma S} \given L = 1}, &
    s_2 &= \E{e^{\gamma S} \given L = 2}, &
    s &= \E{e^{\gamma S}}, \\
    b_1 &= b(1), &
    b_2 &= b(2).
\]
Note that with just two labels, $p_2 = 1-p_1$. We assume both $p_1$ and $p_2$ are nonzero.

The tail constant of Nudge-M for any choice of parameter $K$ is given by rewriting the expression in \citet[Theorem 1]{charlet_tail_2024} as
\[
    \label{eq:nudgem_tail_constant}
    \frac{C_\nudgem}{C_\fcfs} = \frac{p_1 s_1}{s}\frac{\gp{p_1 s_1 + p_2}^K}{s^K} + \frac{p_2 s_2}{s}\gp{p_1 s_1 + p_2}^K.
\]
Recall that the tail constant of \boost{} is given by
\[
    \frac{C_{\boost{}}}{C_\fcfs} = \frac{1}{s}\E{e^{\gamma \gp{S - B}}}\E{e^{\gamma V}}.
\]
Under the two-label case above, the term $\E{e^{\gamma \gp{S - B}}}$ can be written as
\[
    \E{e^{\gamma \gp{S - B}}} &= p_1\E{e^{\gamma S}\given L = 1}e^{-\gamma b_1} + p_2\E{e^{\gamma S} \given L = 2}e^{-\gamma b_2}\\
    &= p_1s_1 e^{-\gamma b_1} + p_2 s_2 e^{-\gamma b_2},
\]
where~$b_1$ and~$b_2$ are, respectively, the boosts given to jobs of label~$1$ and label~$2$. The crossing work can be similarly written, using \cref{lem:work_crossing_expression}, as the sum of an arrival stream of label-$1$ jobs and of label-$2$ jobs, i.e., as
\[
    \E{e^{\gamma V}} &= \exp\gp[\big]{\lambda p_1 b_1 \E{e^{\gamma S} - 1 \given L = 1} + \lambda p_2 b_2 \E{e^{\gamma S} - 1 \given L = 2}}\\
    &= \exp\gp[\big]{\lambda p_1 \gp{b_1 - b_2} \gp{s_1 - 1} + \lambda b_2 \E{e^{\gamma S} - 1}}\\
    &= \exp\gp[\big]{\lambda p_1 \gp{b_1 - b_2}\gp{s_1 - 1} + \gamma b_2}.
\]
The tail constant for \boost{} can thus be written as
\[
    \frac{C_{\boost{}}}{C_\fcfs} &= \frac{1}{s}\gp[\big]{p_1s_1 e^{-\gamma b_1} + p_2 s_2 e^{-\gamma b_2}} \exp\gp[\big]{\lambda p_1 \gp{b_1 - b_2} \gp{s_1 - 1} + \gamma b_2}\\
    &= \gp*{\frac{p_1 s_1}{s}e^{-\gamma \gp{b_1 - b_2}} + \frac{p_2 s_2}{s}} \exp\gp[\big]{\lambda p_1 \gp{b_1 - b_2} \gp{s_1 - 1}}.
\]
Consider an arbitrary~$K$ for Nudge-M. For such an~$K$, consider $b_1 - b_2$ such that
\[
    \label{eq:boost_diff_for_nudgem}
    b_1 - b_2 = \frac{K\log\gp{s}}{\gamma}.
\]
There are many boost functions that satisfy this, e.g. $b_1 = \frac{K\log\gp{s}}{\gamma}$ and $b_2 = 0$.

For a boost function satisfying \cref{eq:boost_diff_for_nudgem}, we observe that $s^K = e^{\gamma (b_1 - b_2)}$. Using this observation, we can rewrite the expression in \cref{eq:nudgem_tail_constant} as
\[
    \frac{C_\nudgem}{C_\fcfs} &= \frac{p_1 s_1}{s}\frac{\gp{p_1 s_1 + p_2}^K}{e^{\gamma (b_1 - b_2)}} + \frac{p_2 s_2}{s}\gp{p_1 s_1 + p_2}^K\\
    &= \gp*{\frac{p_1 s_1}{s}e^{-\gamma (b_1 - b_2)} + \frac{p_2 s_2}{s}}\gp{p_1 s_1 + p_2}^K\\
    &= \gp*{\frac{p_1 s_1}{s}e^{-\gamma (b_1 - b_2)} + \frac{p_2 s_2}{s}}\gp{1 + p_1(s_1 - 1)}^K.
\]
Therefore, to compare the tail constant of \boost{} and that of Nudge-M, it suffices to compare the expressions $\exp\gp[\big]{\lambda p_1 \gp{b_1 - b_2} \gp{s_1 - 1}}$ and $\gp{1 + p_1(s_1 - 1)}^K$. We rewrite the first expression as
\[
    \exp\gp[\big]{\lambda p_1 \gp{b_1 - b_2} \gp{s_1 - 1}} &= \exp\gp*{\frac{\gamma (b_1 - b_2) p_1 \gp{s_1 - 1}}{s - 1}}\\
    &= \exp\gp{\gamma (b_1 - b_2)}^{\frac{p_1\gp{s_1 - 1}}{s - 1}},
\]
and rewrite the second expression as
\[
    \gp{1 + p_1(s_1 - 1)}^K &= \gp{1 + p_1(s_1 - 1)}^{\frac{\log\gp{e^{\gamma (b_1 - b_2)}}}{\log\gp{s}}}\\
    &= \exp\gp*{\frac{\log\gp{1 + p_1\gp{s_1 - 1}}\log\gp{e^{\gamma (b_1 - b_2)}}}{\log{s}}}\\
    &= \exp\gp{\gamma (b_1 - b_2)}^{\frac{\log\gp{1 + p_1\gp{s_1 - 1}}}{\log\gp{1 + \gp{s - 1}}}}.
\]
Finally, we observe that
\[
    \exp\gp{\gamma (b_1 - b_2)}^{\frac{p_1\gp{s_1 - 1}}{s - 1}} < \exp\gp{\gamma (b_1 - b_2)}^{\frac{\log\gp{1 + p_1\gp{s_1 - 1}}}{\log\gp{1 + \gp{s - 1}}}},
\]
since $s - 1 \geq p_1\gp{s - 1}$ and $\frac{\log\gp{1 + x}}{x}$ is monotonically decreasing in~$x$. This shows that for any $K$, \boost{} with $b_1 - b_2 = \frac{K\log\gp{s}}{\gamma}$ performs better than Nudge-M, implying that the optimal Boost policy, namely \boost{\gamma} (\cref{sec:partial-information_boost_optimality}), performs better than the optimal Nudge-M policy.

\begin{figure}
    \centering
    \includegraphics[width=\subfigwidth]{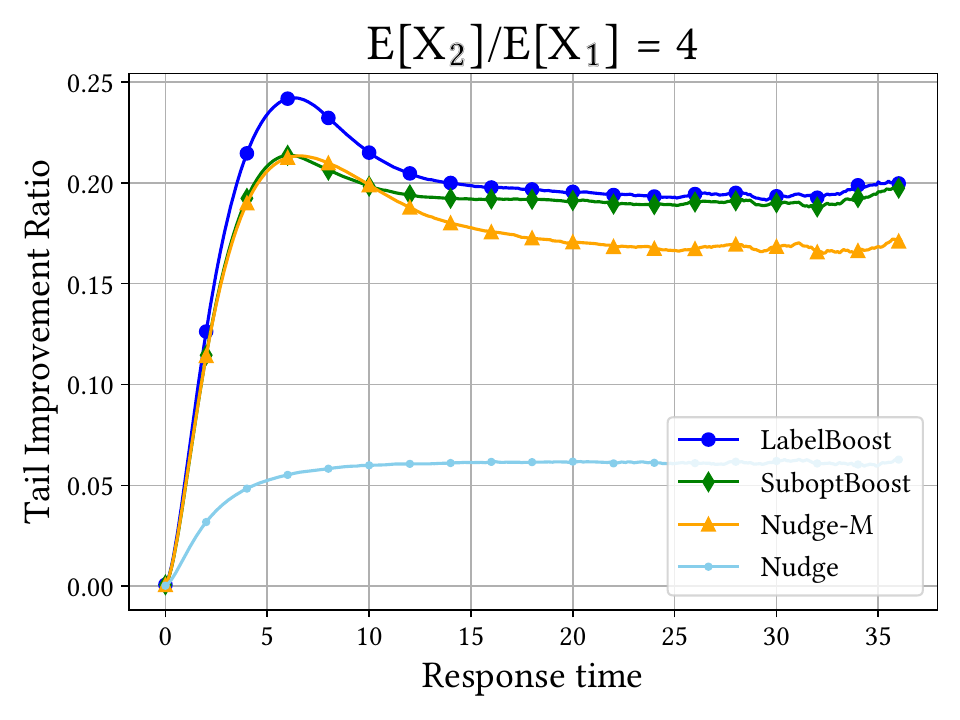}
    \Description{Graph showing tail improvement ratio of $\gamma$-Boost, SuboptBoost, Nudge-M, and Nudge for the two-class Hyperexponential distribution: $x$ axis is response time (range 0 to 40), $y$ axis is tail improvement ratio (range 0 to 0.25). For all policies, each of their curves starts at y = 0, peaks around x = 6, then reduces as it converges to an asymptote. Boost: peaks at about y = 0.24, asymptotes to about y = 0.19. Subopt Boost: peaks at about y = 0.22, asymptotes to about y = 0.19, but under Boost. Nudge-M: peaks at about y = 0.22, asymptotes to about y = 0.17. Nudge: peaks at about y = 0.06, asymptotes to about y = 0.06.}
    \caption{Empirical TIR of \boost{\gamma}'s performance compared to the Nudge-M policy in the two-label setting. Parameters are set to match those of \citet[Fig.~5]{charlet_tail_2024}: label-$1$ and label-$2$ jobs are both exponential, with $\E{X_2}/\E{X_1} = 4$, the probability of a label-$1$ job $p_1 = 2/3$, arrival rate $\lambda = 0.7$. For Nudge-M, we use the optimal parameter $K = 5$. Observe that both \boost{\gamma}, denoted as LabelBoost, and \boost{} with $b_1 - b_2 = \frac{K \log\gp{s}}{\gamma}$, denoted as SuboptBoost, outperform Nudge-M.}
    \label{fig:nudge_m_comp}
\end{figure}

\received{January 2023}
\received[revised]{April 2024}
\received[accepted]{April 2024}

\end{document}